\newcommand{\remove}[1]{}
\newtheorem{theorem}{Theorem}[section]
\newtheorem{lemma}[theorem]{Lemma}
\newtheorem{definition}[theorem]{Definition}
\newtheorem{proposition}[theorem]{Proposition}
\newtheorem{corollary}[theorem]{Corollary}
\newtheorem{conjecture}[theorem]{Conjecture}
\newtheorem{observation}[theorem]{Observation}
\newtheorem{claim}[theorem]{Claim}
\newcommand{\envy}{\textit{envy}}
\newcommand{\wh}{\widehat}
\newcommand{\eps}{\epsilon}
\newcommand{\R}{\mathbb{R}}
\renewcommand{\i}{\mathbf{i}}
\renewcommand{\varepsilon}{\epsilon}
\renewcommand{\hat}{\wh}
\newcommand{\floor}[1]{\left\lfloor #1 \right\rfloor}
\definecolor{mygreen}{RGB}{80,180,0}
\definecolor{b2}{RGB}{51,153,255}
\definecolor{mycy2}{RGB}{255,51,255}
\newcommand{\Envy}{\mathsf{Envy}}
\colorlet{thechosenone}{red}
\colorlet{thechosentwo}{blue}
\newcommand*{\RN}[1]{\expandafter\@slowromancap\romannumeral #1@}
\newcommand{\define}[4][ignore]{%
  \ifstrequal{#1}{ignore}{}{
  \@namedef{thmtitle@#2}{#1}}%
  \@namedef{thm@#2}{#4}%
  \@namedef{thmtypen@#2}{lemma}%
  \newtheorem{thmtype@#2}[theorem]{#3}%
  \newtheorem*{thmtypealt@#2}{#3~\ref{#2}}%
}
\newcommand{\state}[1]{%
  \@namedef{curthm}{#1}
  \@ifundefined{thmtitle@#1}{
  \begin{thmtype@#1}
    }{
  \begin{thmtype@#1}[\@nameuse{thmtitle@#1}]
  }
    \label{#1}
    \@nameuse{thm@#1}
  \end{thmtype@#1}
  \@ifundefined{thmdone@#1}{
  \@namedef{thmdone@#1}{stated}%
  }{}
}
\newcommand{\restate}[1]{%
  \@namedef{curthm}{#1}
  \@ifundefined{thmtitle@#1}{
    \begin{thmtypealt@#1}
    }{
  \begin{thmtypealt@#1}[\@nameuse{thmtitle@#1}]
  }
    \@nameuse{thm@#1}
  \end{thmtypealt@#1}
  \@ifundefined{thmdone@#1}{
  \@namedef{thmdone@#1}{stated}%
  }{}
}
\newcommand{\thmlabel}[1]{
  \@ifundefined{thmdone@\@nameuse{curthm}}{\label{#1}
    }{\tag*{\eqref{#1}}}
}
\newif\ifcomments
\newcommand{\rik}[1]{{\textcolor{red}{Rik: { #1}}}}
\newcommand{\rik}[1]{}
\newif\ifcomments
\newcommand{\justin}[1]{{\textcolor{blue}{Justin: { #1}}}}
\newcommand{\justin}[1]{}
\newif\ifcomments
\newcommand{\vignesh}[1]{{\textcolor{orange}{Vignesh: { #1}}}}
\newcommand{\vignesh}[1]{}
\newif\ifcomments
\newcommand{\rohit}[1]{{\textcolor{magenta}{Rohit: { #1}}}}
\newcommand{\rohit}[1]{}
\newif\ifcomments
\newcommand{\hadi}[1]{{\textcolor{purple}{Hadi: { #1}}}}
\newcommand{\hadi}[1]{}
\Crefname{claim}{Claim}{Claims}
\Crefname{corollary}{Corollary}{Corollaries}
\Crefname{definition}{Definition}{Definitions}
\Crefname{example}{Example}{Examples}
\Crefname{lemma}{Lemma}{Lemmas}
\Crefname{property}{Property}{Properties}
\Crefname{proposition}{Proposition}{Propositions}
\Crefname{remark}{Remark}{Remarks}
\Crefname{theorem}{Theorem}{Theorems}
\title{Graphical House Allocation}
\author{
	\begin{tabular}{m{0.12\textwidth}m{0.12\textwidth}m{0.12\textwidth}m{0.12\textwidth}m{0.12\textwidth}m{0.12\textwidth}}
		\multicolumn{2}{c}{\textbf{Hadi Hosseini}} & \multicolumn{2}{c}{\textbf{Justin Payan}} & \multicolumn{2}{c}{\textbf{Rik Sengupta}}\\
		\multicolumn{2}{c}{\small{Penn State Univ.}} & \multicolumn{2}{c}{\small{Univ. of Massachusetts Amherst}} & \multicolumn{2}{c}{\small{Univ. of Massachusetts Amherst}}\\
		\multicolumn{2}{c}{\href{mailto:hadi@psu.edu}{\small{\texttt{hadi@psu.edu}}}} & \multicolumn{2}{c}{\href{mailto:jpayan@umass.edu}{\small{\texttt{jpayan@umass.edu}}}}
		& \multicolumn{2}{c}{\href{mailto:rsengupta@umass.edu}{\small{\texttt{rsengupta@umass.edu}}}}\\
		&&&&&\\
		\multicolumn{3}{c}{\textbf{Rohit Vaish}} & \multicolumn{3}{c}{\textbf{Vignesh Viswanathan}}\\
		\multicolumn{3}{c}{\small{IIT Delhi}} & \multicolumn{3}{c}{\small{Univ. of Massachusetts Amherst}}\\
		\multicolumn{3}{c}{\href{mailto:rvaish@iitd.ac.in}{\small{\texttt{rvaish@iitd.ac.in}}}} & \multicolumn{3}{c}{\href{mailto:vviswanathan@umass.edu}{\small{\texttt{vviswanathan@umass.edu}}}}\\
	\end{tabular}
}
\date{}
\begin{document}




\maketitle 


\begin{abstract}
The classical house allocation problem involves assigning $n$ houses (or items) to $n$ agents according to their preferences. A key criterion in such problems is satisfying some fairness constraints such as envy-freeness.
We consider a generalization of this problem wherein the agents are placed along the vertices of a graph (corresponding to a social network), and each agent can only experience envy towards its neighbors. Our goal is to minimize the \emph{aggregate} envy among the agents as a natural fairness objective, i.e., the sum of all pairwise envy values over all edges in a social graph.

When agents have identical and \emph{evenly-spaced} valuations, our problem reduces to the well-studied problem of \emph{linear arrangements}. For identical valuations with possibly uneven spacing, we show a number of deep and surprising ways in which our setting is a departure from this classical problem. More broadly, we contribute several structural and computational results for various classes of graphs, including NP-hardness results for disjoint unions of paths, cycles, stars, or cliques, and fixed-parameter tractable (and, in some cases, polynomial-time) algorithms for paths, cycles, stars, cliques, and their disjoint unions.
Additionally, a conceptual contribution of our work is the formulation of a structural property for disconnected graphs that we call \emph{separability} which results in efficient parameterized algorithms for finding optimal allocations.
\end{abstract}

\section{Introduction}\label{sec:intro}
The house allocation problem has attracted interest from the computer science and multiagent systems communities for a long time. The classical problem deals with assigning $n$ houses to a set of $n$ agents with (possibly) different valuations over the houses. It is often desirable to find assignments that satisfy some economic property of interest. In this work, we focus on the well-motivated economic notion of \emph{fairness}, and in particular, study the objective of minimizing the \emph{aggregate envy} among the agents.

Despite the historical interest in this problem, to the best of our knowledge, the house allocation problem has not been studied thoroughly over \emph{graphs}, a setting in which the agents are placed on the vertices of an undirected graph $G$ and each agent's potential envy is only towards its neighbors in $G$ \citep{graphicalonesided, beynier2018localenvy}.



Incorporating the social structure over a graph enables us to capture the underlying restrictions of dealing with partial information, which is representative of constraints in many real-world applications.   
Thus, the classical house allocation problem is the special case of our problem when the underlying graph is complete.


Our work is in line with recent literature on examining various problems in computational social choice on social networks, including voting~\citep{doucette2019inferring,tsang2015voting,grandi2017social}, fair division~\citep{abebe2017fair, bredereck2022envy}, and hedonic games~\citep{peters2016graphical,igarashi2016hedonic}. By focusing on graphs, we aim to gain insights into how the structure of the social network impacts fairness in house allocation. We focus primarily on \emph{identical} valuation functions and show that even under this seemingly strong restriction, the problem is computationally hard yet structurally rich. 
We provide a series of observations and insights about graph structures that help identify, and in some cases overcome, these computational bottlenecks.

\subsection{Overview and Our Contributions}

We assume that the agents are placed at the vertices of a graph representing a social network, and that they have identical valuation functions over the houses. Our objective is to find an allocation of the houses among the agents to minimize the total envy in the graph. We call this the \emph{graphical house allocation problem}.

This is a beautiful combinatorial problem in its own right, as it can be restated as the problem where, given an undirected graph and a multiset of nonnegative numbers, the numbers need to be placed on the vertices in a way that minimizes the sum of the edgewise absolute differences.



In Section \ref{sec:prelims}, we present the formal model and set up some preliminaries, including the connection between the graphical house allocation problem and the \emph{minimum linear arrangement} problem, which has several notable similarities and differences.

In Section \ref{sec:lowerbounds}, we present computational lower bounds and inapproximability results for the problem, even for very simple graphs. In particular, we show NP-hardness even when the graph is a disjoint union of paths, cycles, stars, or cliques, which all have known polynomial-time algorithms for linear arrangements.

In Section \ref{sec:connected}, we focus on connected graphs and completely characterize optimal allocations when the graph is a path, cycle, star, or a complete bipartite graph. We also prove a technically involved structural result for \textit{rooted binary trees}, and discuss general trees.

In Section \ref{sec:disconnected}, we focus on disconnected graphs, starting with a fundamental difference between graphical house allocation and linear arrangements, motivating our definitions of \emph{separable}, \emph{strongly separable}, and \emph{inseparable} disconnected graphs. 
%
%
We employ these characterizations to prove algorithmic results for a variety of graphs. 
In particular, we show that disjoint unions of paths, cycles, stars, and equal-sized cliques are strongly separable and develop natural fixed parameter tractable algorithms for these graphs. 
Moreover, we show that disjoint unions of arbitrary cliques satisfy separability~(but not strong separability) and admit XP algorithms.

Finally, in Section \ref{sec:conclusion}, we wrap up with a concluding discussion.

In the interest of space, we mainly provide proof sketches and defer the details to the appendices.

\subsection{Related Work}
House allocation has been traditionally studied in the economics literature under the \emph{housing market} model, where agents enter the market with a house (or an endowment) each and are allowed to engage in cyclic exchanges~\citep{shapley1974cores}. This model has found important practical applications, most notably in kidney exchange~\citep{AS99house,roth2004kidney}.

While the initial work on house allocation focused on the economic notions of core and strategyproofness~\citep{svensson1999strategy}, subsequent work has explored \emph{fairness} issues. 
\citet{gsvfairhouse} study the house allocation problem under ordinal preferences (specifically, weak rankings) and provide a polynomial-time algorithm for determining the existence of an envy-free allocation. By contrast, the problem becomes NP-hard when agents' preferences are specified as a set of pairwise comparisons~\citep{kamiyama2021envy}. 
%
\citet{kmsfairhouse} study house allocation under cardinal preferences (similar to our work) and examine the complexity of finding a ``fair'' assignment for various notions of fairness such as proportionality, equitability, and minimizing the number of envy-free agents (they do not consider aggregate envy). They show that the latter problem is hard to approximate under general valuations, and remains NP-hard even for the restricted case of binary valuations. 
For binary valuations, the problem of finding the largest envy-free partial matching has also been studied~\citep{aigner2022envy}. 
%
%

Recent studies have considered \emph{graphical} aspects of house allocation (similar to us), though with different objectives.  \citet{graphicalonesided} consider house allocation under externalities and study various kinds of stability-based objectives. \citet{beynier2018localenvy}, whose work is perhaps closest to ours, study \emph{local envy-freeness} in house allocation, which entails checking the existence of an allocation with no envy along any edge of the graph. They delineate the computational complexity of this problem with respect to various graph parameters such as maximum and minimum degree, number of disjoint cliques, and size of minimum vertex cover. Notably, their model involves agents with \emph{possibly distinct} ordinal preferences and only the zero-envy condition, which makes their results---hardness as well as algorithmic---not directly comparable with ours.


There is also a growing literature on fair allocation of indivisible objects among agents who are part of a social network. \citet{bredereck2022envy} present fixed-parameter tractability results, mainly parametrized by the number of agents, though they leave results using graph structure to future work.
\citet{eiben2020parameterized} extend these results, showing a number of parametrized complexity results relating the treewidth, cliquewidth, number of agent types, and number of item types to the complexity of determining if an envy-free allocation exists on a graph.
This line of work again focuses on deciding if envy-free allocations exist, not minimizing envy. 
Other works seek to obtain envy-free allocations, maximum welfare allocations, or other objectives by swapping objects along a graphical structure \citep{beynier2018fairness, lange2019optimizing, gourves2017object, ito2022reforming}. Their objectives differ from ours, though their work is similar in spirit.







\section{Preliminaries}\label{sec:prelims}

We use $[t]$ to denote the set $\{1, 2, \dots, t\}$.
There is a set of $n$ {\em agents} $N = \{1, 2, \ldots, n\}$ and $n$ {\em houses} $H = \{h_1, h_2, \dots, h_n\}$ (often called {\em items}). Each agent $i$ has a {\em valuation function} $v_i: H \rightarrow \R_{\geq 0}$; $v_i(h)$ indicates agent $i$'s value for house $h \in H$.

An {\em allocation} $\pi$ is a bijective mapping from agents to houses. For each $i \in N$, $\pi(i)$ is the house allocated to agent $i$ under the allocation $\pi$, and $v_i(\pi(i))$ is its utility. 

Given a problem instance consisting of agents and houses, our goal is to generate an allocation $\pi$ that is ``fair'' to all the agents, for some reasonable definition of fairness. A natural way to define fairness is using {\em envy}: an agent $i$ is said to envy agent $j$ under allocation $\pi$ if $v_i(\pi(i)) < v_i(\pi(j))$. While we would ideally like to find \emph{envy-free} allocations, this may not always be possible --- consider a simple example with two agents and two houses but one of the houses is valued at $0$ by both agents. Therefore, we instead focus on the magnitude of envy that agent $i$ has towards agent $j$, for a fixed allocation $\pi$. This is defined as $\envy_{i}(\pi, j) := \max\{v_i(\pi(j)) - v_i(\pi(i)), 0\}$. 

We define an undirected graph $G = (N, E)$ over the set of agents, which represents the underlying social network. Our goal is to compute an allocation that minimizes the {\em total envy} along the edges of the graph, defined as $\Envy(\pi, G) := \sum_{(i, j) \in E}(\envy_i(\pi, j) + \envy_j(\pi, i))$; note that edges are unordered. 
An allocation $\pi^{\ast}$ that minimizes the total envy is referred to as a {\em minimum envy allocation}. 
When there are multiple minimum envy allocations, we abuse notation slightly and use $\pi^{\ast}$ to denote any of them.

When the graph $G$ is a complete graph $K_n$, a minimum envy allocation can be computed in polynomial time by means of a reduction to a bipartite minimum-weight matching problem (for a proof of this, see Proposition \ref{prop:knarbitrary} in Appendix \ref{apdx:prelims}). 
However, it is known that for several other simple graphs like paths and matchings, computing a minimum envy allocation is NP-complete \citep{beynier2018localenvy}. Given this computational intractability, we therefore explore a natural restriction of the problem, when all agents have identical valuations, to gain insights into the computational and structural aspects of fairness in social networks. More formally, we assume that there is a fixed valuation function $v$ such that $v_i = v$ for all $i \in N$. Identical valuations capture a natural aspect of real-world housing markets wherein the actual values or prices of the houses are fixed, independent of whom they are being assigned to.

When all agents have the same valuation function $v$, the total envy of an allocation $\pi$ along the edges of a graph $G = (N, E)$ can be written as $\Envy(\pi, G) = \sum_{(i, j) \in E} |v(\pi(i)) - v(\pi(j))|$. This formulation also gives a new definition for envy along an edge $e = (i, j) \in E$ as $\envy_e(\pi) = |v(\pi(i)) - v(\pi(j))|$. Note that this value is still equal to $\envy_i(\pi, j) + \envy_j(\pi, i)$, as one of those terms is zero under the assumption of identical valuations.

We note that when $G$ is $K_n$, under identical valuations, an optimal allocation is trivially computable, as all allocations are equivalent.




For the rest of this paper, we will assume without loss of generality that the house values are all distinct (see Lemma \ref{lem:valspositive} in Appendix \ref{apdx:prelims}). In particular, every agent's valuation function (denoted by $v$) gives each house a unique nonnegative value, with $v(h_1) < v(h_2) < \dots < v(h_n)$. We will say $h_1 \prec h_2$ to mean $v(h_1) < v(h_2)$.

For an allocation $\pi$ and a subset $N' \subseteq N$, we will refer to the set of houses received by $N'$ as $\pi(N')$. If $G'$ is a subgraph of $G$, we will use $\pi(G')$ in the same way.

We will use $G_1 + G_2$ to mean the disjoint union of $G_1$ and $G_2$.

\begin{definition}\label{valint}
For an instance of the graphical house allocation problem, the \emph{valuation interval} is defined as the closed interval $[v(h_1), v(h_n)] \subset \mathbb{R}^+$ with each $v(h_k)$ marked.
\end{definition}
The motivation for Definition \ref{valint} is as follows. For an arbitrary allocation $\pi$, for each edge $e = (i, j) \in E$, we can draw a line segment from $v(\pi(i))$ to $v(\pi(j))$. This line segment has length $|v(\pi(i)) - v(\pi(j))| = \envy_e(\pi)$. Therefore, $\Envy(\pi, G)$ is simply the sum of the lengths of all the line segments we draw in this way. An optimal allocation $\pi^\ast$ is any allocation that attains this minimum sum. See Figure \ref{fig:valuationlineexample} for an example of a valuation interval, together with a graph $G$, and a particular allocation on $G$ depicted under the valuation interval.

\begin{figure}[ht]
    \centering
    \small
    \begin{tikzpicture}
                \tikzset{mynode/.style = {shape=circle,draw,inner sep=1.5pt}}
                \tikzset{edge/.style = {solid}}
                \node[mynode] (1) at (0,-1.1) {};
                \node (1a) at (0,-1.4) {\color{black}{$h_1$}};
                \node[mynode] (2) at (1,-1.1) {};
                \node (2a) at (1,-1.4) {\color{black}{$h_4$}};
                \node[mynode] (3) at (2,-1.1) {};
                \node (3a) at (2,-1.4) {\color{black}{$h_2$}};
                \node[mynode] (4) at (2,-0.3) {};
                \node (4a) at (2,0) {\color{black}{$h_5$}};
                \node[mynode] (5) at (0.5,-0.3) {};
                \node (5a) at (0.5,0) {\color{black}{$h_3$}};
                \draw (1) -> (2);
                \draw (2) -> (3);
                \draw (3) -> (4);
                \draw (1) -> (5);
                \draw (2) -> (5);
                \edef\myvar{1}
                \def\valuations{{1,2,4,5,6}}
                \def\locations{3.5,4.5,6,7,8}
                \foreach \x in {7,9,12,14,16}
                    {
                        \coordinate (A\x) at ($(\x/2,0)$) {};
                        \draw ($(A\x)+(0,3pt)$) -- ($(A\x)-(0,3pt)$);
                        \node (\myvar) at ($(A\x)+(0,3ex)$) {{$h_{\pgfmathprintnumber{\myvar}}$}};
                        \pgfmathparse{\myvar+1}
                        \xdef\myvar{\pgfmathresult}
                    }
                \draw[line width=0.5 mm] (3.5,0) -- (8,0);
                \node at (3.5,-2ex) {\color{red}{1}};
                \node at (4.5,-2ex) {\color{red}{2}};
                \node at (6,-2ex) {\color{red}{4}};
                \node at (7,-2ex) {\color{red}{5}};
                \node at (8,-2ex) {\color{red}{6}};
                \node at (5.7,-1.4) {\small{\color{thechosenone}{Envy=15}}};
                \draw[color=thechosenone,line width=0.5 mm] (3.5,-0.5) -- (6,-0.5);
                \draw[color=thechosenone,line width=0.5 mm] (6,-0.65) -- (7,-0.65);
                \draw[color=thechosenone,line width=0.5 mm] (3.5,-0.8) -- (7,-0.8);
                \draw[color=thechosenone,line width=0.5 mm] (4.5,-0.95) -- (7,-0.95);
                \draw[color=thechosenone,line width=0.5 mm] (4.5,-1.1) -- (8,-1.1);
    \end{tikzpicture}
    \caption{(Left) A graph $G$ on five agents along with a particular allocation $\pi$. The valuations are identical and are given by $\vec{v} = (1, 2, 4, 5, 6)$. (Right) The valuation interval is shown via the thick horizontal line in black. The five line segments in {\color{thechosenone}red}  denote the envy along the five edges of the graph $G$. The total length of these line segments is $\Envy(\pi, G) = 15$.}
    \label{fig:valuationlineexample}
\end{figure}
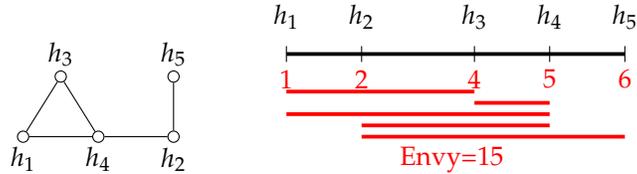

A subset of houses $H' = \{h_{i_1}, \ldots, h_{i_k}\} \subseteq H$ with $h_{i_1} \prec \ldots \prec h_{i_k}$ is called \emph{contiguous} if there is no house $h' \in H\setminus H'$ with $h_{i_1} \prec h' \prec h_{i_k}$. Geometrically, the values in $H'$ form an uninterrupted sub-interval of the valuation interval, with no value outside of $H'$ appearing inside that sub-interval. In Figure \ref{fig:valuationlineexample} above, the subsets $\{h_1, h_2, h_3\}$ and $\{h_5\}$ are contiguous, whereas the subsets $\{h_1, h_2, h_5\}$ and $\{h_3, h_5\}$ are not.

We will sometimes interchangeably talk about allocating $h_i$ and allocating $v(h_i)$ to an agent. For readability, we will also sometimes refer to houses as being on the valuation line, rather than their values. The meaning should be unambiguous from the context.

\subsection{Connection to the Linear Arrangement Problem}\label{sec:mla}


The \emph{minimum linear arrangement problem} is the problem where, given an undirected $n$-vertex graph $G = (V, E)$, we want to find a bijective function $\pi : V \to [n]$ that minimizes $\sum_{(i, j) \in E}|\pi(i) - \pi(j)|$. Note that this is a special case of our problem, when $H = [n]$; in other words, the minimum linear arrangement problem is a special case of the graphical house allocation problem where the valuation interval has evenly spaced values.


Note that we may assume without loss of generality that the underlying graph $G$ in any instance of the minimum linear arrangement problem is connected. This is a consequence of the following folklore observation, whose proof we omit (see, for instance,~\citep{seidvasser}).

\begin{restatable}[\citet{seidvasser}]{proposition}{propmlawlog}\label{prop:mlawlog}
If $G$ is any instance of the linear arrangement problem, then at least one optimal solution assigns contiguous subsets of $[n]$ to the connected components of $G$.
\end{restatable}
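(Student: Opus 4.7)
The plan is to prove this by a compression/exchange argument combined with induction on the number of connected components of $G$. Let $\pi$ be any optimal arrangement of $G$, and pick an arbitrary connected component $C$ of $G$. I will construct another optimal arrangement $\pi'$ in which $C$ receives the contiguous block $\{1, 2, \ldots, |C|\}$ of labels; once this is established, applying the same argument inductively to $G[V \setminus C]$ on the labels $\{|C|+1, \ldots, n\}$ gives the result.

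To construct $\pi'$, list the vertices of $C$ in increasing order of their $\pi$-value as $v_1, v_2, \ldots, v_{|C|}$, and list the vertices of $V \setminus C$ in increasing order of their $\pi$-value as $u_1, u_2, \ldots, u_{n-|C|}$. Define $\pi'(v_i) := i$ for $1 \le i \le |C|$ and $\pi'(u_j) := |C| + j$ for $1 \le j \le n - |C|$. By construction $\pi'$ is a bijection, it assigns $\{1,\ldots,|C|\}$ to $C$ and $\{|C|+1,\ldots,n\}$ to $V \setminus C$, and it preserves the relative $\pi$-order on each side.

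The key claim is that no edge length increases. Because $C$ is a connected component, every edge of $G$ lies either entirely in $C$ or entirely in $V \setminus C$. For an edge $(v_a, v_b)$ with $a < b$, its $\pi$-length is $\pi(v_b) - \pi(v_a) \geq b - a$, since at least the $b - a - 1$ intermediate vertices $v_{a+1}, \ldots, v_{b-1}$ (and possibly additional vertices of $V \setminus C$) occupy labels strictly between $\pi(v_a)$ and $\pi(v_b)$. Its $\pi'$-length is exactly $b - a$. The identical argument applies to edges within $V \setminus C$, where the $\pi'$-length $(|C|+b) - (|C|+a) = b - a$ is likewise a lower bound for the $\pi$-length. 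Summing over $E$ yields $\sum_{(i,j) \in E}|\pi'(i) - \pi'(j)| \leq \sum_{(i,j) \in E}|\pi(i) - \pi(j)|$, so $\pi'$ is optimal too.

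There is no genuine obstacle in this proof; the one thing to be careful about is ensuring that the compression preserves relative order within each component so that the edge-length bound $\pi(v_b) - \pi(v_a) \geq b - a$ applies, which is automatic by our choice of listing. The restriction of $\pi'$ to $V \setminus C$ must itself be optimal for $G[V \setminus C]$ on $\{|C|+1, \ldots, n\}$ (otherwise we could strictly improve $\pi'$), so the inductive hypothesis produces an optimal arrangement of $G$ in which every connected component receives a contiguous block of labels.
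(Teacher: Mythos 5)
Your proof is correct. Note that the paper deliberately omits a proof of this proposition, labelling it a folklore observation and deferring to \citet{seidvasser}; your compression argument---reordering each side of the component/complement split to preserve relative order, observing that every edge stays within one side and that each edge's length can only shrink to the count of intermediate vertices on its own side, and then recursing on the remaining components---is exactly the standard argument behind that folklore fact, so there is nothing to reconcile with the paper's (absent) proof.
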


We will see in Section \ref{sec:disconnected} that in the graphical house allocation problem, it no longer suffices to only consider connected graphs.

It is known that finding a minimum linear arrangement is NP-hard for general graphs~\citep{mlahard}, with a best known running time of $O(2^nm)$, where $|V| = n, |E| = m$. In fact, the problem remains NP-hard even for bipartite graphs~\citep{mlabinaryhard}. However, the minimum linear arrangement problem can be solved in polynomial time on paths, cycles, stars, wheels, and trees. In fact, an optimal solution for any tree can be computed in time $O(n^{\log_23})$~\citep{mlatrees}.


One of the fundamental difficulties with the minimum linear arrangement problem is that there are no natural approaches towards proving lower bounds on the objective: the best known general lower bounds are trivial ones based on degrees or edge counts. All other known lower bounds are graph-specific and not easy to generalize.

\section{Hardness and Lower Bounds}\label{sec:lowerbounds}

In this section, we prove lower bounds on the graphical house allocation problem, showing that the problem is hard to solve in general.

First of all, note that our problem is NP-complete~(for arbitrary graphs and valuations), because even the linear arrangement problem is NP-complete, as stated in Section \ref{sec:prelims}, and the graphical house allocation problem is at least as hard as that problem.

We now show a different proof that the problem is NP-complete, which will result in inapproximability results as well. First we need a definition.

\begin{definition}
The \emph{\textsc{Minimum Bisection Problem}} asks, for an $n$-vertex graph $G$ and a natural number $k$, if there is a partition of $V(G)$ into two parts 
of size $n/2$, with at most $k$ edges crossing the cut.
\end{definition}

The \textsc{Minimum Bisection Problem} is a known NP-complete problem~
\citep{mlahard}. Furthermore, it is also known to be hard to approximate efficiently, a fact that is useful in light of the following observation.

\begin{restatable}{theorem}{npcompletebipartition}\label{thm:npcompletebipartition}
There is a linear time reduction from the \textsc{Minimum Bisection Problem} to the graphical house allocation problem with identical valuations.
\end{restatable}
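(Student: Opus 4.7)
The plan is to design a graphical house allocation instance whose optimal envy value encodes the size of a minimum bisection. Given an instance $(G, k)$ of Minimum Bisection with $|V(G)| = n$ (even) and $|E(G)| = m$, I keep the same graph $G$ and construct an identical valuation function with two well-separated clusters: set $v(h_i) = i$ for $i \in \{1, \ldots, n/2\}$ and $v(h_i) = M + (i - n/2)$ for $i \in \{n/2+1, \ldots, n\}$, where $M$ is a large parameter (e.g., $M = n^3$). This construction is clearly linear time in the input size.

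Any allocation $\pi$ on $G$ induces a bisection of $V(G)$ into $S_\pi$ (the $n/2$ agents receiving the low-cluster houses) and $\bar{S}_\pi$ (the $n/2$ agents receiving the high-cluster houses), since both clusters have size exactly $n/2$. Let $c(\pi)$ be the number of edges of $G$ crossing this bisection. Decomposing $\Envy(\pi, G)$ over within-cluster edges and cross-cluster edges, and observing that within-cluster envy is at most $n/2 - 1$ while each cross-cluster envy lies in the interval $[M - n/2 + 1,\, M + n/2 - 1]$, I obtain the two-sided estimate
\[
c(\pi)\cdot(M - n/2 + 1) \;\leq\; \Envy(\pi, G) \;\leq\; c(\pi)\cdot(M + n/2 - 1) \;+\; m\cdot(n/2 - 1).
\]
Setting the envy threshold $T := k\cdot(M + n/2 - 1) + m\cdot(n/2 - 1)$, the upper bound shows that a bisection with at most $k$ cut edges yields an allocation with envy at most $T$, while the lower bound shows that any bisection with at least $k+1$ cut edges yields envy at least $(k+1)(M - n/2 + 1)$. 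It suffices to verify $(k+1)(M - n/2 + 1) > T$, which rearranges to $M > (2k + 1 + m)(n/2 - 1)$; the choice $M = n^3$ comfortably satisfies this since $k \leq m \leq \binom{n}{2}$. This yields the many-one reduction: $(G, k)$ is a YES-instance of Minimum Bisection iff the graphical house allocation instance $(G, v, T)$ admits an allocation of total envy at most $T$.

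The main obstacle is the quantitative calibration: I must ensure that the contribution of a single additional cut edge ($\approx M$) strictly dominates the cumulative slack from within-cluster envy variation ($\approx m \cdot n/2$), so that the envy-minimizing allocation is forced to realize a minimum bisection rather than some partition that is suboptimal as a bisection but happens to be efficient within clusters. Once this separation of scales is established, correctness is immediate, and since the construction outputs only $G$ and $n$ polynomially-bounded integer valuations, the reduction runs in linear time. As a consequence we recover NP-hardness for graphical house allocation with identical valuations and inherit any inapproximability results known for Minimum Bisection.
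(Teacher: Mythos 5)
Your proposal is correct and follows essentially the same approach as the paper: two well-separated clusters of $n/2$ values each, so that total envy essentially counts cut edges of the induced bisection. The only cosmetic difference is the scaling (you widen the gap to $M = n^3$ with integer clusters, whereas the paper shrinks the clusters to width $\varepsilon \approx n^{-3}$ with gap $1$), and your quantitative calibration of the threshold is in fact more explicit than the paper's sketch.
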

\begin{proof}[Proof Sketch]
Given an instance $\langle G, k\rangle$ of the \textsc{Minimum Bisection Problem}, we construct an instance of the house allocation problem by creating a cluster of $n/2$ values concentrated around $0$, and a cluster of $n/2$ values concentrated around $1$ on the valuation interval. Then, $G$ has an allocation with total envy at most $k + \varepsilon$ (for sufficiently small $\varepsilon$) if and only if $\langle G, k\rangle \in$ \textsc{Minimum Bisection Problem}. This reduction is clearly linear time.
\end{proof}

It follows immediately that the inapproximability results for minimum bisection carry over to the graphical house allocation problem. In particular, for any fixed constant $\varepsilon > 0$, unless P = NP there is no polynomial-time algorithm that can approximate the optimal total envy under the house allocation problem within an additive term of $n^{2 - \varepsilon}(v(h_n) - v(h_1))$~\citep{bj92}. Additionally, the house allocation problem has no PTAS unless NP has randomized algorithms in subexponential time~\citep{khot04}. These hardness results suggest that our problem is hard to approximate \emph{even} with identical valuations.

Finally, we show that graphical house allocation is NP-complete even on simple instances of graphs which are solvable in linear time in the case of linear arrangements, such as disjoint unions of paths, cycles, cliques, or stars (and any combinations of them).

\begin{restatable}[Hardness of Disjoint Unions]{theorem}{thmdisjointnpcomplete}\label{thm:disjointunions}
Let $\mathcal{A}$ be any collection of connected graphs, such that there is a polynomial time one-to-one mapping from each nonnegative integer $t$ (given in unary) to a graph in $\mathcal{A}$ of size $t$. Let $\mathcal{G}$ be the class of graphs whose members are the finite sub-multisets of $\mathcal{A}$ (as connected components). Then, finding a minimum envy house allocation is NP-hard on the class $\mathcal{G}$.
\end{restatable}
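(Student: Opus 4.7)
The plan is to reduce from the strongly NP-hard \textsc{3-Partition} problem: given $3m$ positive integers $a_1, \dots, a_{3m}$ (encoded in unary) with $\sum_i a_i = mB$ and $B/4 < a_i < B/2$, decide whether they admit a partition into $m$ triples each summing to $B$. Given such an instance, I would build a graphical house allocation instance as follows. Write $A_t \in \mathcal{A}$ for the unique graph of size $t$ produced by the given polynomial-time mapping on input $1^t$. The graph is $G = A_{a_1} + A_{a_2} + \dots + A_{a_{3m}} \in \mathcal{G}$. For the valuations, I place $m$ clusters of exactly $B$ distinct values each, where cluster $C_j$ (for $j \in [m]$) is packed into an interval of width $\varepsilon$ around a center $jM$. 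I take $\varepsilon = 1$ and $M = 10(mB^2 + 1)$, so that $M \gg \varepsilon$ and $M$ is polynomial in the instance size.

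For the forward direction, suppose \textsc{3-Partition} is a yes-instance with triples $T_1, \dots, T_m$ satisfying $\sum_{i \in T_j} a_i = B$. Then I send the $B$ values of $C_j$ to the components $\{A_{a_i} : i \in T_j\}$, distributed arbitrarily within each component. Since every component receives values from a single cluster of width $\varepsilon$, every edge of $G$ has envy at most $\varepsilon$, and the total envy is at most $|E(G)| \cdot \varepsilon \leq mB^2 \varepsilon$.

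The reverse direction rests on a key structural lemma: if any connected component $A_{a_i}$ of $G$ receives values from $k \ge 2$ distinct clusters, then that component alone contributes envy at least $(k-1)(M - 2\varepsilon)$. Indeed, because $A_{a_i}$ is connected, the partition of its vertices by cluster assignment must leave at least $k-1$ cut edges, and any such edge spans two clusters whose centers are at least $M$ apart, so has envy at least $M - 2\varepsilon$. Our choice of $M$ makes $M - 2\varepsilon$ strictly exceed the upper bound $mB^2\varepsilon$ from the forward direction, so any allocation with envy at most $mB^2\varepsilon$ must place every component entirely within a single cluster. This forces the multiset $\{a_i\}_{i=1}^{3m}$ of component sizes to partition into $m$ groups with per-group sum exactly $|C_j| = B$, which is precisely a \textsc{3-Partition} solution.

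Finally, the reduction runs in polynomial time: because \textsc{3-Partition} is strongly NP-hard, $B$ and each $a_i$ are polynomially bounded in the unary input size, so $G$ is built by $3m$ polynomial-time calls to the given mapping, and the $mB$ values (with polynomially bounded $M$) can be listed explicitly. The main technical hurdle is the structural lemma: it crucially leverages connectedness of each $A_t \in \mathcal{A}$, since for disconnected components the cut-edges counting argument would fail---paralleling the contrast between connected and disconnected graphs that the paper highlights in its subsequent sections.
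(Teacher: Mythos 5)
Your reduction is correct and follows essentially the same strategy as the paper's: pack the values into widely separated clusters of size $B$ and use connectedness of each component to force every component into a single cluster, so that the component sizes must partition the clusters exactly. The only difference is the source problem (the paper reduces from \textsc{Unary Bin Packing}, which requires padding with isolated vertices since bins need not be filled exactly, whereas your \textsc{3-Partition} instance fills every cluster exactly); your explicit $(k-1)$-cut-edge counting is a cleaner quantitative version of the paper's ``no edge crosses a large gap'' argument.
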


\begin{proof}[Proof Sketch]
We reduce from \textsc{Unary Bin Packing}. Recall that this problem asks if, given a set $I$ of items with sizes $s(i)$ for $i \in I$, a bin capacity $B$, and an integer $k$, all given \emph{in unary}, whether there exists a \emph{packing} of all the items into at most $k$ bins. This problem is known to be NP-complete (see, for instance,~\cite{binpackinghardness}).

Given an arbitrary instance of \textsc{Unary Bin Packing}, we create an instance of graphical house allocation by having $k$ equispaced clusters of width $\varepsilon$ separated by intervals of size $C$ (for sufficiently large $C$), with each cluster containing $B$ values. Our graph $G$ is a disjoint union of the graphs in $\mathcal{A}$ that form the image of the sizes $s(i)$ over all $i \in I$. Note that $G \in \mathcal{G}$. Then, the given instance is in \textsc{Unary Bin Packing} if and only if the graphical house allocation instance has an allocation with envy less than $C$, provided $\varepsilon$ is small enough.
\end{proof}

\begin{corollary}\label{cor:disjointnp}
The house allocation problem under identical valuations is NP-complete on:
\begin{inparaenum}[(a)]
    \item disjoint unions of arbitrary paths,
    \item disjoint unions of arbitrary cycles,
    \item disjoint unions of arbitrary stars, and
    \item disjoint unions of arbitrary cliques.
\end{inparaenum}
\end{corollary}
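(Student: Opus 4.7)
The plan is to derive Corollary \ref{cor:disjointnp} as a direct instantiation of Theorem \ref{thm:disjointunions}: for each of the four graph types (paths, cycles, stars, cliques) I need to exhibit a collection $\mathcal{A}$ of connected graphs together with a polynomial-time one-to-one mapping from unary integers $t$ to a member of $\mathcal{A}$ of size $t$, and then invoke the theorem. NP-membership in each case is immediate, since given an allocation its total envy can be evaluated edge by edge and compared to a threshold in polynomial time.

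For paths, stars, and cliques, the instantiations are immediate. I would set $\mathcal{A}_{\text{path}} = \{P_t : t \geq 1\}$ with $t \mapsto P_t$; $\mathcal{A}_{\text{star}} = \{K_{1,t-1} : t \geq 1\}$ with $t \mapsto K_{1,t-1}$ (taking $K_{1,0}$ to be a single vertex and $K_{1,1}$ to be a single edge); and $\mathcal{A}_{\text{clique}} = \{K_t : t \geq 1\}$ with $t \mapsto K_t$. In each case the image is a connected graph of exactly $t$ vertices, the mapping is injective, and the unary-size encoding of $t$ makes the mapping polynomial-time. Theorem \ref{thm:disjointunions} then yields parts (a), (c), and (d) of the corollary.

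The only mild wrinkle is part (b), since $C_t$ is conventionally defined only for $t \geq 3$. I would handle this in either of two equivalent ways. The first is to simply extend the notion of "cycle" to the two degenerate cases $t = 1$ (a single vertex) and $t = 2$ (a single edge), both of which are connected, so that $\mathcal{A}_{\text{cycle}}$ contains one graph of each positive size and the theorem applies verbatim; this extension does not change which graphs can appear as connected components in the reduction. The second, slightly cleaner, is to revisit the reduction in Theorem \ref{thm:disjointunions} and observe that the argument applies unchanged when the starting \textsc{Unary Bin Packing} instance is restricted to item sizes at least $3$, since that restricted variant remains NP-hard via a trivial padding reduction (add $3$ to every item size, and increase both the bin capacity and the count of items per bin to compensate). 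Either route lets us take $\mathcal{A}_{\text{cycle}} = \{C_t : t \geq 3\}$ and conclude.

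I do not anticipate any real obstacle beyond this small bookkeeping for small cycle sizes; the bulk of the work has already been done in Theorem \ref{thm:disjointunions}, and the corollary reduces to choosing the four explicit families above.
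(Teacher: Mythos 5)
Your proposal is correct and matches the paper's intent exactly: the corollary is stated as a direct instantiation of Theorem~\ref{thm:disjointunions} with $\mathcal{A}$ taken to be the family of paths, cycles, stars, or cliques respectively, and the paper gives no further argument. Your handling of the degenerate cycle sizes $t \le 2$ (either admitting degenerate cycles or padding the \textsc{Unary Bin Packing} item sizes to at least $3$) is a small but legitimate piece of bookkeeping that the paper glosses over, and either of your two fixes works.
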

In Section \ref{sec:disconnected} we show that despite the hardness suggested by Corollary \ref{cor:disjointnp}, it is possible to exploit a  structural property to develop FPT algorithms for the first three problems.

\section{Connected Graphs}\label{sec:connected}

In this section, we characterize optimal house allocations when the underlying graph $G$ is a star, path, cycle, complete bipartite graph, or rooted binary tree. We also provide some observations when $G$ is any (arbitrary) tree.

\begin{figure*}
    \begin{subfigure}[b]{0.23\textwidth}
    \centering
    \small
    \begin{tikzpicture}[,mycirc/.style={circle,fill=white, draw = black,minimum size=0.25cm,inner sep = 3pt}]
        \tikzset{mynode/.style = {shape=circle,draw,inner sep=1.5pt}}
        \tikzset{edge/.style = {solid}}
        
        \node[draw, mynode] at (360:0mm) (center) {};
        \foreach \i [count=\ni from 0] in {1, 2, 3, 4, 5}{
          \node[draw, mynode] at (\ni*72 + 90:1cm) (\ni) {};
          \draw (center)--(\ni);
        }
    \end{tikzpicture}
    \caption{The star $K_{1, 5}$}
    \label{subfig:star}
    \end{subfigure}
    \begin{subfigure}[b]{0.23\textwidth}
    \centering
    \small
    \begin{tikzpicture}[,mycirc/.style={circle,fill=white, draw = black,minimum size=0.75cm,inner sep = 3pt}]
        \tikzset{mynode/.style = {shape=circle,draw,inner sep=1.5pt}}
        \tikzset{edge/.style = {solid}}
        \node[mynode] (1) at (0,0) {};
        \node[mynode] (2) at (0.5,1.5) {};
        \node[mynode] (3) at (1,0) {};
        \node[mynode] (4) at (1.5,1.5) {};
        \node[mynode] (5) at (2, 0) {};
        \node[mynode] (6) at (2.5, 1.5) {};
        \node[mynode] (7) at (3, 0) {};
        \draw (1) -> (2);
        \draw (2) -> (3);
        \draw (3) -> (4);
        \draw (4) -> (5);
        \draw (5) -> (6);
        \draw (6) -> (7);
    \end{tikzpicture}
    \caption{The path $P_7$}
    \label{subfig:path}
    \end{subfigure}
    \begin{subfigure}[b]{0.23\textwidth}
    \centering
    \small
    \begin{tikzpicture}[,mycirc/.style={circle,fill=white, draw = black,minimum size=0.75cm,inner sep = 3pt}]
        \tikzset{mynode/.style = {shape=circle,draw,inner sep=1.5pt}}
        \tikzset{edge/.style = {solid}}
        
        \foreach \i [count=\ni from 1] in {1, 2, 3, 4, 5}{
          \node[draw, mynode] at (\ni*72 +18:1cm) (\ni) {};
        }
        
        \draw (1) -> (2);
        \draw (2) -> (3);
        \draw (3) -> (4);
        \draw (4) -> (5);
        \draw (5) -> (1);
    \end{tikzpicture}
    \caption{The cycle $C_5$}
    \label{subfig:cycle}
    \end{subfigure}
    \begin{subfigure}[b]{0.23\textwidth}
    \centering
    \small
    \begin{tikzpicture}[,mycirc/.style={circle,fill=white, draw = black,minimum size=0.75cm,inner sep = 3pt}]
        \tikzset{mynode/.style = {shape=circle,draw,inner sep=1.5pt}}
        \tikzset{edge/.style = {solid}}
        \node[mynode] (1) at (0,0) {};
        \node[mynode] (2) at (1,0) {};
        \node[mynode] (3) at (2,0) {};
        \node[mynode] (4) at (0,1.5) {};
        \node[mynode] (5) at (1,1.5) {};
        \node[mynode] (6) at (2, 1.5) {};
        \draw (1) -> (4);
        \draw (1) -> (5);
        \draw (1) -> (6);
        \draw (2) -> (4);
        \draw (2) -> (5);
        \draw (2) -> (6);
        \draw (3) -> (4);
        \draw (3) -> (5);
        \draw (3) -> (6);
    \end{tikzpicture}
    \caption{The graph $K_{3, 3}$}
    \label{subfig:bipartite}
    \end{subfigure}
    \caption{Examples of characterized connected graphs}
    \label{fig:examples}
\end{figure*}
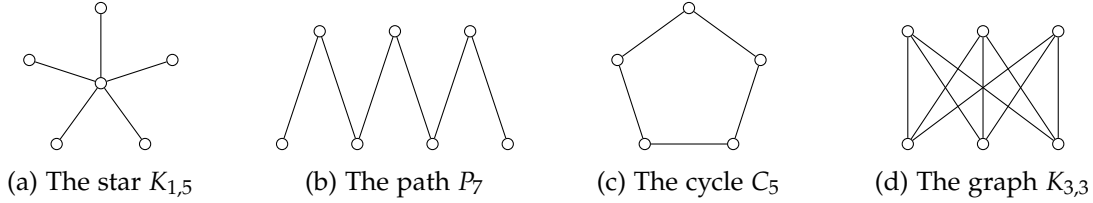

\subsection{Stars}

Consider the \emph{star graph} $K_{1, n}$.
        


\begin{restatable}{theorem}{thmstar}\label{thm:star}
If $G$ is the star $K_{1,n}$, then the minimum envy allocation $\pi^\ast$ under identical valuations corresponds to:
\begin{itemize}
    \item for even $n$, putting the unique median in the center of the star, and all the houses on the spokes in any order; the value of the envy is $\sum_{i > n/2 + 1}v(h_i) - \sum_{i \leq n/2}v(h_i)$.
    \item for odd $n$, putting either of the medians in the center of the star, and all other houses on the spokes in any order; the value of the envy for either median is $\sum_{i > (n+2)/2}v(h_i) - \sum_{i < (n+2)/2}v(h_i)$.
\end{itemize}
\end{restatable}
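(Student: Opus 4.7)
The plan is to reduce the envy minimization on $K_{1,n}$ to a one-dimensional optimization. Every edge of $K_{1,n}$ is incident to the unique center $c$, so for any allocation $\pi$ the total envy equals
\[
\Envy(\pi, K_{1,n}) \;=\; \sum_{\ell \text{ a leaf}} \bigl|\,v(\pi(c)) - v(\pi(\ell))\,\bigr|.
\]
This quantity is invariant under permuting the houses among the leaves, so the only decision that affects the envy is which house $h_k$ is placed at the center.

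Next, I would invoke the classical fact that for any finite multiset of real numbers $\{y_j\}$, the function $x \mapsto \sum_{j} |x - y_j|$ is convex and piecewise linear, attaining its minimum precisely on the closed interval of medians of the $y_j$'s --- a single point when the multiset has odd cardinality, and the closed interval spanned by the two middle values when it has even cardinality. Applied to the multiset of all house values, this forces the center to receive a median of that multiset: the unique median $h_{n/2+1}$ when the star has an odd number of vertices (the case of even $n$), and either of the two medians $h_{(n+1)/2}$ or $h_{(n+3)/2}$ when the star has an even number of vertices (the case of odd $n$). The leaves may then be filled in any order, since the sum above is symmetric in the leaf values.

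Finally, I would compute the optimal envy in closed form. Splitting $\sum_{i \ne k} |v(h_k) - v(h_i)|$ at $i = k$ and collecting the constant contributions of $v(h_k)$ yields $\sum_{i > k} v(h_i) - \sum_{i < k} v(h_i) + v(h_k)\bigl[(k-1) - (N - k)\bigr]$, where $N$ denotes the total number of houses. In the even-$n$ case, $k$ sits exactly at the middle index, so the bracketed coefficient vanishes, reproducing $\sum_{i > n/2+1} v(h_i) - \sum_{i \le n/2} v(h_i)$. In the odd-$n$ case this coefficient is $\pm 1$ depending on which of the two medians is chosen; absorbing the stray $v(h_k)$ term into the appropriate one-sided sum yields $\sum_{i > (n+2)/2} v(h_i) - \sum_{i < (n+2)/2} v(h_i)$ for both choices, which simultaneously establishes the envy formula and confirms that the two medians are equally optimal. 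The only care needed is in counting the house indices strictly above versus strictly below the chosen center so that the final cancellation is clean; the median fact itself is standard and the rest is pure bookkeeping, so I do not anticipate any genuine obstacle.
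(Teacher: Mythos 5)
Your proposal is correct and follows essentially the same route as the paper: both reduce the problem to the classical fact that $x \mapsto \sum_j |x - y_j|$ is minimized at a median, and both note that the leaf assignment is irrelevant and that the two medians tie in the odd-$n$ case. Your version simply spells out the closed-form bookkeeping that the paper leaves as "easy to verify," and your index computations (with $n+1$ houses on $K_{1,n}$) check out against the stated envy formulas.
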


\begin{proof}
The proof is a restatement of the well-known fact that in any multiset of real numbers, the sum of the $L_1$-distances is minimized by the median of the multiset. It is easy to verify that for even $n$, both medians yield the same value.
\end{proof}

\subsection{Paths}

Consider the \emph{path graph} $P_n$.



\begin{restatable}{theorem}{thmpath}\label{thm:path}
If $G$ is the path graph $P_n$, then the minimum envy allocation $\pi^\ast$ under identical valuations attains a total envy of $v(h_n) - v(h_1)$, is \emph{unique} (up to reversing the values along the path), and corresponds to placing the houses in sorted order along $P_n$.
\end{restatable}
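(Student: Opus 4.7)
The plan is to show the claim in three parts: (i) the sorted allocation achieves envy exactly $v(h_n)-v(h_1)$; (ii) no allocation does better; (iii) the minimum is attained only by the sorted allocation (and its reverse).

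For part (i), I would simply place the houses in increasing order of value along $P_n$. Then the envy on the edge between positions $i$ and $i+1$ is $v(h_{i+1})-v(h_i)$, and summing over the $n-1$ edges telescopes to $v(h_n)-v(h_1)$. This also immediately gives the upper bound on the optimum.

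For part (ii), the core idea is a triangle inequality on the valuation line. Given any allocation $\pi$, let $u$ be the vertex receiving $h_1$ and $w$ the vertex receiving $h_n$, and consider the unique subpath of $P_n$ between $u$ and $w$, say with vertices $u = x_0, x_1, \ldots, x_k = w$. Then
\[
\sum_{j=0}^{k-1} \bigl| v(\pi(x_{j+1})) - v(\pi(x_j)) \bigr| \;\geq\; \bigl| v(\pi(x_k)) - v(\pi(x_0)) \bigr| \;=\; v(h_n) - v(h_1)
\]
by repeated application of the triangle inequality. Since every edge of $P_n$ contributes a nonnegative quantity to $\Envy(\pi, P_n)$, the total envy is at least this value.

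For part (iii), suppose equality holds. First, the edges of $P_n$ not lying on the $u$-$w$ subpath must each contribute zero envy, meaning their endpoints receive houses of equal value. But by the WLOG assumption of distinct values (Lemma~\ref{lem:valspositive}), no two vertices can hold houses with the same value; hence every edge of $P_n$ must lie on the $u$-$w$ subpath, forcing $u$ and $w$ to be the two endpoints of $P_n$. Second, the equality case of the triangle inequality chain above forces the sequence $v(\pi(x_0)), v(\pi(x_1)), \ldots, v(\pi(x_k))$ to be monotonic; since it begins at $v(h_1)$ and ends at $v(h_n)$, it must be strictly increasing. Because the values along the path form a permutation of $v(h_1), \ldots, v(h_n)$, the only such monotone sequence is $v(h_1), v(h_2), \ldots, v(h_n)$ in order, which is exactly the sorted allocation. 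Swapping the roles of $u$ and $w$ gives the reverse allocation, accounting for the ``up to reversal'' clause.

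The main obstacle is being careful with the uniqueness argument: one must simultaneously exploit distinctness of values (to rule out any edge contributing zero envy outside the $u$-$w$ subpath) and the equality case of the triangle inequality (to force monotonicity along the subpath). Both ingredients are needed, and neither alone suffices; but once they are combined, uniqueness up to reversal follows immediately.
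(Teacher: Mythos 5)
Your proof is correct and follows essentially the same route as the paper's: the lower bound comes from applying the triangle inequality along the subpath between the vertices receiving $h_1$ and $h_n$, and the sorted allocation telescopes to meet it. Your uniqueness argument (distinct values forcing every edge onto the min--max subpath, plus the equality case of the triangle inequality forcing monotonicity) correctly fills in the details that the paper dismisses as ``straightforward to see.''
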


\begin{proof}[Proof Sketch]
The minimum and the maximum have to go on some two vertices. The subpath from the minimum to the maximum has to have envy at least the difference between the two.
\end{proof}

\subsection{Cycles}

Now, consider the \emph{cycle graph} $C_n$.

        
        


\begin{restatable}{theorem}{thmcycle}\label{thm:cycle}
If $G$ is the cycle graph $C_n$, then any minimum envy allocation $\pi^\ast$ under identical valuations attains a total envy of $2(v(h_n) - v(h_1))$, and corresponds to the following: place $h_1$ and $h_n$ arbitrarily on any two vertices of the cycle, and then place the remaining houses so that each of the two paths from $h_1$ to $h_n$ along the cycle consists of houses in sorted order.
\end{restatable}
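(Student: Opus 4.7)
The plan is to mimic the argument for paths (Theorem \ref{thm:path}) but applied twice, using the fact that deleting any vertex from $C_n$ leaves two paths, or more usefully, that $C_n$ can be decomposed as the union of two internally-disjoint $u$-to-$v$ paths for any two vertices $u, v$. The proof then has three pieces: a lower bound, a matching construction, and a characterization of when equality holds.

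For the lower bound, fix any allocation $\pi$ and let $u, v$ be the vertices receiving $h_1$ and $h_n$, respectively. The cycle $C_n$ is the edge-disjoint union of two paths $P^{(1)}, P^{(2)}$ from $u$ to $v$. For either arc $P^{(r)} = u = w_0 \,\text{-}\, w_1 \,\text{-}\, \cdots \,\text{-}\, w_t = v$, the triangle inequality gives
\[
\sum_{s=0}^{t-1} \bigl| v(\pi(w_{s+1})) - v(\pi(w_s)) \bigr| \;\geq\; \bigl| v(\pi(v)) - v(\pi(u)) \bigr| \;=\; v(h_n) - v(h_1).
\]
Summing over the two arcs yields $\Envy(\pi, C_n) \geq 2(v(h_n) - v(h_1))$.

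For the upper bound, I would explicitly describe the construction stated in the theorem and verify its envy by telescoping. Fix any two vertices $u, v$ of $C_n$; place $h_1$ on $u$ and $h_n$ on $v$; choose any partition of $\{h_2, \ldots, h_{n-1}\}$ into two sets $S^{(1)}, S^{(2)}$ whose sizes match the interior lengths of the two arcs $P^{(1)}, P^{(2)}$; and on each arc place the interior houses in sorted order from the $h_1$-end to the $h_n$-end. Then, along each arc, the values are strictly increasing from $v(h_1)$ to $v(h_n)$, so the sum of consecutive absolute differences telescopes to $v(h_n) - v(h_1)$. Both arcs together give a total envy of exactly $2(v(h_n) - v(h_1))$, matching the lower bound.

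For the characterization, I would observe that equality in the $L_1$ triangle inequality used above holds if and only if the sequence of values along the arc is monotone; since the endpoints are the minimum and the maximum among all $n$ distinct values and the arc contains distinct interior values, this monotone sequence must be strictly increasing from $v(h_1)$ to $v(h_n)$. Hence in any optimal $\pi^\ast$ both arcs are sorted, which is exactly the form in the theorem statement; conversely, the construction above shows every such $\pi^\ast$ is optimal. The main obstacle is essentially nothing beyond bookkeeping: the only subtlety is making sure to define the ``two arcs'' precisely (they are edge-disjoint $u$-$v$ paths whose union is all of $C_n$) so that the envy decomposes as a clean sum over the two arcs, after which the path argument does all the work.
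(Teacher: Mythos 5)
Your proposal is correct and follows essentially the same route as the paper: decompose $C_n$ into the two internally-disjoint arcs between the vertices holding $h_1$ and $h_n$, lower-bound each arc by the triangle inequality as in the path case, and note that equality forces each arc to be sorted. Your write-up merely spells out the telescoping upper bound and the equality condition that the paper leaves as ``straightforward.''
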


\begin{proof}[Proof Sketch]
In any allocation, there are two internally vertex-disjoint subpaths from $h_1$ to $h_n$ on a cycle. Each of those subpaths has an envy of at least the difference between the two values.
\end{proof}

\begin{restatable}{corollary}{corcycle}
For $n \geq 3$, the number of optimal allocations along the cycle $C_n$ is $2^{n-3}$, up to rotations and reversals.
\end{restatable}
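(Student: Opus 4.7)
The plan is to count the total number of labeled optimal allocations on $C_n$ and then divide by the order $|D_n| = 2n$ of the dihedral symmetry group of the cycle. By Theorem \ref{thm:cycle}, an optimal allocation is fully specified by the positions of $h_1$ and $h_n$ together with a partition of $\{h_2, \ldots, h_{n-1}\}$ between the two arcs, with each arc placed in sorted order. I first fix $h_1$ at a distinguished vertex of $C_n$ (say vertex $0$) to quotient out the rotational symmetries; under this constraint, an optimal allocation is uniquely determined by the subset $A \subseteq \{h_2, \ldots, h_{n-1}\}$ that is placed (in increasing order) along the clockwise arc, with $h_n$ occupying the vertex immediately after $A$ and the complement placed (in increasing order, read from $h_1$) along the counter-clockwise arc. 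This yields exactly $2^{n-2}$ optimal allocations with $h_1$ fixed at vertex $0$, so there are $n \cdot 2^{n-2}$ labeled optimal allocations in total.

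Next, I will verify that no optimal allocation has a nontrivial stabilizer in $D_n$. If some $\sigma \in D_n$ fixes an allocation $\pi$, then $\pi(\sigma(v)) = \pi(v)$ for every vertex $v$; since $\pi$ is a bijection and the house values are distinct, this forces $\sigma$ to fix every vertex, i.e., $\sigma$ is the identity. Hence every orbit of optimal allocations under the action of $D_n$ has size exactly $2n$, and the number of orbits is
\[
    \frac{n \cdot 2^{n-2}}{2n} \;=\; 2^{n-3},
\]
as required.

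The one step that needs a little care is the claim that, with $h_1$ fixed at vertex $0$, optimal allocations are in bijection with subsets $A \subseteq \{h_2, \ldots, h_{n-1}\}$: given $A$, the arc lengths are forced by $|A|$ (since $h_n$ must be placed at vertex $|A|+1$ clockwise from $h_1$), and Theorem \ref{thm:cycle} then prescribes the sorted placements on each arc uniquely. Different subsets yield different allocations (either because $h_n$ ends up at a different vertex, or because the elements on a given arc differ), and every optimal allocation with $h_1$ at vertex $0$ arises this way by taking $A$ to be its clockwise-arc houses. I expect this bookkeeping---together with the verification that $D_n$ acts freely on optimal allocations---to be the only nontrivial piece; both follow essentially from Theorem \ref{thm:cycle} and the distinctness of house values.
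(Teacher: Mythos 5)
Your proof is correct and follows essentially the same approach as the paper: fix the position of $h_1$, observe via Theorem \ref{thm:cycle} that an optimal allocation is then determined by the subset of $\{h_2,\ldots,h_{n-1}\}$ placed on one arc (giving $2^{n-2}$ choices), and quotient by the remaining symmetry. The only cosmetic difference is that you run a full orbit-counting argument over $D_n$ (verifying the action is free), whereas the paper simply notes that complementary subsets give the same allocation up to reversal and divides by $2$.
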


Perhaps slightly non-obviously
, the proofs of Theorems \ref{thm:path} and \ref{thm:cycle} can be seen as purely geometric arguments using the valuation interval. To see this, consider the path $P_n$, and take any allocation $\pi$ that does not satisfy the form stated in Theorem \ref{thm:path}, and consider how the allocation looks on the valuation interval. First, observe that every sub-interval of the valuation interval between consecutive houses needs to be covered by some line segment from the allocation. Otherwise, there would be no edge with a house from the left to a house from the right of the sub-interval, which is impossible as $P_n$ is connected. But the only way to meet this lower bound of one line segment for each sub-interval of the valuation interval is to sort the houses along the path. 
The allocation looks as follows on the valuation interval.
\begin{figure}[H]
    \centering
    \small
    \begin{tikzpicture}
    \xdef\eps{0.05}
        \foreach \x in {1,2,4,6,7,8}
            {
                \coordinate (A\x) at ($(\x/2,0)$) {};
                \draw ($(A\x)+(0,3pt)$) -- ($(A\x)-(0,3pt)$);
            }
            \draw[line width=0.5 mm] (0.5,0) -- (4,0);
            \draw[color=thechosenone,line width=0.5 mm] (0.5+\eps,-0.5) -- (1-\eps,-0.5);
            \draw[color=thechosenone,line width=0.5 mm] (1+\eps,-0.5) -- (2-\eps,-0.5);
            \draw[color=thechosenone,line width=0.5 mm] (2+\eps,-0.5) -- (3-\eps,-0.5);
            \draw[color=thechosenone,line width=0.5 mm] (3+\eps,-0.5) -- (3.5-\eps,-0.5);
            \draw[color=thechosenone,line width=0.5 mm] (3.5+\eps,-0.5) -- (4-\eps,-0.5);
    \end{tikzpicture}
\end{figure}

The geometric argument for cycles is similar (with an allocation illustrated below).

\begin{figure}[H]
    \centering
    \small
    \begin{tikzpicture}
    \xdef\eps{0.05}
        \foreach \x in {1,2,4,6,7,8}
            {
                \coordinate (A\x) at ($(\x/2,0)$) {};
                \draw ($(A\x)+(0,3pt)$) -- ($(A\x)-(0,3pt)$);
            }
            \draw[line width=0.5 mm] (0.5,0) -- (4,0);
            \draw[color=thechosenone,line width=0.5 mm] (0.5+\eps,-0.5) -- (1-\eps,-0.5);
            \draw[color=thechosenone,line width=0.5 mm] (1+\eps,-0.5) -- (3.5-\eps,-0.5);
            \draw[color=thechosenone,line width=0.5 mm] (3.5+\eps,-0.5) -- (4-\eps,-0.5);
            \draw[color=thechosenone,line width=0.5 mm] (0.5+\eps,-0.75) -- (2-\eps,-0.75);
            \draw[color=thechosenone,line width=0.5 mm] (2+\eps,-0.75) -- (3-\eps,-0.75);
            \draw[color=thechosenone,line width=0.5 mm] (3+\eps,-0.75) -- (4-\eps,-0.75);
    \end{tikzpicture}
\end{figure}

\subsection{Complete Bipartite graphs}

Let us start with the complete bipartite graph $K_{r, r}$ $(r \ge 1)$ where both parts have equal size. 

\begin{restatable}{theorem}{thmbipartiteequal}\label{thm:bipartite-equal}
When $G$ is the graph $K_{r, r}$, the minimum envy allocation $\pi^{\ast}$ has the following property: for every $i \in [r]$ the houses $\{h_{2i-1}, h_{2i}\}$ cannot be allocated to agents in the same side of the bipartite graph. Moreover, all allocations which satisfy this property have the same (optimal) envy.
\end{restatable}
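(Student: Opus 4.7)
The plan is to reduce the problem to a purely combinatorial question about partitioning $\{h_1,\dots,h_{2r}\}$ into two equal halves, and then to use a gap-by-gap valuation-interval argument. First I would observe that in $K_{r,r}$ the edges lie only between the two sides of the bipartition, so if $L$ and $R$ denote the sets of houses assigned to the two sides, then
\[
\Envy(\pi,K_{r,r}) \;=\; \sum_{x \in L,\ y \in R} |v(x) - v(y)|,
\]
which depends only on the unordered partition $(L,R)$, not on the internal assignment within either side. So the theorem is really about characterizing the optimal bipartitions of size $r$ each.

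Next I would use the valuation-interval viewpoint. For each $k \in \{1,\dots,2r-1\}$, let $a_k$ be the number of houses among $h_1,\dots,h_k$ that lie in $L$, and let $b_k = k - a_k$ be the number that lie in $R$. Each cross-edge corresponds to a line segment on the valuation interval, and the sub-interval $[v(h_k),v(h_{k+1})]$ is covered by every segment joining some house in $\{h_1,\dots,h_k\}$ on one side to some house in $\{h_{k+1},\dots,h_{2r}\}$ on the other. A direct count gives
\[
a_k(r - b_k) + b_k(r - a_k) \;=\; rk - 2 a_k b_k
\]
crossings at the $k$-th gap, so
\[
\Envy(\pi,K_{r,r}) \;=\; \sum_{k=1}^{2r-1}\bigl(v(h_{k+1}) - v(h_k)\bigr)\,\bigl(rk - 2 a_k b_k\bigr).
\]

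Now I would argue term-by-term. Since every coefficient $v(h_{k+1}) - v(h_k)$ is strictly positive, each term is minimized by maximizing $a_k b_k$ subject to $a_k + b_k = k$, which forces $\{a_k,b_k\} = \{\lfloor k/2 \rfloor, \lceil k/2 \rceil\}$. The key observation is that these pointwise optima can all be achieved simultaneously: for even $k = 2i$ this requires $a_{2i} = i$ exactly, while for odd $k = 2i-1$ either value $a_{2i-1} \in \{i-1, i\}$ is allowed. Reading off the sequence $(a_k)$, we see that between indices $2i-1$ and $2i$, exactly one house joins $L$ and one joins $R$; equivalently, $\{h_{2i-1},h_{2i}\}$ must be split across the two sides. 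Conversely, any partition satisfying this split-pairs property clearly realizes the sequence $a_{2i} = i$ for all $i$ and hence attains the term-by-term minimum.

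Putting these pieces together gives both conclusions: a partition is optimal iff each consecutive pair $\{h_{2i-1},h_{2i}\}$ is split across the two sides, and any two such partitions induce the same total envy (since they all realize the same optimal values of $a_k b_k$ at every gap, and the envy depends only on the partition). The main subtlety I anticipate is the joint feasibility step — verifying carefully that the local optima for all $2r-1$ gaps can be simultaneously met and that the only way to do so is via the pairwise split — but once the gap-counting identity is in hand this is a short monotonicity argument on the sequence $(a_k)$.
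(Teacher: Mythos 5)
Your proposal is correct, and it takes a genuinely different route from the paper. The paper proves the theorem by a local exchange argument: assuming an optimal allocation places some pair $\{h_{2j-1},h_{2j}\}$ on the same side, it locates the first house $h_{2j+k+1}$ assigned to the opposite side, swaps it with $h_{2j+k}$, and computes the resulting change in envy via counting functions $n^{>}_{L,\pi}(\cdot)$, $n^{<}_{L,\pi}(\cdot)$, etc., showing the change is $[-2k-2]\bigl(v(h_{2j+k+1})-v(h_{2j+k})\bigr)<0$; a second, separate swap computation then shows that all allocations satisfying the property have equal envy. Your argument instead writes the total envy in closed form over the valuation interval, $\Envy(\pi,K_{r,r})=\sum_{k=1}^{2r-1}\bigl(v(h_{k+1})-v(h_k)\bigr)\bigl(rk-2a_kb_k\bigr)$, and minimizes it term by term; the crossing count $a_k(r-b_k)+b_k(r-a_k)=rk-2a_kb_k$ is correct, the gaps are strictly positive by the paper's distinctness assumption (Lemma \ref{lem:valspositive}), and your feasibility step is sound: $a_{2i}=i$ for all $i$ holds if and only if each pair $\{h_{2i-1},h_{2i}\}$ is split, and at odd indices the product $a_{2i-1}b_{2i-1}=(i-1)i$ is the same for either admissible choice, which yields both the characterization and the equal-envy claim simultaneously. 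Your approach buys a closed-form expression for the optimal envy and handles both conclusions of the theorem in one stroke, and it aligns naturally with the paper's own geometric valuation-interval picture; the paper's swap argument is more local and computational, but its machinery is reused almost verbatim for the unequal-sides case $K_{r,s}$ (Theorem \ref{thm:bipartite-unequal}), where the term-by-term minimization would require a little more care because the two sides contribute asymmetrically to the crossing count.
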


\begin{proof}[Proof Sketch]
Consider an allocation $\pi$ where the stated property does not hold and consider the smallest $i$ where the property is violated. In this case we can find a simple swap that can improve envy. Assume $h_{2i-1}$ and $h_{2i}$ are allocated to one part. If $k$ is the least index greater than $2i$ such that $h_k$ is not allocated to an agent in the same part, then swapping $h_k$ and $h_{k-1}$ leads to a reduction in envy. The exact calculation is quite technical; see Appendix \ref{apdx:connected}.
\end{proof}

Note that this implies a straightforward linear-time algorithm to compute a minimum envy allocation for the graph $K_{r, r}$.

We can now generalize this result to complete bipartite graphs where the two parts have unequal size. Due to the similarity of the two proofs, we omit the proof sketch.
\begin{restatable}{theorem}{thmbipartiteunequal}\label{thm:bipartite-unequal}
When $G$ is the graph $K_{r, s}$ ($r > s$), the minimum envy allocation $\pi^\ast$ has the following property: 
\begin{itemize}
    \item If $ r - s =: 2m$ is even, then the first and last $m$ houses are allocated to the larger part, and for all $i \in [s]$, the houses $h_{m + 2i - 1}$ and $h_{m +2i}$ are allocated to different parts.
    \item If $r - s =: 2m + 1$ is odd, then the first $m$ and last $m + 1$ houses are allocated to the larger part. For all $i \in [s]$, the houses $h_{m + 2i - 1}$ and  $h_{m
    + 2i}$ are allocated to the larger and smaller parts respectively.
\end{itemize}
Moreover, all allocations which satisfy this property have the same (optimal) envy.
\end{restatable}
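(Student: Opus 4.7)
The plan is to adapt the swap-based proof of Theorem \ref{thm:bipartite-equal}. The key reformulation is that for $G = K_{r,s}$ with identical valuations, the envy admits a gap-by-gap decomposition:
\begin{equation*}
\Envy(\pi, K_{r,s}) = \sum_{k=1}^{n-1} f(k, a_k) \cdot \big(v(h_{k+1}) - v(h_k)\big),
\end{equation*}
where $a_k$ denotes the number of houses among $\{h_1, \ldots, h_k\}$ assigned to the larger part, and $f(k, a) := as + (k - a)r - 2a(k - a)$ counts the bipartite edges crossing the gap $[v(h_k), v(h_{k+1})]$. The function $a \mapsto f(k, a)$ is a strictly convex quadratic in $a$ with vertex $a^\ast(k) = (r - s + 2k)/4$, and its forward difference $f(k, a+1) - f(k, a) = s - r - 2k + 4a + 2$ vanishes exactly when $a = a^\ast(k) - \tfrac12$.

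To establish the characterization, I would argue by contradiction: assume $\pi$ is optimal but violates the stated property, and exhibit a swap that strictly decreases the envy. Three sub-cases capture all possible violations: (i) some house among the first $m$ (boundary-left) is on the smaller part; (ii) some house among the last $m$ or $m+1$ (boundary-right) is on the smaller part; (iii) some middle pair $\{h_{m+2i-1}, h_{m+2i}\}$ violates the prescribed pattern. In each sub-case, locate the smallest witness index $j$ and, by the minimality hypothesis, identify the nearest correctly-placed house $h_k$ on the opposite side and swap them. Between positions $j$ and $k$ the sequence $(a_\ell)$ is constant by minimality, so the swap shifts $a_\ell$ by $\pm 1$ uniformly on $\{j, \ldots, k-1\}$, and the envy change is
\begin{equation*}
\Delta = \sum_{\ell = j}^{k-1} \pm \big(s - r - 2\ell + 4 a_\ell + 2\big) \cdot \big(v(h_{\ell+1}) - v(h_\ell)\big).
\end{equation*}
Substituting $r - s \in \{2m, 2m+1\}$ along with the known value of $a_\ell$ on this range shows every summand has the desired sign, yielding a strict decrease and contradicting optimality of $\pi$.

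For the equal-envy assertion, one computes $\sum_k f(k, a_k) w_k$ on an arbitrary allocation matching the prescribed pattern. In the even case, the values $a_k$ are uniquely pinned at every index except the odd-offset middle indices $k = m + 2i - 1$, where the two admissible choices $\{m + i - 1, m + i\}$ are symmetric about the vertex $a^\ast(k) = m + i - \tfrac12$ of $f(k,\cdot)$ and therefore contribute the same value. In the odd case, the value of $a_k$ is uniquely forced at every $k$, so there is no ambiguity at all, and the envy is a fixed function of the instance.

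The main technical obstacle, as in Theorem \ref{thm:bipartite-equal}, is verifying the sign of the envy-change sum cleanly in every sub-case, particularly when the swap straddles the boundary between the interior alternating middle segment and the outer all-$L$ boundary segments. The ingredient that keeps this routine is the monotonicity of $a^\ast(\ell)$ in $\ell$: once the direction of the swap is pinned by the first wrong position, every subsequent gap contributes with the same sign of $s - r - 2\ell + 4 a_\ell + 2$. The parity split between the even and odd cases is the main place where one must track the algebra carefully, since the location of the vertex $a^\ast(k)$ shifts by $\tfrac14$ between the two.
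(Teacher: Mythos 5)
Your proof is correct, and it reaches the paper's conclusion by a genuinely different organizing principle. The paper never writes down a global decomposition of the envy; instead it works directly with the counting functions $n^{>}_{L,\pi},\, n^{<}_{L,\pi},\, n^{>}_{R,\pi},\, n^{<}_{R,\pi}$ and, for each of the two parity cases, proves two separate claims (first $m$ houses go to $L$; middle pairs alternate) by locating the least violating index and performing a single \emph{adjacent} swap, expanding the envy difference by hand each time; the equal-envy assertion is then yet another explicit swap computation shown to equal zero. Your gap-by-gap identity $\Envy(\pi,K_{r,s})=\sum_k f(k,a_k)\,(v(h_{k+1})-v(h_k))$ with $f(k,a)=as+(k-a)r-2a(k-a)$ is exactly the quantity the paper is implicitly computing --- e.g.\ the paper's expression $[2k-(r-s)+2-2l]$ in Claim 1 is precisely the forward difference $f(k+l,k+1)-f(k+l,k)$ --- but making the convex quadratic explicit buys you a uniform sign analysis across all sub-cases (one step toward the vertex $a^{\ast}(k)$ strictly decreases $f(k,\cdot)$) and, more notably, an essentially computation-free proof of the equal-envy claim via the symmetry of the two admissible values of $a_{m+2i-1}$ about the vertex $a^{\ast}(m+2i-1)=m+i-\tfrac12$. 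Two small things to fix when writing out the details: the backward difference is $f(\ell,a-1)-f(\ell,a)=-(s-r-2\ell+4a-2)$, so your ``$\pm(s-r-2\ell+4a_\ell+2)$'' is off by $4$ in the constant for decrementing swaps; and ``nearest correctly-placed house on the opposite side'' should be sharpened to ``nearest house assigned to the opposite side,'' which is what guarantees the run between $j$ and $k$ is monochromatic and hence that $(a_\ell)$ is constant there. Neither affects the validity of the argument. Your decomposition would in fact support an even more direct proof --- minimize each $f(k,a_k)$ independently subject to the lattice-path constraints $a_0=0$, $a_n=r$, $a_k-a_{k-1}\in\{0,1\}$ --- which neither you nor the paper pursues.
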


\begin{restatable}{corollary}{corbipartite}\label{cor:bipartite-solutions}
For any complete bipartite graph $K_{r, s}$ ($r \ge s$),
\begin{itemize}
    \item If $r - s$ is even, there are $2^s$ optimal allocations;
    \item If $r - s$ is odd, there is exactly one optimal allocation,
\end{itemize}
up to permutations over allocations to the same side of the graph.
\end{restatable}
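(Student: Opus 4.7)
The plan is to deduce this corollary directly from the structural characterizations in Theorems \ref{thm:bipartite-equal} and \ref{thm:bipartite-unequal}, which together give an exact description of the optimal allocations up to permutations within a side. Since $G = K_{r,s}$ is complete bipartite, the total envy $\sum_{(i,j)\in E}|v(\pi(i))-v(\pi(j))|$ depends only on the \emph{bipartition of houses} into ``left part'' vs.\ ``right part'', not on how the houses are arranged within each side. Hence counting optimal allocations up to such within-side permutations reduces to counting optimal bipartitions of the houses $\{h_1,\dots,h_n\}$ (where $n=r+s$) into a block of size $r$ and a block of size $s$.

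For $r=s$, Theorem \ref{thm:bipartite-equal} says that an optimal bipartition is precisely one in which, for every $i\in[r]$, exactly one of $\{h_{2i-1},h_{2i}\}$ lies in each part. These $r$ pairs are disjoint and independent, so there are exactly $2^r = 2^s$ such bipartitions. For $r>s$ with $r-s=2m$ even, Theorem \ref{thm:bipartite-unequal} forces the $m$ smallest and $m$ largest houses into the larger part, and then partitions the remaining $2s$ middle houses into $s$ consecutive pairs, each of which must split between the two parts. Again each of the $s$ pairs contributes a factor of $2$ and the choices are independent, yielding $2^s$ optimal bipartitions. For $r-s=2m+1$ odd, the theorem prescribes the $m$ smallest and $m+1$ largest houses to the larger part, and then assigns the two elements of each pair $\{h_{m+2i-1},h_{m+2i}\}$ to a specific side, so the bipartition is fully determined and there is exactly one optimal allocation up to within-side permutations.

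The only mild subtlety, and the only place that really needs justification, is the reduction from ``optimal allocations modulo within-side permutations'' to ``optimal bipartitions''. I would handle this by a one-line observation: for complete bipartite graphs, the envy of an allocation is symmetric in the houses assigned to each side, so permuting within a side preserves both the envy value and the property of being optimal; thus the equivalence classes in the statement correspond bijectively to the bipartitions characterized by the two theorems. Once that is in place, the enumeration above gives $2^s$ in the even case and $1$ in the odd case, completing the proof.

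I do not anticipate a genuine obstacle here: the heavy lifting has already been done in Theorems \ref{thm:bipartite-equal} and \ref{thm:bipartite-unequal}, and the remaining work is elementary counting. The most care-intensive step is just being consistent about the meaning of $m$ across the two parities and verifying that the forced ``endpoint'' blocks do not contribute any additional choices beyond the paired middle houses.
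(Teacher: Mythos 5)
Your proposal is correct and follows essentially the same route as the paper's proof: both reduce the count to choosing, for each of the $s$ consecutive middle pairs identified in Theorems \ref{thm:bipartite-equal} and \ref{thm:bipartite-unequal}, which element goes to which side, yielding $2^s$ choices in the even case and a fully forced bipartition in the odd case. Your explicit justification that envy depends only on the bipartition of houses (so within-side permutations can be quotiented out) is a point the paper leaves implicit, but it is the same argument.
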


It is easy to see that our linear time algorithm generalizes to general complete bipartite graphs as well. 
We note here that Theorem \ref{thm:bipartite-unequal} generalizes Theorem \ref{thm:star}. When the number of spokes in the star $K_{1, n}$ is odd (i.e., $n-1$ is even), there are two possible houses that can be allocated to the center in an optimal allocation. However, when the number of spokes is even (i.e., $n-1$ is odd), any optimal allocation allocates a unique house to the central node.

\subsection{Rooted Binary Trees}\label{sec:binary-trees}
In this section, we consider binary trees. A {\em binary tree} $T$ is defined as a rooted tree where each node has either $0$ or $2$ children. 

Our main result is a structural property characterizing at least one of the optimal allocations for any instance where the graph $G$ corresponds to a binary tree. 
We call this the {\em local median} property.
\begin{definition}[Local Median Property]
An allocation on a binary tree satisfies the local median property if, for any internal node, exactly one of its children is allocated a house with value less than that of the node. In other words, the value allocated to any internal node is the median of the set containing the node and its children.
\end{definition}


The proofs in this section will use the following lemma. We define the {\em inverse} of a valuation function $v$ as a valuation function $v^{inv}$ such that $v^{inv} (h) = - v(h)$ for all $h \in H$ (appropriately shifted so that all values are nonnegative). We note that any allocation has the same envy along any edge with respect to the inverted valuation and the original valuation, whose straightforward proof we relegate to Appendix \ref{apdx:connected}.

\begin{restatable}{lemma}{lemvaluationinverse}\label{lem:valuation-inverse}
The envy along any edge of the graph $G$ under an allocation $\pi$ with respect to the valuation $v$ is equal to the envy along the same edge of the graph $G$ under the allocation $\pi$ with respect to the valuation $v^{inv}$.
\end{restatable}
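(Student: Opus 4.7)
The plan is to observe that this lemma is essentially a direct computation from the definition of edge-envy under identical valuations, together with the definition of the inverted valuation function. Specifically, recall that for an edge $e = (i,j)$ of $G$ and an allocation $\pi$, the envy along $e$ with respect to a valuation $v$ is $\envy_e(\pi) = |v(\pi(i)) - v(\pi(j))|$. Since $|\cdot|$ is symmetric about $0$, negating both arguments does not change the absolute difference, and adding the same constant to both arguments also leaves the difference invariant.

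First, I would fix an arbitrary edge $e = (i,j) \in E(G)$ and unfold the definitions. Writing $v^{inv}(h) = -v(h) + C$ for the constant shift $C$ that makes $v^{inv}$ nonnegative, the envy with respect to $v^{inv}$ along $e$ becomes
\[
|v^{inv}(\pi(i)) - v^{inv}(\pi(j))| = |(-v(\pi(i)) + C) - (-v(\pi(j)) + C)| = |v(\pi(j)) - v(\pi(i))|,
\]
which equals $|v(\pi(i)) - v(\pi(j))|$ by symmetry of the absolute value. This is exactly the envy along $e$ with respect to $v$. Since $e$ was arbitrary, the claim follows.

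There is no real obstacle here; the statement is a two-line computation whose only subtlety is bookkeeping the additive shift that ensures $v^{inv}$ takes nonnegative values. I would therefore keep the write-up to a single short paragraph of algebra, perhaps remarking that summing over edges gives the stronger consequence that $\Envy(\pi, G)$ itself is invariant under inversion, which is the form in which the lemma will actually be used in the subsequent binary-tree arguments (e.g., to flip between arguments about maxima and minima, or to dualize the local median property).
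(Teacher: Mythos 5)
Your proof is correct and follows essentially the same direct computation as the paper: unfold the definition of edge-envy and observe that negating both values (and, in your version, explicitly cancelling the common additive shift that makes $v^{inv}$ nonnegative, a detail the paper's own one-line proof elides) leaves the absolute difference unchanged.
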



We will now show that at least one minimum envy allocation satisfies the local median property. 
More formally, we show the following: Given a binary tree $T$ and any allocation $\pi$, there exists an allocation that satisfies the local median property and has equal or lower total envy. 
The proof relies on the following lemma.


\begin{restatable}{lemma}{lemlocalmedianimprovement}\label{lem:local-median-improvement}
Let $\pi$ be an allocation on a binary tree $T$, not satisfying the local median property. Let $i$ be the internal node furthest from the root which is not allocated the median among the values given to it and its children. Then, there exists an allocation $\pi'$ such that 
\begin{enumerate}[(a)]
    \item For the subtree $T'$ rooted at $i$, we have that $\Envy(\pi(T'), T') > \Envy(\pi'(T'), T')$;
    \item For any other subtree $T''$ not contained by $T'$, we have that $\Envy(\pi(T''), T'') \ge \Envy(\pi'(T''), T'')$.
\end{enumerate}
\end{restatable}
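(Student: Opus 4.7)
The plan is to combine Lemma~\ref{lem:valuation-inverse} with a single carefully chosen swap along a descending path from one of $i$'s children. By possibly inverting the valuation, I will assume without loss of generality that $v(\pi(i))$ is the smallest of the three values at $i$ and its two children $\ell, r$; the opposite case (where it is the largest) is symmetric under inversion. Writing $a := v(\pi(i))$ and, relabeling the children if necessary, $b := v(\pi(\ell)) < v(\pi(r)) =: c$, we have $a < b < c$.

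The core construction is as follows. Starting at $u_0 := \ell$, repeatedly descend to the child with the smaller value until reaching a leaf, yielding a path $u_0, u_1, \ldots, u_K$. Because $i$ is the \emph{deepest} violator of the local median property, every internal $u_k$ on this path is itself a local median, so the values $v(\pi(u_0)) > v(\pi(u_1)) > \cdots > v(\pi(u_K))$ are strictly decreasing. Since $v(\pi(u_0)) = b > a$, there is a well-defined last index $j$ along the path with $v(\pi(u_j)) > a$: either $u_j = u_K$ is the terminal leaf (the path never dips below $a$), or $v(\pi(u_{j+1})) < a$ (the very next step would dip below $a$). I construct $\pi'$ by swapping the houses at $i$ and $u_j$ and leaving all other assignments untouched; set $b_j := v(\pi(u_j))$.

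The analysis reduces to computing the envy change on the (at most six) edges incident to $i$ or $u_j$, since all other edges keep their envy. Those potentially affected edges are $(i, p)$ (the parent edge, if it exists), $(i, \ell)$, $(i, r)$, $(u_{j-1}, u_j)$ (if $j \geq 1$), and, if $u_j$ is internal, the two child edges $(u_j, u_{j+1})$ and $(u_j, u_j')$ where $u_j'$ is the larger sibling of $u_{j+1}$. A direct, term-by-term computation---splitting into four sub-cases according to $j = 0$ vs.\ $j \geq 1$ and $u_j$ leaf vs.\ internal, and using the strict monotonicity $b_0 > b_1 > \cdots > b_K$, the defining inequality $b_{j+1} < a < b_j$ (when $u_j$ is internal), and the local median property at $u_j$ (which gives $v(\pi(u_j')) > b_j$)---telescopes in every sub-case to the clean identity
\[
\Envy(\pi(T'), T') - \Envy(\pi'(T'), T') \;=\; b_j - a \;>\; 0,
\]
which gives part~(a). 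For part~(b), the only edge not in $T'$ whose envy can change is $(i, p)$, and by the reverse triangle inequality its envy changes by at most $|b_j - a| = b_j - a$ in magnitude. Hence for any subtree $T''$ strictly containing $T'$, the total change on $T''$ is at most $-(b_j - a) + (b_j - a) = 0$; and for any $T''$ disjoint from $T'$, no house in $T''$ is moved, so its envy is trivially unchanged.

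The principal obstacle is the sign-bookkeeping inside the four sub-cases: each involves several signed terms that must cancel cleanly, and the telescoping is delicate, so picking a different node to swap with (e.g., $u_{j+1}$ instead of $u_j$) can easily flip the sign of the change on $T'$. The reason the particular choice of $u_j$ works is that it positions $i$'s new value just barely above $a$, so the worst-case increase on the parent edge is always exactly matched by the guaranteed decrease on $T'$.
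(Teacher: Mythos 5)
Your proof is correct, and it takes a genuinely different (and arguably cleaner) route than the paper's. The paper identifies essentially the same descending path $(y, x_m, x_{m-1}, \dots, x_1)$ --- built by repeatedly moving to the least-valued child until reaching a leaf or a node with a child valued below $y$ --- but then applies a \emph{cyclic shift} of houses along the entire path ($i$ gets $x_m$'s house, $x_m$ gets $x_{m-1}$'s, and so on, with the last node receiving $y$), whereas you apply a single \emph{transposition} between $i$ and the terminal node $u_j$ of that path (your $u_j$ coincides with the paper's $x_1$). Both use Lemma~\ref{lem:valuation-inverse} identically to dispose of the symmetric case. The trade-off: the paper's cyclic shift yields a larger guaranteed decrease on $T'$ (at least $x_m - y$, coming from the topmost step of the path) but forces it to track the envy change on every edge along the path and its siblings, organized into a table of cases for the parent and the terminal node's children; your transposition touches only the at most six edges incident to the two swapped nodes, and the telescoping identity $\Envy(\pi(T'),T') - \Envy(\pi'(T'),T') = b_j - a$ holds uniformly across your four sub-cases (I verified each: the decreases on $(i,\ell)$, $(i,r)$, and, when present, $(u_j,u_{j+1})$ against the increases on $(u_{j-1},u_j)$ and $(u_j,u_j')$ always net to exactly $b_j - a$, using strict monotonicity of the path values and the local median property at $u_j$). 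Your handling of part (b) via the reverse triangle inequality on the single parent edge is also exactly the bound needed, since the decrease inside $T'$ is precisely $b_j - a$. One point worth making explicit if you write this up in full: the reason the smaller-valued child $\ell$ must be chosen as the start of the descent is that it guarantees the new value $b_j$ at $i$ satisfies $b_j \le b_0 < c$, so both child edges of $i$ strictly shrink; descending through $r$ would break this.
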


\begin{proof}[Proof Sketch]
Consider an internal vertex not satisfying the local median property; by Lemma \ref{lem:valuation-inverse}, it suffices to consider the case when this violating vertex has a value that is less than both its children. We can now ``push'' the value allocated to this vertex down the tree by a series of swaps until that value reaches a leaf or satisfies the local median property. It can be shown that the total envy at the end satisfies the two criteria stated. A full proof is given in Appendix \ref{apdx:connected}.
\end{proof}

Lemma \ref{lem:local-median-improvement} immediately gives rise to the following corollary, which we state as a theorem.

\begin{restatable}{theorem}{thmlocalmedian}\label{thm:localmedian}
For any binary tree $T$, at least one minimum envy allocation satisfies the local median property.
\end{restatable}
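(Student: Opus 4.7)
I would derive the theorem from Lemma~\ref{lem:local-median-improvement} by iterative improvement. Starting from any minimum envy allocation $\pi^{(0)} := \pi^*$, I would repeatedly apply Lemma~\ref{lem:local-median-improvement} at the internal node that violates the local median property and is furthest from the root, producing a sequence $\pi^{(0)}, \pi^{(1)}, \ldots$ of allocations. Two things need to be established: (i) every $\pi^{(k)}$ remains a minimum envy allocation, and (ii) the sequence terminates after finitely many steps at an allocation satisfying the local median property.

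Point (i) is immediate: applying part~(b) of Lemma~\ref{lem:local-median-improvement} with $T'' := T$ (valid whenever the deepest violating node $i$ is not the root, and when $i$ is the root $T' = T$ and part~(a) gives the same conclusion) yields $\Envy(\pi^{(k+1)}, T) \le \Envy(\pi^{(k)}, T)$, so minimality is inherited by every $\pi^{(k)}$.

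For point (ii), which is the main obstacle, I would introduce the monovariant
\[
\Psi(\pi) \;:=\; \sum_{u \in V(\pi)} d(u),
\]
where $d(u)$ denotes the depth of $u$ in $T$ and $V(\pi) \subseteq N$ is the set of internal nodes violating the local median property under $\pi$. Inspecting the proof of Lemma~\ref{lem:local-median-improvement}, $\pi^{(k+1)}$ is obtained from $\pi^{(k)}$ by a cascade of parent-child swaps that pushes the value at $i$ down a single root-to-leaf path $P$. Two structural claims about this cascade drive the argument: first, every node on $P$ satisfies the local median property after the push, because each swap is chosen so that the vacated ancestor receives the smaller (or, in the symmetric case handled via Lemma~\ref{lem:valuation-inverse}, larger) of its two children's values, placing the new value between the two children; second, the only node off $P$ whose local median status can change is the parent $p$ of $i$, because a node's status depends solely on its own value and those of its children. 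Together these imply that $\Psi$ changes by $-d(i) + \epsilon \cdot d(p)$ for some $\epsilon \in \{-1, 0, +1\}$ that records whether $p$ flipped status; since $d(p) = d(i) - 1$, this quantity is at most $-1$ in every case. Hence $\Psi$ strictly decreases at each step, and since it is a non-negative integer, the iteration terminates at some $\pi^{(K)}$. This terminal allocation admits no further application of the lemma, so $V(\pi^{(K)}) = \emptyset$, i.e., $\pi^{(K)}$ satisfies the local median property while still being minimum envy.

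The hardest part of executing this plan is verifying the two structural claims about the push cascade: one must carefully unpack the proof of Lemma~\ref{lem:local-median-improvement} to confirm that the cascade restores local median at every node it passes through and perturbs no other node's status apart from $p$'s, and that the symmetric \emph{large-at-$i$} case reduces cleanly to the small case via Lemma~\ref{lem:valuation-inverse}. Once these structural facts are in hand, the monovariant analysis and termination follow mechanically.
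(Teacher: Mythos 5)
Your proposal is correct and takes essentially the paper's route: the paper obtains Theorem~\ref{thm:localmedian} from Lemma~\ref{lem:local-median-improvement} by exactly this kind of repeated application (preserving minimality via the weak decrease of total envy), though it leaves the termination argument implicit, so your monovariant $\Psi$ is a welcome filled-in detail rather than a different proof. Your two structural claims do check out against the lemma's proof --- the cyclic push restores the local median property at every node of the path (the terminal node ends up with exactly one smaller child because it satisfied the property before the push and the path's stopping rule forces $u < y < x_1 < v$) and can only disturb the status of the parent of $i$ --- and the one boundary case to note is that $i$ is never the root when iterating from a minimum envy allocation (part~(a) with $T'=T$ would contradict minimality), so $d(i)\ge 1$ and $\Psi$ strictly decreases in every step.
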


Unfortunately, the local median property is too weak to exploit for a polynomial-time algorithm. Ideally, we would like to use the property to show that some minimum envy allocation satisfies an even stronger property called the \emph{global median} property.

\begin{definition}[Global Median Property]
An allocation on a binary tree satisfies the global median property if, for every internal node, all the houses in one subtree of the node have value less than the house allocated to the node, and all the houses in the other subtree have value greater than the house allocated to the node.
\end{definition}

Empirically, it seems like minimum envy allocations on rooted binary trees do indeed satisfy the global median property. If this turned out to be true, we would be able to devise an algorithm that significantly reduces the search space for a minimum envy allocation from $\Theta(n!)$ to $\Theta(2^d)$. This would give us an $O(2^d)$ algorithm for computing minimum envy allocations on a rooted binary tree of maximum depth $d$.

\begin{conjecture}
There is an algorithm that computes an optimal house allocation on a rooted binary tree of maximum depth $d$ in time $O(2^d)$. In particular, this algorithm runs in polynomial time on (nearly) balanced trees.
\end{conjecture}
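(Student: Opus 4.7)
The plan is to reduce the conjecture to establishing the global median property as a structural lemma, and then to build a memoized recursion on top of it. The structural lemma is by far the main obstacle; the algorithm itself is comparatively straightforward once that property is in hand.

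For the structural lemma, I would extend the swap argument of Lemma~\ref{lem:local-median-improvement}. Begin with any minimum envy allocation $\pi^\ast$ satisfying the local median property, guaranteed by Theorem~\ref{thm:localmedian}. Suppose $\pi^\ast$ violates the global median property, and let $v$ be the deepest violating internal node, with assigned value $a$. By Lemma~\ref{lem:valuation-inverse} we may assume WLOG that $v$'s right subtree $T_R$ contains a house $y$ with $v(y)<a$. A cardinality argument, using that every internal node strictly below $v$ already satisfies both properties while $v$ itself satisfies only local median, then forces a correspondingly misplaced house $x$ in $v$'s left subtree $T_L$ with $v(x)>a$. The proposed fix is to swap $x$ and $y$, and, if necessary, perform a short cascade of follow-up swaps near $x$ and $y$ to reinstate the local median property further down.

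The key obstacle is showing that this swap weakly decreases total envy while strictly decreasing a potential function, for instance the depth-weighted count of global-median violations. The envy change is localised to the edges on the tree path $P$ joining the former positions of $x$ and $y$, together with edges incident to $P$ whose endpoints change value. A promising route is the geometric valuation-interval viewpoint used throughout Section~\ref{sec:connected}: each edge corresponds to a line segment on the valuation interval, and the swap simultaneously redirects endpoints of several segments. A case analysis based on the relative order of $a$, $v(x)$, $v(y)$, and the values along $P$, combined with a telescoping inequality, should yield that the total segment length does not increase. Carefully controlling the interaction with the off-path edges incident to $P$, and ensuring that the follow-up swaps do not undo the envy gains, forms the bulk of the proof.

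Assuming the structural lemma, the algorithm is a top-down memoized recursion $\textsc{Alloc}(v,a,b)$ returning the minimum envy for assigning the contiguous range $h_a,\ldots,h_b$ to the subtree $T_v$ (where $|T_v|=b-a+1$). At an internal node $v$ with left/right subtree sizes $\ell_v,r_v$, the global median property forces the house at $v$ to be one of only two values, $h_{a+\ell_v}$ or $h_{a+r_v}$; each orientation determines the subranges passed to $v$'s two children, and the recurrence sums the two children's optima with the two edge envies at $v$. Memoizing on $(v,a,b)$, each reachable memo entry corresponds to a choice of orientations along the root-to-$v$ path, and each entry needs $O(1)$ work beyond its two recursive calls. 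Since orientations cascade and the number of reachable entries is bounded by the number of such orientation patterns, this gives a $2^{O(d)}$ running time; a tighter accounting, exploiting that the range assigned to each subtree is uniquely determined by its ancestors' orientations, is expected to yield the $O(2^d)$ bound claimed by the conjecture, and in any case gives polynomial time whenever $d=O(\log n)$.
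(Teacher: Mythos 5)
This statement is stated in the paper as an open conjecture: the authors explicitly say that the local median property is too weak, that the global median property is only \emph{empirically} observed, and that the $O(2^d)$ algorithm would follow \emph{if} the global median property ``turned out to be true.'' Your proposal does not close that gap. The entire weight of the argument rests on the structural lemma that some optimal allocation satisfies the global median property, and your treatment of it is a plan rather than a proof: the envy analysis of the transposition of $x$ and $y$ is described as a ``promising route'' that ``should yield'' a non-increase, and the follow-up cascade of swaps is not shown to terminate, to restore the local median property, or to preserve the envy gain. This is precisely where the difficulty lies. Note that even the much weaker local median property (Lemma \ref{lem:local-median-improvement}) required not a single transposition but a carefully constructed cyclic shift down a monotone path, together with an explicit case table for the off-path edges, and even then the conclusion outside the affected subtree is only a weak inequality. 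A single swap of $x$ and $y$ between two different subtrees of $v$ changes the envy on the edges incident to both old positions (each endpoint's parent and up to two children), and there is no a priori reason these contributions telescope favorably; the paper's own Figure \ref{fig:disjointpathcyclecounter}-style phenomena show that non-uniform spacing can make locally plausible swaps strictly worse. The cardinality argument that ``forces a correspondingly misplaced house $x$ in $T_L$'' also needs care: a global-median violation at $v$ is an assertion about the sets of values in the two subtrees relative to $v$'s own value, and the deepest violator need not admit a single canonical pair $(x,y)$ whose exchange reduces a well-defined potential.

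The algorithmic half of your proposal is essentially the same recursion the authors have in mind (a contiguous range per subtree, two orientations per internal node, $\Theta(2^d)$ reachable states), and that part is fine conditional on the structural lemma. But as written, the proposal reproduces the conjecture's conditional structure rather than proving it: it reduces an open problem to an unproven lemma whose proof sketch has exactly the unresolved difficulty the authors identify. To turn this into a proof you would need to actually carry out the exchange argument --- specify the swap or cyclic shift precisely, enumerate the affected edges, and verify the envy inequality in all orderings of the relevant values --- or find a different route to the global median property altogether.
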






\subsection{General Trees}\label{sec:general-trees}

How do we take the approaches for rooted binary trees and build towards arbitrary trees? Note that one consequence of Theorem \ref{thm:localmedian} is that in an optimal allocation on a rooted binary tree, the minimum and the maximum must appear on leaves.

In the minimum linear arrangement problem, it is known~\citep{seidvasser} that when the underlying graph is a tree, some optimal allocation assigns the minimum and maximum values to leaves, and furthermore, the (unique) path from this minimum to the maximum consists of monotonically increasing values. This characterization is used crucially in designing the polynomial time algorithm on trees in this context~\citep{mlatrees}.

Empirically, this same property for trees seems to hold for non-uniformly spaced values as well. The proof technique used in~\citep{seidvasser} does not extend to our setting, but testing the problem on 200 randomly generated trees and uniformly random values on the interval $[0, 100]$ always gave us both these properties on trees: the minimum and maximum values both end up on leaves, with a monotonic path between them.

We believe that graphical house allocation, unlike minimum linear arrangements, is NP-hard on trees. It would be remarkable if the structural characterization holds for our problem, but the polynomial-time algorithm does not work. We relegate answering this to future work.

\section{Disconnected Graphs}\label{sec:disconnected}

In this section, we consider disconnected graphs, starting with a structural characterization, and then using that to obtain upper bounds for several natural classes of disconnected graphs that have lower bounds in Section \ref{sec:lowerbounds}.

\subsection{A Structural Characterization}

We start by remarking that Proposition \ref{prop:mlawlog} is \emph{false} in our setting, and so we can no longer assume our graph is connected without loss of generality. For instance, consider an instance when the underlying graph $G$ is a disjoint union of an edge and a triangle. The two valuation intervals in Figure \ref{fig:disjointpathcyclecounter} yield very different optimal structures for this same instance.
\begin{figure}
    \centering
    \small
    \begin{tikzpicture}
    \xdef\eps{0.05}
        \foreach \x in {1,2,14,15,16}
            {
                \coordinate (A\x) at ($(\x/2,0)$) {};
                \draw ($(A\x)+(0,3pt)$) -- ($(A\x)-(0,3pt)$);
            }
            \draw[line width=0.5 mm] (0.5,0) -- (8,0);
            \draw[color=thechosenone,line width=0.5 mm] (0.5+\eps,-0.25) -- (1-\eps,-0.25);
            \draw[color=thechosenone,line width=0.5 mm] (7+\eps,-0.25) -- (7.5-\eps,-0.25);
            \draw[color=thechosenone,line width=0.5 mm] (7.5+\eps,-0.25) -- (8-\eps,-0.25);
            \draw[color=thechosenone,line width=0.5 mm] (7+\eps,-0.5) -- (8-\eps,-0.5);
        \foreach \x in {1,7,8,9,16}
            {
                \coordinate (A\x) at ($(\x/2,-1.5)$) {};
                \draw ($(A\x)+(0,3pt)$) -- ($(A\x)-(0,3pt)$);
            }
            \draw[line width=0.5 mm] (0.5,-1.5) -- (8,-1.5);
            \draw[color=thechosenone,line width=0.5 mm] (0.5+\eps,-1.75) -- (8-\eps,-1.75);
            \draw[color=thechosenone,line width=0.5 mm] (3.5+\eps,-2) -- (4-\eps,-2);
            \draw[color=thechosenone,line width=0.5 mm] (4+\eps,-2) -- (4.5-\eps,-2);
            \draw[color=thechosenone,line width=0.5 mm] (3.5+\eps,-2.25) -- (4.5-\eps,-2.25);
    \end{tikzpicture}
    \caption{For the valuation interval on top, the optimal allocation to $P_2 + C_3$ is to give the two low-valued houses to the edge, and to give the three high-valued houses to the triangle. This is the only allocation where the envy is negligible. For the valuation interval on the bottom, the optimal allocation to $P_2 + C_3$ is to give the two extreme-valued houses to the edge, and the cluster in the middle to the triangle. Any other allocation has to count one of the long halves of the interval multiple times, and is therefore strictly suboptimal. This is an instance where we see one of the connected components being ``split'' by another in the valuation interval.}
    \label{fig:disjointpathcyclecounter}
\end{figure}
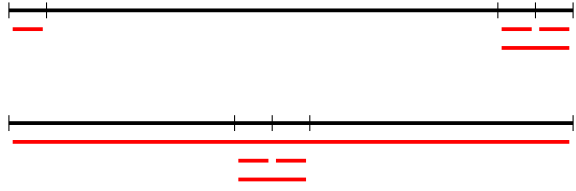

We remark that this is a major departure from the linear arrangement problem, as the spacing of the values along the valuation interval becomes a key factor in the structure of optimal allocations. Recall from Proposition \ref{prop:mlawlog} that in the minimum linear arrangement problem, in an optimal allocation, the connected components of a graph take the houses in contiguous subsets along the valuation interval. The example in Figure \ref{fig:disjointpathcyclecounter} shows that this is not necessarily true in our problem, and therefore serves as a motivation to classify disconnected graphs according to whether they have this property or not. We call the relevant property \emph{separability}, defined as follows.


\begin{restatable}[Splitting]{definition}{defcontiguous}\label{def:contiguous}
Let $G_1 = (N_1, E_1)$ and $G_2 = (N_2, E_2)$ be two of the connected components of $G = (N, E)$, and fix an arbitrary allocation $\pi$. For $i \in \{1, 2\}$, let $\pi(G_i)$ be the set of values given to vertices in $G_i$. We say \emph{$G_1$ splits $G_2$ in $\pi$} if the values of $\pi(G_1)$ form a contiguous subset of the values in $\pi(G_1) \cup \pi(G_2)$.
\end{restatable}

\begin{restatable}[Separability]{definition}{defseparable}\label{def:separable}
Let $G$ be a disconnected graph with connected components $G_1, \ldots, G_k$. Then,
\begin{enumerate}
    \item $G$ is \emph{separable} if there exists an ordering $G_1, \ldots, G_k$ of the components where, for all valuation intervals, there is an optimal allocation where for all $1 \leq i < j \leq k$, $G_i$ splits $G_j$.
    \item $G$ is \emph{strongly separable} if, in addition, $G_j$ also splits $G_i$. Note that this is only possible if some optimal allocation assigns contiguous subsets of values to all connected components.
    \item $G$ is \emph{inseparable} if there is a valuation interval where for each optimal allocation $\pi$, there are components $G_1$ and $G_2$ with $u, u' \in \pi(G_1)$ and $v, v' \in \pi(G_2)$ such that $u < v < u' < v'$.
\end{enumerate}
A class $\mathcal{A}$ of graphs is separable (resp. strongly separable) if every graph in it is separable (resp. strongly separable). Conversely, $\mathcal{A}$ is inseparable if it contains an inseparable graph.
\end{restatable}

We note that a graph $G$ is inseparable if and only if it is not separable, 
and furthermore, it is strongly separable only if it is separable. 

For minimum linear arrangements, all disconnected graphs are strongly separable, by Proposition \ref{prop:mlawlog}. In contrast, for our problem, Figure~\ref{fig:disjointpathcyclecounter} already provides an example of a graph that is not separable (equivalently, inseparable). We discuss several examples of strongly separable graphs in our problem in Section~\ref{subsec:Disjoint-Unions-Paths-Cycles-Stars}; in particular, disjoint unions of paths (respectively, cycles or stars) satisfy strong separability.

Our formulation of separability and strong separability has an immediate algorithmic consequence. Specifically, if $G$ is strongly separable and has $k$ connected components, if we have an efficient algorithm on each connected component individually, then we have an FPT algorithm on $G$, parameterized by $k$. Similarly, if $G$ is separable, if we have an efficient algorithm on each connected component, then we have an XP algorithm on $G$. 

It is not immediately obvious that there are separable graphs that are not strongly separable. 
Figure \ref{fig:disjointpathcyclecounter} shows an example of such a graph (Theorem~\ref{thm:disjointcliques} proves separability). We will see more examples of this later, but we remark that there are even separable forests that are not strongly separable (Figure \ref{fig:spannerpoint}). We state this formally here, relegating the proof once again to Appendix \ref{apdx:disconnected}.

\begin{restatable}{proposition}{propseparablenotstrong}\label{propseparablenotstrong}
There exists a separable forest that is not strongly separable.
\end{restatable}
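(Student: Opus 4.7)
The plan is to exhibit a specific forest $F$ (as depicted in Figure~\ref{fig:spannerpoint}) together with a particular valuation interval, and verify both claims by direct envy computation over all possible allocations.

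To establish that $F$ is not strongly separable, I would select a valuation interval $I$ featuring a tight central cluster of values flanked by extremal values placed far from the cluster, with the cluster size tuned to equal the number of vertices of one component, which I will call $T_1$. Enumerating all partitions of the values of $I$ among the components of $F$ and computing the envy of each partition using the structural characterizations from Section~\ref{sec:connected}, I would verify that the optimal allocation places $T_1$ on the central cluster and assigns the remaining (smaller) component the non-contiguous extremal values. The key inequality is that the total envy of this cluster-plus-extremes allocation is strictly less than the envy of any partition into two contiguous value subsets. This witnesses a valuation interval on which no optimal allocation gives all components contiguous subsets of values, so $F$ is not strongly separable.

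To establish that $F$ is nonetheless separable, I would fix the ordering $(T_1, T_2, \ldots)$ of the components with $T_1$ first, and show that for every valuation interval, some optimal allocation assigns $T_1$ a contiguous subset of $\pi(T_1) \cup \pi(T_j)$ for each $j > 1$. The argument is a swap-based improvement: starting from any optimal allocation in which $T_1$'s values are non-contiguous (so some value $w$ assigned to $T_j$ lies strictly between two $T_1$-values in sorted order), I would swap $w$ with the adjacent $T_1$-value in sorted order and verify that the total envy weakly decreases. Iterating terminates with an optimal allocation in which $T_1$'s values are contiguous within every pairwise union, as required.

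The main obstacle is the swap step, which requires bounding the change in the min-envy of $T_1$ under a single-value replacement. For the specific tree from Figure~\ref{fig:spannerpoint}, this bound follows from a direct case analysis using the explicit envy formulas of Section~\ref{sec:connected}; however, it is delicate because, unlike for paths and stars, the min-envy of a general tree on a value set does not admit a simple closed form, so the bookkeeping of how the envy of each component reacts to the swap cannot be reduced to a single scalar invariant such as the span.
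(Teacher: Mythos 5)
Your witness for the failure of strong separability is workable in spirit, but the other half of the proposition --- that the forest is nonetheless \emph{separable} --- is where your proposal has a genuine gap, and it is a gap the paper's construction is specifically designed to avoid. You propose to prove separability by a swap argument: take any optimal allocation in which $T_1$'s values are non-contiguous, swap an interloping value into $T_1$, and claim the envy weakly decreases. But the existence of \emph{inseparable} forests (Proposition~\ref{propinseparable}, Figure~\ref{fig:degreeseparated}) shows that no such monotone swap argument can work for forests in general; whether it works for your particular forest is exactly the delicate point, and you concede you have not carried out the case analysis. Moreover, the nontrivial component of the forest in Figure~\ref{fig:spannerpoint} is neither a path, star, nor cycle, so the ``explicit envy formulas of Section~\ref{sec:connected}'' you invoke do not exist for it. The paper sidesteps all of this: its forest has a \emph{single isolated vertex} as one component, and under the ordering that places the singleton first, separability is immediate --- a one-element value set is always a contiguous subset of any union --- with no swap argument needed.

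There is also a mismatch in your witness for non-strong-separability. If you genuinely use the forest of Figure~\ref{fig:spannerpoint}, the ``remaining (smaller) component'' is that isolated vertex, which receives exactly one value and therefore can never be assigned ``non-contiguous extremal values.'' The paper's valuation interval instead forces the \emph{larger} six-vertex component to occupy the two extreme clusters (one cluster per degree-$3$ vertex), so that it is split by the singleton's central value; the lower bound is a coarse counting argument (any allocation placing both degree-$3$ vertices in one cluster, or a degree-$3$ vertex at the center, must cover a long subinterval at least twice, giving envy at least $3/2$ versus roughly $1$ for the intended allocation), not an enumeration over all allocations. If you instead intend a forest with two multi-vertex components so that your ``central cluster to $T_1$, extremes to the rest'' picture makes sense, then the separability half becomes genuinely nontrivial and your argument for it is incomplete.
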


\begin{figure}[ht]
    \centering
    %
    \begin{tikzpicture}
        \tikzset{mynode/.style = {shape=circle,draw,inner sep=1.5pt}}
        \tikzset{edge/.style = {solid}}
        \node[mynode] (1) at (0,0) {};
        \node[mynode] (2) at (-0.4,0.5) {};
        \node[mynode] (3) at (-0.4,-0.5) {};
        \node[mynode] (4) at (2,0) {};
        \node[mynode] (5) at (2.4,0.5) {};
        \node[mynode] (6) at (2.4,-0.5) {};
        \node[mynode] (7) at (4,0) {};
        \draw (1) -> (2);
        \draw (1) -> (3);
        \draw (1) -> (4);
        \draw (4) -> (5);
        \draw (4) -> (6);
        \draw ($(0,-1.5)+(0,3pt)$) -- ($(0,-1.5)-(0,3pt)$);
        \draw ($(0.1,-1.5)+(0,3pt)$) -- ($(0.1,-1.5)-(0,3pt)$);
        \draw ($(0.2,-1.5)+(0,3pt)$) -- ($(0.2,-1.5)-(0,3pt)$);
        \draw ($(0.3,-1.5)+(0,3pt)$) -- ($(0.3,-1.5)-(0,3pt)$);
        \draw ($(0.4,-1.5)+(0,3pt)$) -- ($(0.4,-1.5)-(0,3pt)$);
        \draw ($(0.5,-1.5)+(0,3pt)$) -- ($(0.5,-1.5)-(0,3pt)$);
        \draw ($(4,-1.5)+(0,3pt)$) -- ($(4,-1.5)-(0,3pt)$);
        \draw[line width=0.5 mm] (0,-1.5) -- (4,-1.5);
        %
        \draw[color=thechosenone,line width=0.5 mm] (0,-1.95) -- (0.2,-1.95);
        \draw[color=thechosenone,line width=0.5 mm] (0.1,-1.85) -- (0.2,-1.85);
        \draw[color=thechosenone,line width=0.5 mm] (0.2,-1.75) -- (0.3,-1.75);
        \draw[color=thechosenone,line width=0.5 mm] (0.3,-1.95) -- (0.5,-1.95);
        \draw[color=thechosenone,line width=0.5 mm] (0.3,-1.85) -- (0.4,-1.85);
        \draw[color=thechosenone,line width=0.5 mm] (3.95,-1.75) -- (4.05,-1.75);
        \draw ($(0,-2.5)+(0,3pt)$) -- ($(0,-2.5)-(0,3pt)$);
        \draw ($(0.1,-2.5)+(0,3pt)$) -- ($(0.1,-2.5)-(0,3pt)$);
        \draw ($(0.2,-2.5)+(0,3pt)$) -- ($(0.2,-2.5)-(0,3pt)$);
        \draw ($(2,-2.5)+(0,3pt)$) -- ($(2,-2.5)-(0,3pt)$);
        \draw ($(3.8,-2.5)+(0,3pt)$) -- ($(3.8,-2.5)-(0,3pt)$);
        \draw ($(3.9,-2.5)+(0,3pt)$) -- ($(3.9,-2.5)-(0,3pt)$);
        \draw ($(4,-2.5)+(0,3pt)$) -- ($(4,-2.5)-(0,3pt)$);
        \draw[line width=0.5 mm] (0,-2.5) -- (4,-2.5);
        %
        \draw[color=thechosenone,line width=0.5 mm] (0.1,-2.75) -- (0.2,-2.75);
        \draw[color=thechosenone,line width=0.5 mm] (0,-2.85) -- (0.2,-2.85);
        \draw[color=thechosenone,line width=0.5 mm] (1.95,-2.75) -- (2.05,-2.75);
        \draw[color=thechosenone,line width=0.5 mm] (3.8,-2.75) -- (3.9,-2.75);
        \draw[color=thechosenone,line width=0.5 mm] (3.8,-2.85) -- (4,-2.85);
        \draw[color=thechosenone,line width=0.5 mm] (0.2,-2.95) -- (3.8,-2.95);
    \end{tikzpicture}
    \caption{Example of a separable forest that is not strongly separable. The forest is trivially separable. For the bottom valuation line, an optimal allocation must allocate the extreme clusters in the interval to the larger connected component. See Appendix \ref{apdx:disconnected} for a detailed proof.}
    \label{fig:spannerpoint}
\end{figure}
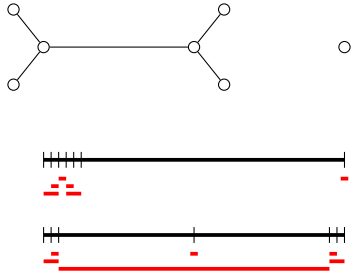

Less obviously, \emph{inseparable forests} exist, as shown by the following proposition (and Figure \ref{fig:degreeseparated}), whose proof is in Appendix \ref{apdx:disconnected}.

\begin{restatable}{proposition}{propinseparable}\label{propinseparable}
There exists an inseparable forest.
\end{restatable}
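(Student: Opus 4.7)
The plan is to exhibit an explicit forest $F$ (depicted in Figure~\ref{fig:degreeseparated}) together with a carefully designed valuation interval, and then verify that every optimal allocation for this instance exhibits alternating interleaving between two of its components, in the sense of Definition~\ref{def:separable}.

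The first step is to assemble the forest from components of contrasting structural types --- the figure's name suggests pairing a high-degree component like a star with lower-degree components such as paths or edges. Such a pairing is natural because the corresponding optimal allocations depend on the assigned values in qualitatively different ways: paths demand contiguous sorted values (Theorem~\ref{thm:path}), while stars only constrain which value sits at the center (Theorem~\ref{thm:star}) and are otherwise indifferent to the order of leaves. The valuation interval would then consist of several tight clusters of values together with one or two isolated extreme values, with cluster sizes carefully chosen so that no component can absorb an entire cluster (with a few outliers) without forcing the others into awkward combinations.

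With the instance fixed, the proof would proceed by case analysis over the \emph{type} of an allocation, parameterizing allocations by how each component's values are distributed across the clusters. For each candidate \emph{non}-interleaving type (i.e., one where every pair of components has at least one component forming a contiguous subset of the pairwise union), we use the closed-form envy expressions from Section~\ref{sec:connected} to compute or upper bound the total envy, and compare it against a specific interleaving allocation whose envy we compute explicitly. Showing that this interleaving allocation is strictly better would reduce to a handful of numerical inequalities in the cluster widths and inter-cluster gaps that can be enforced by design.

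The main obstacle is controlling the combinatorial blow-up of allocation types as the number of components grows. A cleaner alternative is a local exchange argument: assume for contradiction that some optimal allocation is \emph{not} of the interleaving type, locate a pair of components whose values respect the splitting property with respect to each other, and describe a swap of a single pair of values between them that strictly decreases the total envy. Carrying out such a swap is particularly clean when one of the components is a star (where permutations of leaf labels are free and only the center matters) and the other is a path (where swaps alter envy only through changes in the extremes of the path's value set), which reinforces why the \emph{degree-separated} construction is a natural choice.
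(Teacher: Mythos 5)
There is a genuine gap here: for an existence statement, the proof \emph{is} the construction, and you never actually produce one. You correctly identify the high-level target (a concrete forest plus a concrete valuation interval such that \emph{every} optimal allocation interleaves two components), but everything after that is a plan with placeholders. More importantly, the specific building blocks you gesture at --- one high-degree star paired with paths or edges --- are unlikely to yield inseparability. A single star is too flexible: by Theorem~\ref{thm:star} only its center matters, so for essentially any clustered valuation interval one can place the star's center at the median of a cluster and hand the leftover values to the low-degree components without paying for interleaving; working through a star-plus-edge example with one cluster and two outliers, the separated and interleaved allocations come out tied or the separated one wins. The paper's construction avoids this by making \emph{each} component a pair of stars joined at their centers, with four pairwise-distinct star sizes $s_1<s_2<s_3<s_4$ matched to four clusters of sizes $s_i+1$. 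The conditions $|s_i-s_j|\ge 3$ and $s_i+s_j>s_k+2$ let a counting argument pin each center $x_i$ into cluster $i$ (any other placement forces three edges across a large gap, exceeding the benchmark envy of roughly $4/3$). Since one component owns centers $x_1,x_3$ and the other owns $x_2,x_4$, the forced cluster order $1<2<3<4$ produces the pattern $u<v<u'<v'$ in every optimal allocation. Your proposal is missing precisely this idea --- that each component needs \emph{two} rigid anchors forced into non-adjacent clusters.

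Your fallback ``local exchange'' argument also points in the wrong direction. In this paper, swap arguments are the tool for proving \emph{separability} (un-interleaving two components strictly improves envy, as in Theorems~\ref{thm:disjointpaths} and~\ref{thm:disjointstars}); to prove \emph{inseparability} you must show the opposite, namely that every non-interleaved allocation is strictly beaten, and you give no concrete swap or lower bound that would accomplish this for your unspecified instance. The paper instead argues globally: it exhibits one good interleaved allocation, then rules out all placements of the four centers that could support a separated optimum.
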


\begin{figure}[ht]
    \centering
    %
    \begin{tikzpicture}
        \tikzset{mynode/.style = {shape=circle,draw,inner sep=1.5pt}}
        \tikzset{edge/.style = {solid}}
        \node[mynode] (1) at (0,0) {};
        \node[mynode] (2) at (-0.4,0.5) {};
        \node[mynode] (3) at (-0.5,0.2) {};
        \node (0) at (-0.4,0) {$\vdots$};
        \node (0) at (-0.4,0.8) {$s_1$};
        \node[mynode] (4) at (-0.4,-0.5) {};
        \node[mynode] (5) at (2,0) {};
        \node[mynode] (6) at (2.4,0.5) {};
        \node[mynode] (7) at (2.5,0.2) {};
        \node (0) at (2.4,0) {$\vdots$};
        \node (0) at (2.4,0.8) {$s_3$};
        \node[mynode] (8) at (2.4,-0.5) {};
        \draw (1) -> (2);
        \draw (1) -> (3);
        \draw (1) -> (4);
        \draw (5) -> (6);
        \draw (5) -> (7);
        \draw (5) -> (8);
        \draw (1) -> (5);
        %
        %
        \node[mynode] (1) at (4,0) {};
        \node[mynode] (2) at (3.6,0.5) {};
        \node[mynode] (3) at (3.5,0.2) {};
        \node (0) at (3.6,0) {$\vdots$};
        \node (0) at (3.6,0.8) {$s_2$};
        \node[mynode] (4) at (3.6,-0.5) {};
        \node[mynode] (5) at (6,0) {};
        \node[mynode] (6) at (6.4,0.5) {};
        \node[mynode] (7) at (6.5,0.2) {};
        \node (0) at (6.4,0) {$\vdots$};
        \node (0) at (6.4,0.8) {$s_4$};
        \node[mynode] (8) at (6.4,-0.5) {};
        \draw (1) -> (2);
        \draw (1) -> (3);
        \draw (1) -> (4);
        \draw (5) -> (6);
        \draw (5) -> (7);
        \draw (5) -> (8);
        \draw (1) -> (5);
        \draw ($(0,-1.5)+(0,3pt)$) -- ($(0,-1.5)-(0,3pt)$);
        \draw ($(0.05,-1.5)+(0,3pt)$) -- ($(0.05,-1.5)-(0,3pt)$);
        \draw ($(0.1,-1.5)+(0,3pt)$) -- ($(0.1,-1.5)-(0,3pt)$);
        \draw ($(0.15,-1.5)+(0,3pt)$) -- ($(0.15,-1.5)-(0,3pt)$);
        \node (0) at (0.1,-1.8) {$s_1 + 1$};
        \draw ($(2,-1.5)+(0,3pt)$) -- ($(2,-1.5)-(0,3pt)$);
        \draw ($(2.05,-1.5)+(0,3pt)$) -- ($(2.05,-1.5)-(0,3pt)$);
        \draw ($(2.1,-1.5)+(0,3pt)$) -- ($(2.1,-1.5)-(0,3pt)$);
        \draw ($(2.15,-1.5)+(0,3pt)$) -- ($(2.15,-1.5)-(0,3pt)$);
        \node (0) at (2.1,-1.8) {$s_2 + 1$};
        \draw ($(4,-1.5)+(0,3pt)$) -- ($(4,-1.5)-(0,3pt)$);
        \draw ($(4.05,-1.5)+(0,3pt)$) -- ($(4.05,-1.5)-(0,3pt)$);
        \draw ($(4.1,-1.5)+(0,3pt)$) -- ($(4.1,-1.5)-(0,3pt)$);
        \draw ($(4.15,-1.5)+(0,3pt)$) -- ($(4.15,-1.5)-(0,3pt)$);
        \node (0) at (4.1,-1.8) {$s_3 + 1$};
        \draw ($(6,-1.5)+(0,3pt)$) -- ($(6,-1.5)-(0,3pt)$);
        \draw ($(6.05,-1.5)+(0,3pt)$) -- ($(6.05,-1.5)-(0,3pt)$);
        \draw ($(6.1,-1.5)+(0,3pt)$) -- ($(6.1,-1.5)-(0,3pt)$);
        \draw ($(6.15,-1.5)+(0,3pt)$) -- ($(6.15,-1.5)-(0,3pt)$);
        \node (0) at (6.1,-1.8) {$s_4 + 1$};
        \draw[line width=0.5 mm] (0,-1.5) -- (6.15,-1.5);
        %
        \draw[color=thechosenone,line width=0.5 mm] (0.15,-2.5) -- (4,-2.5);
        \draw[color=thechosenone,line width=0.5 mm] (0.1,-2.35) -- (0.15,-2.35);
        \draw[color=thechosenone,line width=0.5 mm] (0.05,-2.25) -- (0.15,-2.25);
        \draw[color=thechosenone,line width=0.5 mm] (0,-2.15) -- (0.15,-2.15);
        \draw[color=thechosenone,line width=0.5 mm] (4,-2.35) -- (4.05,-2.35);
        \draw[color=thechosenone,line width=0.5 mm] (4,-2.25) -- (4.1,-2.25);
        \draw[color=thechosenone,line width=0.5 mm] (4,-2.15) -- (4.15,-2.15);
        \draw[color=thechosenone,line width=0.5 mm] (2,-2.65) -- (2.15,-2.65);
        \draw[color=thechosenone,line width=0.5 mm] (2.05,-2.75) -- (2.15,-2.75);
        \draw[color=thechosenone,line width=0.5 mm] (2.1,-2.85) -- (2.15,-2.85);
        \draw[color=thechosenone,line width=0.5 mm] (6,-2.65) -- (6.15,-2.65);
        \draw[color=thechosenone,line width=0.5 mm] (6,-2.75) -- (6.1,-2.75);
        \draw[color=thechosenone,line width=0.5 mm] (6,-2.85) -- (6.05,-2.85);
        \draw[color=thechosenone,line width=0.5 mm] (2.15,-3) -- (6,-3);
    \end{tikzpicture}
    \caption{Example of an inseparable forest. Suppose $s_1 < s_2 < s_3 < s_4$, and they satisfy for all $i, j$, $|s_i - s_j| \geq 3$, and for all $i, j, k$, $s_i + s_j > s_k + 2$. Then, an optimal allocation on this instance must allocate the entire cluster of size $s_i + 1$ on the valuation interval to the corresponding star-like cluster of the given forest. See Appendix \ref{apdx:disconnected} for a detailed proof.}
    \label{fig:degreeseparated}
\end{figure}
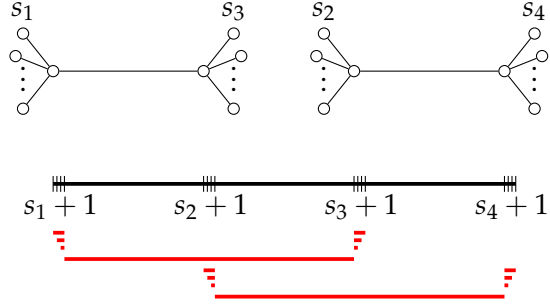

\subsection{Disjoint Unions of Paths, Cycles, and Stars}
\label{subsec:Disjoint-Unions-Paths-Cycles-Stars}

We now move on to algorithmic approaches and characterizations of minimum envy allocations, and start with the setting where $G$ is a disjoint union of paths. Suppose $G = P_{n_1} + \ldots + P_{n_r}$. What does an optimal 
allocation on $G$ look like?

\begin{restatable}{theorem}{thmdisjointpaths}\label{thm:disjointpaths}
Let $G$ be a disjoint union of paths, $P_{n_1} + \ldots + P_{n_r}$. Then, $G$ is strongly separable. Furthermore, in any optimal allocation, within each path, the houses appear in sorted order.
\end{restatable}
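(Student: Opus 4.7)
Plan: The claim has two parts --- strong separability (some optimum assigns a contiguous subset of values to each path) and sortedness (in every optimum, the values along each $P_{n_i}$ are sorted).

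Since $G$ is a disjoint union, $\Envy(\pi, G) = \sum_{i=1}^{r} \Envy(\pi|_{P_{n_i}}, P_{n_i})$. By the telescoping argument used to prove Theorem~\ref{thm:path}, the envy along any single path is at least $\max \pi(P_{n_i}) - \min \pi(P_{n_i})$, with equality iff the assigned values appear in sorted order along the path. Hence every optimum must be sorted within each path, which establishes the sortedness part and reduces the remaining problem to choosing an unordered partition $T_1, \ldots, T_r$ of the value multiset into parts of sizes $n_1, \ldots, n_r$ minimizing $\sum_i(\max T_i - \min T_i)$.

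The core of the plan is an exchange argument on the intervals $I_i = [\min T_i, \max T_i]$. If two intervals $I_i, I_j$ \emph{partially} overlap --- say $\min I_i < \min I_j < \max I_i < \max I_j$ --- then swapping $\max T_i$ with $\min T_j$ leaves $\min T_i$ and $\max T_j$ unchanged, strictly lowers $\max T_i$, and strictly raises $\min T_j$, so $|I_i| + |I_j|$ decreases strictly. Thus in any optimum the family $\{I_i\}$ must be laminar. For the \emph{nested} case $I_i \supset I_j$: if $T_i$ contains any value strictly inside $I_j$, swapping that value with $\min T_j$ preserves $|I_i|$ while strictly shrinking $|I_j|$; otherwise $T_i \cap I_j = \emptyset$, and restricting attention to the two-part subproblem on $T_i \cup T_j$ with prescribed sizes $n_i, n_j$, a short telescoping calculation shows that rearranging $T_i, T_j$ into two contiguous blocks within their union is strictly cheaper (the excess cost of the nested arrangement collapses to a strictly positive sum of gap lengths). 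Iterating these swaps terminates because each one strictly decreases the cost over a finite set of allocations, and the terminal partition has pairwise disjoint intervals --- i.e., is contiguous. This yields strong separability.

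The main technical obstacle is the nested case: a purely local swap can fail to produce strict improvement when the outer part $T_i$ avoids the inner interval $I_j$ entirely. Reducing such a configuration to the two-part contiguity subproblem on $T_i \cup T_j$, where the cost comparison becomes a transparent telescoping identity, is what allows the exchange to go through without a separate global argument.
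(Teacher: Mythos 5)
Your proof is correct and follows essentially the same route as the paper's: reduce the envy of each sorted path to its interval length via Theorem~\ref{thm:path}, then uncross any two interleaving paths by a pairwise exchange. The paper collapses your three cases (partial overlap, nesting with an interior point, nesting without) into the single uniform move of giving one path the $n_i$ lowest and the other the $n_j$ highest values of their combined house sets, which is exactly your final sub-case applied unconditionally.
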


\begin{proof}[Proof Sketch]
In any allocation $\pi$, we may assume by Theorem \ref{thm:path} that the houses allocated to each path appear in sorted order along that path. Next, in an allocation $\pi$, if there is any overlap between two of the paths, then we can remove the overlap by reassigning the houses among just those two paths, and obtain an allocation with strictly less envy.
\end{proof}

The following corollary shows an FPT algorithm on the disjoint union of paths, parameterized by the number of different paths.

\begin{restatable}{corollary}{cordisjointpathsalgorithm}\label{cor:disjointpathsalgorithm}
We can find an optimal house allocation for an instance on an undirected $n$-agent graph $G$ that is the disjoint union of paths in time $O(nr!)$, where $r$ is the number of paths. 
\end{restatable}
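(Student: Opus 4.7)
The plan is to invoke Theorem \ref{thm:disjointpaths} as a structural reduction, leaving only a small brute-force search over orderings of the paths along the valuation interval. Strong separability guarantees that some optimal allocation assigns each path $P_{n_i}$ a contiguous block of values from the sorted list of house values, and within each path the houses appear in sorted order. Consequently, the only remaining decision is the left-to-right order in which the $r$ paths appear on the valuation interval; once this ordering is fixed, the allocation is completely determined up to intra-path reversals, which (by Theorem \ref{thm:path}) do not affect envy.

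First, I would sort the $n$ house values in $O(n \log n)$ time. Then, for each of the $r!$ permutations $\sigma$ of $[r]$, I would compute the envy of the corresponding allocation as follows: path $P_{n_{\sigma(i)}}$ occupies the block of sorted values from position $1 + \sum_{j < i} n_{\sigma(j)}$ to position $\sum_{j \leq i} n_{\sigma(j)}$, and by Theorem \ref{thm:path} its contribution to the total envy is simply the difference between the largest and smallest values in that block. Summing over $i \in [r]$ yields the total envy in $O(r)$ time per permutation (using prefix sums of the $n_i$ to index into the sorted list), or trivially in $O(n)$ time per permutation by a single scan. The algorithm returns the permutation minimizing this total.

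Across all $r!$ permutations, the running time is $O(n \log n + n \cdot r!)$, which is $O(n \cdot r!)$ for all $r$ where $r! \geq \log n$; the remaining small-$r$ regime can be handled either by absorbing the sort into the constant or by a direct $O(n \log n) = O(n)$ claim for $r = O(1)$. The output is an optimal allocation, by strong separability and the uniqueness-up-to-reversal characterization of optimal path allocations from Theorem \ref{thm:path}.

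There is no real obstacle to overcome here, since the hard structural work is done by Theorem \ref{thm:disjointpaths}; the algorithmic step is simply a brute-force enumeration enabled by that theorem. The only conceptual subtlety worth noting is that within each contiguous block, the envy contributed by the path equals exactly the width of the block (as opposed to, e.g., twice the width as in the cycle case of Theorem \ref{thm:cycle}), which is why computing the per-permutation envy reduces to summing $r$ block widths rather than performing any nontrivial optimization inside each path.
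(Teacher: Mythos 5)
Your proof is correct and follows essentially the same route as the paper: invoke Theorem \ref{thm:disjointpaths} to reduce the problem to choosing an ordering of the $r$ paths, enumerate all $r!$ orderings, assign contiguous sorted blocks, and evaluate each in linear (or better) time using the fact that a path's envy on a contiguous block is just the block's width. The only additions beyond the paper's argument are minor bookkeeping (the explicit sort, which the paper's conventions already assume, and the $O(r)$-per-permutation refinement), neither of which changes the substance.
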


\begin{proof}[Proof Sketch]
By Theorem \ref{thm:disjointpaths}, it suffices to find the optimal ordering of the paths. There are $r!$ such orderings, and for each, we can test the envy in linear time.
\end{proof}

If $G$ is a disjoint union of cycles, say $G = C_{n_1} + \ldots + C_{n_r}$, the same theorems characterizing optimal allocations go through, using Theorem \ref{thm:cycle}. We omit the proofs, but state the results formally.

\begin{restatable}{theorem}{thmdisjointcycles}\label{thm:disjointcycles}
Let $G$ be a disjoint union of cycles, $C_{n_1} + \ldots + C_{n_r}$. Then $G$ is strongly separable. Furthermore, in any optimal allocation, within each cycle, the houses appear in the form characterized in Theorem \ref{thm:cycle}.
\end{restatable}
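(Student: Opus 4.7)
The plan is to adapt the strategy from Theorem \ref{thm:disjointpaths}, with Theorem \ref{thm:cycle} in place of Theorem \ref{thm:path}. Structurally, the only difference is that a cycle's minimum envy on a fixed value-set is twice (rather than equal to) the range of that set; since this scales each component's contribution by the same constant factor, the exchange argument still governs the global structure.

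First, I would fix any optimal allocation $\pi$ on $G$ and let $S_i$ denote the set of values assigned to the $i$-th cycle. Since the cycles share no edges with each other, the total envy decomposes as a sum of envies within each cycle. Applying Theorem \ref{thm:cycle} to each $C_{n_i}$ viewed as its own instance with value-set $S_i$ shows that the envy within $C_{n_i}$ is at least $2(\max(S_i) - \min(S_i))$, with equality exactly when the internal arrangement is of the form described in Theorem \ref{thm:cycle}. If any cycle failed to meet this bound, an internal rearrangement keeping $S_i$ fixed would strictly reduce the total envy, contradicting optimality of $\pi$. This establishes the second clause of the theorem, and reduces strong separability to showing that some partition $\{S_i\}$ minimizing $\sum_i (\max(S_i) - \min(S_i))$ consists of contiguous blocks.

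To complete this step, I would reuse the exchange argument from Theorem \ref{thm:disjointpaths}: if any two $S_i, S_j$ have overlapping ranges, redistribute $S_i \cup S_j$ so that $S_i'$ takes the $|S_i|$ smallest values and $S_j'$ takes the remaining ones. This strictly decreases $(\max(S_i) - \min(S_i)) + (\max(S_j) - \min(S_j))$, which one verifies by a gap-counting argument: the range of any value-set equals the sum of the gaps (between consecutive sorted elements of $S_i \cup S_j$) that it covers; the original partition has at least one gap covered by both $S_i$ and $S_j$ by the overlap assumption, while the redistributed partition covers every gap at most once and leaves the split gap uncovered. Iterating across all overlapping pairs produces an optimal allocation whose value-sets are pairwise contiguous, witnessing strong separability.

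The main (though routine) obstacle is the verification of strict decrease in the exchange step; care is needed because the gap removed by redistribution need not coincide with any particular gap double-covered in the original, so one must double-count the gap contributions carefully. More broadly, this plan would work verbatim for any class of connected graphs whose optimal envy on a value-set $S$ equals a fixed positive multiple of $\max(S) - \min(S)$, which is precisely what makes paths and cycles admit such parallel treatments.
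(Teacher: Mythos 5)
Your proposal is correct and follows essentially the same route as the paper, which omits the proof of this theorem precisely because it is the disjoint-paths argument (Theorem \ref{thm:disjointpaths}) with Theorem \ref{thm:cycle} supplying the per-component lower bound of twice the range; your exchange step on two overlapping value-sets is the same reallocation the paper uses there. The gap-counting verification you flag as the main obstacle does go through (under the overlap hypothesis the two component intervals jointly cover every gap of their union at least once and some gap twice, while the redistributed contiguous blocks cover each gap at most once and miss the split gap), so the decrease is strict and any optimal allocation is already separated.
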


\begin{restatable}{corollary}{cordisjointcyclesalgorithm}\label{cor:disjointcyclesalgorithm}
We can find an optimal house allocation for an instance on an undirected $n$-agent graph $G$ that is the disjoint union of cycles in time $O(nr!)$, where $r$ is the number of cycles.
\end{restatable}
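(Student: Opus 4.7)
The plan is to mirror the proof of Corollary \ref{cor:disjointpathsalgorithm}, using Theorem \ref{thm:disjointcycles} in place of Theorem \ref{thm:disjointpaths}. Because $G = C_{n_1} + \cdots + C_{n_r}$ is strongly separable, some optimal allocation assigns a contiguous block of the sorted houses to each cycle. Thus the search collapses from over all $n!$ allocations to a search over only $r!$ orderings of the cycles along the valuation interval: once an ordering $\sigma$ of the components is fixed, the block assigned to $C_{n_{\sigma(j)}}$ is forced to be $\{h_{s_{j-1}+1}, \ldots, h_{s_j}\}$, where $s_j = n_{\sigma(1)} + \cdots + n_{\sigma(j)}$, and Theorem \ref{thm:cycle} fixes the optimal internal layout of that block on the cycle.

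I would first sort the $n$ houses by value in $O(n \log n)$ time (which is absorbed into $O(nr!)$ once $r \geq 2$). Then for each of the $r!$ orderings $\sigma$ of the cycles I would evaluate the total envy using the closed-form expression from Theorem \ref{thm:cycle}: the optimal envy contributed by the $j$-th block is $2\bigl(v(h_{s_j}) - v(h_{s_{j-1}+1})\bigr)$. Summing these gives the envy of the best allocation consistent with the ordering $\sigma$; I keep track of the ordering achieving the minimum. With the prefix sums $s_j$ precomputed, each ordering can be scored in $O(r)$ additional time, but even a naive $O(n)$ scan per ordering fits within the claimed $O(n r!)$ bound.

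After identifying the best ordering $\sigma^\ast$, I would materialize the actual allocation in $O(n)$ time: for each cycle, take its assigned contiguous block and place the houses on the cycle according to Theorem \ref{thm:cycle} (place the extremes on any two vertices and fill the two arcs between them with the remaining values in sorted order). The total runtime is $r! \cdot O(n) + O(n) = O(nr!)$, matching the statement.

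There is essentially no substantive obstacle here: the work has already been done by the structural results. The only thing to note carefully is that strong separability (Theorem \ref{thm:disjointcycles}) guarantees that restricting attention to contiguous block assignments loses no optimality, so exhaustively enumerating orderings truly suffices; and that within a block, Theorem \ref{thm:cycle} makes the internal envy depend only on the block's extreme values, which is what makes the per-ordering scoring step so cheap.
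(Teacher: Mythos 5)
Your proposal is correct and follows essentially the same route as the paper, which simply mirrors the argument for disjoint unions of paths: strong separability (Theorem \ref{thm:disjointcycles}) reduces the search to the $r!$ orderings of contiguous blocks, each scored via the closed-form envy $2\bigl(v(h_{\max}) - v(h_{\min})\bigr)$ per cycle from Theorem \ref{thm:cycle}. Your observation that each ordering can be scored in $O(r)$ time with prefix sums is a harmless refinement; the naive $O(n)$ scan already meets the stated bound.
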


We remark here that if $t$ is the number of different path (or cycle) \emph{lengths}, then a dynamic programming algorithm (Proposition \ref{prop:distinctpathlengths}) computes the minimum envy allocation in time $O(n^{t+1})$. Therefore, combining the two approaches, we have a time complexity of $O(\min(nr!, n^{t+1}))$. An immediate application of this dynamic programming algorithm is for graphs with degree at most one. These graphs are special case of the disjoint union of paths where the path length can either be $0$ or $1$. 

\begin{restatable}{corollary}{cordegreeonealgorithm}\label{cor:degree1algorithm}
We can find an optimal house allocation for an instance on an undirected $n$-agent graph $G$ with maximum degree $1$ in time $O(n^3)$.

\end{restatable}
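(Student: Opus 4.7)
The plan is to directly invoke Proposition \ref{prop:distinctpathlengths}. Observe that any undirected graph $G$ of maximum degree at most $1$ is precisely a disjoint union of isolated vertices (which are $P_1$'s) and matching edges (which are $P_2$'s), and hence a disjoint union of paths with at most $t = 2$ distinct path lengths. Substituting $t = 2$ into the $O(n^{t+1})$ bound stated in Proposition \ref{prop:distinctpathlengths} immediately yields the claimed $O(n^3)$ running time.

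For concreteness, I would also sketch why the underlying dynamic program is correct in this special case. By Theorem \ref{thm:disjointpaths}, some optimal allocation assigns a contiguous block of sorted values to each connected component, and within each component (here, each edge) the two houses appear in sorted order. Let $a$ be the number of matching edges and $b$ be the number of isolated vertices, so that $2a + b = n$. Processing the sorted values $v(h_1) < \cdots < v(h_n)$ from left to right, at each step one decides whether the next contiguous block consists of a single value (assigned to some isolated vertex, contributing $0$ envy and consuming $1$ value) or two consecutive values (assigned to some matching edge, contributing $v(h_{i+2}) - v(h_{i+1})$ envy and consuming $2$ values). The DP state is the pair (current position in the sorted list, number of edge-blocks placed so far), giving $O(n^2)$ states with $O(1)$ transitions, which is comfortably within the $O(n^3)$ bound guaranteed by the proposition.

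There is essentially no obstacle, since the statement is an immediate specialization of Proposition \ref{prop:distinctpathlengths}. The only point requiring verification is that the hypothesis of that proposition holds, namely that a graph with maximum degree at most $1$ is a disjoint union of paths with exactly two distinct path lengths (lengths $0$ and $1$), and this is immediate from the degree constraint. If either $a = 0$ or $b = 0$, then only one path length is present, so $t = 1$ and the bound is even $O(n^2)$; in the mixed case the bound $O(n^3)$ suffices.
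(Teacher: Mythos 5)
Your proposal is correct and matches the paper's own argument: the paper likewise treats a maximum-degree-$1$ graph as a disjoint union of paths of lengths $0$ and $1$ and solves it by the same left-to-right dynamic program over sorted values (stated in the appendix as a recursion $\varphi(k',m',\ell)$ over the number of edges, isolated vertices, and the prefix of houses used), which is exactly the $t=2$ instantiation of Proposition \ref{prop:distinctpathlengths}. Nothing further is needed.
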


Perhaps remarkably, there is no particularly elegant characterization when the underlying graph $G$ is a disjoint union of paths \emph{and} cycles, even when there is only one path and one cycle. This is a consequence of Figure \ref{fig:disjointpathcyclecounter}.

Finally, a similar result holds for disjoint unions of stars, though the proof is somewhat different. We omit the proof of Corollary~\ref{cor:disjointstarsalgorithm}, which follows from Theorem~\ref{thm:disjointstars}.

\begin{restatable}{theorem}{thmdisjointstars}\label{thm:disjointstars}
Let $G$ be a disjoint union of stars, $K_{1, n_1} + \ldots + K_{1, n_r}$. Then $G$ is strongly separable. Furthermore, in any optimal allocation, within each star, the houses appear in the form characterized in Theorem \ref{thm:star}.
\end{restatable}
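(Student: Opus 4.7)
The plan is to use Theorem~\ref{thm:star} to reduce the problem to partitioning houses into star-sized sets, and then establish strong separability via a gap-crossing concavity argument combined with pairwise exchange. By Theorem~\ref{thm:star} applied to each star individually, once the multiset of houses $S_j$ assigned to star $K_{1,n_j}$ is fixed, the minimum-envy within-star arrangement places a median of $S_j$'s values at the center; any optimal allocation must respect this form within each star (else one could strictly reduce that star's envy without affecting the other stars), which already yields the second sentence of the theorem. It therefore suffices to show that some partition $(S_1, \ldots, S_r)$ with $|S_j| = n_j + 1$ minimizing the total envy assigns each star a contiguous block of values on the valuation interval.

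I would next reformulate the total envy via gap crossings. Let $v_1 < \cdots < v_n$ be the sorted values with gap widths $w_i := v_{i+1} - v_i$, and for each star $j$ centered at its median, let $a_j(i) := |\{h \in S_j : v(h) \leq v_i\}|$. The number of edges of star $j$ crossing the $i$-th gap equals $f_j(a_j(i))$, where $f_j(a) := \min(a,\, n_j + 1 - a)$ is a concave tent function, so the total envy becomes $\sum_{i=1}^{n-1} w_i \sum_{j=1}^{r} f_j(a_j(i))$ subject to $\sum_j a_j(i) = i$ and $0 \le a_j(i) \le n_j + 1$. At each fixed gap $i$, the inner sum is concave in the vector $(a_j(i))_j$ over this polytope, hence minimized at a vertex where at most one $a_j(i)$ lies in the interior $(0, n_j + 1)$ and the rest sit at $\{0, n_j + 1\}$; a contiguous allocation produces exactly this vertex pattern at every gap simultaneously, with the unique ``active'' star at gap $i$ being the one whose contiguous block straddles $v_i$ and $v_{i+1}$.

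To promote the per-gap observation to a global statement, I would prove and iterate a two-star lemma: for any two stars $p, q$ with the houses on all other stars held fixed, the optimal split of $S_p \cup S_q$ between them is contiguous along the valuation interval. Starting from any optimal allocation, I would repeatedly find two stars whose value-ranges overlap and repartition them using this lemma; each such repartition is non-worsening, and combined with a secondary monovariant (such as a lex-ordering on the sequence of block minima across stars) it strictly descends on a well-founded quantity at each step and terminates at a globally contiguous optimal allocation. Together with the first step, this yields both strong separability and the within-star form from Theorem~\ref{thm:star}.

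The main obstacle is a rigorous proof of the two-star lemma, since the per-gap concavity bound alone is not tight: the two extreme choices for $a_p(i)$ may minimize $c_i = f_p(a_p(i)) + f_q(a_q(i))$ at different gaps, so summing per-gap minima does not directly lower-bound any single contiguous ordering's total envy. Instead, one must compare $E(S_p) + E(S_q)$ across partitions more directly---for instance, by writing this sum as $\mathrm{const} - 2(B_p + B_q) - (M_p + M_q)$ via the top/bottom/median decomposition of each star's house set, and then showing that a contiguous split of $S_p \cup S_q$ maximizes $2(B_p + B_q) + (M_p + M_q)$ among all valid partitions of the given sizes, with a case analysis on the parities of $n_p, n_q$ handling the median terms.
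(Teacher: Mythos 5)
Your high-level strategy---reduce to the question of which houses each star receives via Theorem~\ref{thm:star}, then separate stars pairwise and iterate with a termination argument---is exactly the paper's strategy, and your gap-crossing reformulation with the tent functions $f_j(a)=\min(a,\,n_j+1-a)$ is a correct and rather nice way to see why contiguity \emph{should} be optimal. But you have correctly diagnosed your own gap: the two-star lemma is the crux of the theorem, and you do not prove it. The route you propose for it (maximize $2(B_p+B_q)+(M_p+M_q)$ over partitions of $S_p\cup S_q$) is harder than it needs to be, because the medians $M_p, M_q$ and the identities of the bottom halves $B_p, B_q$ all move as the partition changes, so the parity case analysis you defer is genuinely unpleasant, and it is not evident that it closes.

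The paper's proof of the two-star step avoids this by never re-optimizing the centers and the spokes at the same time. Given any allocation, first apply Theorem~\ref{thm:star} within each star so that each center holds a median of its own set (your first step). Then, with both center values $c_p \le c_q$ \emph{held fixed}, reassign the $n_p+n_q$ spoke values so that the $n_p$ smallest go to the star centered at $c_p$ and the $n_q$ largest to the star centered at $c_q$. Since the two stars' total envy is $\sum_{s\in \mathrm{spokes}(p)}|v(s)-c_p|+\sum_{s\in \mathrm{spokes}(q)}|v(s)-c_q|$, each transposition moving a smaller spoke to the smaller center weakly decreases the objective by the elementary inequality $|x-c_p|+|y-c_q|\le |y-c_p|+|x-c_q|$ for $x\le y$ and $c_p\le c_q$; no median bookkeeping is needed. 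Re-applying Theorem~\ref{thm:star} and repeating, with strict decrease whenever the two stars' value sets interleave, terminates at a contiguous allocation and shows every non-contiguous allocation is strictly suboptimal. If you want to salvage your outline, replace the $2B+M$ computation with this fixed-center exchange as the proof of your two-star lemma; the rest of your argument then goes through.
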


\begin{proof}[Proof Sketch]
We can ``separate'' any two stars while improving  our objective. Given any allocation $\pi$, suppose one of the stars $K_{1, n_1}$ splits another star $K_{1, n_2}$ (but not vice-versa). Suppose their centers receive houses $h_i \prec h_j$ respectively. We show that exchanging $K_{1, n_1}$'s most-valuable house for $K_{1, n_2}$'s least-valuable house strictly decreases envy. We apply this repeatedly until the two stars split each other. We apply Theorem \ref{thm:star} to every star at the beginning of the process and after every swap to maintain the invariant that each star has a median house at its center.
\end{proof}

\begin{restatable}{corollary}{cordisjointstarsalgorithm}\label{cor:disjointstarsalgorithm}
We can find an optimal house allocation for an instance on an undirected $n$-agent graph $G$ that is the disjoint union of stars in time $O(nr!)$, where $r$ is the number of stars.
\end{restatable}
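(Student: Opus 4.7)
The plan is to convert the structural result of Theorem \ref{thm:disjointstars} into a direct brute-force algorithm. By that theorem, $G$ is strongly separable, so some optimal allocation assigns a contiguous block of houses along the valuation interval to each star. Moreover, Theorem \ref{thm:star} tells us exactly how houses should be arranged within each star once the block is fixed: the median goes at the center and the remaining houses go on the spokes in any order. Consequently, the only nontrivial degree of freedom in producing a canonical optimal allocation is the left-to-right ordering of the $r$ stars along the valuation interval; once this ordering $\sigma$ is fixed, the block sizes $n_{\sigma(1)}+1, \ldots, n_{\sigma(r)}+1$ completely determine the partition of the sorted houses.

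First I would sort the houses by value in $O(n \log n)$ time as a one-time preprocessing step. Then I would enumerate all $r!$ permutations $\sigma$ of the stars. For each $\sigma$, I would walk through the sorted list and cut off consecutive blocks of sizes $n_{\sigma(1)}+1, n_{\sigma(2)}+1, \ldots, n_{\sigma(r)}+1$, assigning the median of each block to the center of the corresponding star and the rest of the block to its spokes. The total envy along all edges can be tallied in a single $O(n)$ sweep through the sorted list, using the closed-form expression from Theorem \ref{thm:star} for each star's contribution. I would maintain the ordering achieving the minimum total envy and return the corresponding allocation.

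Correctness is immediate from Theorems \ref{thm:disjointstars} and \ref{thm:star}: every optimal allocation (up to irrelevant rearrangements on individual spokes, which do not affect envy) corresponds to one of the $r!$ orderings we enumerate, so the minimum over our enumeration is indeed the global minimum. For the running time, each of the $r!$ orderings requires $O(n)$ work, giving $O(n \cdot r!)$ overall; the initial $O(n \log n)$ sort is absorbed into this bound whenever $r \geq 2$, and the $r=1$ case is a single star handled directly in $O(n)$ by linear-time median selection.

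Honestly, there is no real obstacle here beyond invoking the prior theorems; the statement is essentially a corollary that packages the structural characterization into an algorithmic bound. The only minor care is to ensure the per-ordering envy computation is genuinely linear rather than quadratic — this is easily arranged since the sorted order is fixed and each block's envy is a simple difference of partial sums indexed by its block boundaries.
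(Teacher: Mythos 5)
Your proposal is correct and matches the paper's intended argument: the paper omits this proof but gives the identical reasoning for the analogous path corollary (enumerate the $r!$ orderings of components, assign contiguous blocks per the strong separability of Theorem \ref{thm:disjointstars}, arrange each block per Theorem \ref{thm:star}, and evaluate each candidate's envy in $O(n)$ time). Your additional care about the one-time sort and the linear-time per-ordering evaluation is a nice touch but does not change the approach.
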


\subsection{Disjoint Unions of Cliques}

We now turn our attention to disjoint unions of cliques. We first demonstrate that when all cliques have the same size, we maintain strong separability. 

\begin{restatable}{theorem}{thmdisjointequicliques}\label{thm:disjointequicliques}
Let $G$ be a disjoint union of cliques with equal sizes, $K^1_{n/r} + \ldots + K^r_{n/r}$. Then, $G$ is strongly separable.
\end{restatable}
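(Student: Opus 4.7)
The plan is to show that some optimal allocation on $G$ assigns a contiguous block of $m := n/r$ consecutive values (in sorted order) to each clique, which immediately yields strong separability under any fixed ordering of the components. The useful reduction to keep in mind throughout is that since each clique is a complete graph, its internal envy depends only on the \emph{multiset} of values placed on its vertices, not on the internal arrangement; writing $f(V) := \sum_{u,v \in V,\ u<v}(v - u)$, the total envy under $\pi$ is $\sum_c f(\pi(c))$, summed over the cliques $c$.

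The heart of the argument will be a two-clique lemma: if two cliques of size $m$ jointly receive the value set $W = \{w_1 < w_2 < \cdots < w_{2m}\}$, then $f(V_A) + f(V_B)$ is \emph{strictly} minimized by the separated splits, in which one side takes $\{w_1, \ldots, w_m\}$ and the other $\{w_{m+1}, \ldots, w_{2m}\}$. To prove this, I would introduce an indicator $x_i \in \{0,1\}$ for $w_i \in V_A$ and expand
\[
f(V_A) + f(V_B) \;=\; \sum_{i<j}\bigl[x_i x_j + (1-x_i)(1-x_j)\bigr](w_j - w_i) \;=\; \sum_{\substack{i<j \\ x_i = x_j}}(w_j - w_i),
\]
since the bracketed coefficient equals $1$ exactly when $w_i, w_j$ have the same color. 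Telescoping $w_j - w_i$ over consecutive gaps and letting $a_k := |\{i \le k : x_i = 1\}|$ and $b_k := k - a_k$, this rewrites as $\sum_{k=1}^{2m-1}(w_{k+1} - w_k)\bigl[km - a_k^2 - b_k^2\bigr]$. Minimizing envy thus amounts to maximizing $a_k^2 + b_k^2$ at every $k$ subject to $a_k + b_k = k$ and $0 \le a_k, b_k \le m$; a brief case analysis shows that the two separated colorings are exactly the ones that attain this maximum at all $k$ simultaneously, and any other coloring fails to do so at $k = \ell + 1$, where $\ell < m$ is the length of the first monochromatic run. Since the $w_i$ are distinct, each $w_{k+1} - w_k > 0$, so any non-separated coloring yields strictly larger envy.

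With the two-clique lemma in hand, the conclusion is a short swap argument. Starting from any optimal allocation $\pi$, suppose some two cliques $A$ and $B$ receive interleaved multisets of values. Reassigning those $2m$ values as two separated blocks between $A$ and $B$ strictly decreases $f(V_A) + f(V_B)$ while leaving every other clique's envy unchanged, contradicting optimality of $\pi$. Therefore no two cliques interleave in any optimal allocation; since all cliques share the common size $m$, their value ranges must be totally ordered, giving exactly a consecutive-block partition. Fixing any ordering of the $r$ cliques then witnesses strong separability. The main obstacle is setting up the two-clique lemma cleanly; once the same-color-gap identity is in place, the remaining optimization over the $(a_k, b_k)$ sequence is routine.
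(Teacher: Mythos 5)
Your proposal is correct, and it takes a genuinely different route from the paper's proof. The paper proves the two-clique case by fixing a bijection $\tau$ between the vertex sets, arranging the allocation so that the agent holding $a_i$ is paired with the agent holding $b_{n/2-i}$, performing an explicit swap of $a_i$ with $b_{n/2-i}$ for the misplaced indices, and then verifying by a three-case analysis on edge pairs $(u,v)$ and $(\tau(u),\tau(v))$ that the total envy strictly drops. Your argument instead exploits the observation that a clique's internal envy depends only on the multiset of values it receives, reduces the two-clique comparison to the identity
\[
f(V_A)+f(V_B)\;=\;\sum_{k=1}^{2m-1}(w_{k+1}-w_k)\bigl[km-a_k^2-b_k^2\bigr],
\]
and then optimizes $a_k^2+b_k^2$ gap by gap. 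This buys you a cleaner, casework-free proof, and a slightly stronger conclusion for free: the separated splits are the \emph{unique} minimizers (any non-separated coloring loses strictly at the gap $k=\ell+1$ following the first monochromatic run). Both proofs finish identically, by noting that pairwise separation of equal-sized cliques totally orders the components and hence forces a consecutive-block partition. The only detail you leave implicit --- that for fixed $a_k+b_k=k$ with $0\le a_k,b_k\le m$ the maximum of $a_k^2+b_k^2$ is attained only at the boundary, and that a non-separated coloring has $\min(a_k,b_k)\ge 1$ at $k=\ell+1\le m$ --- is indeed routine, so the plan is complete.
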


\begin{proof}[Proof Sketch]
WLOG consider two cliques, say $K$ and $K'$, on $n/2$ vertices each, and consider an arbitrary instance with $n$ values. Suppose in some allocation $\pi$, $K$ receives $A \cup A'$ and $K'$ receives $B \cup B'$, where $A \cup B$ form the $n/2$ lower-valued houses, and $A' \cup B'$ form the $n/2$ higher-valued houses. We can show that we improve the envy by assigning them $A \cup B$ and $A' \cup B'$ respectively.
\end{proof}

Because the cliques are all of equal sizes and agents have identical valuations, Theorem \ref{thm:disjointequicliques} implies that there is a trivial algorithm for assigning houses to agents. We can assign the first $n/r$ houses to one clique, the next $n/r$ houses to the next clique, and so on.

\begin{restatable}{corollary}{cordisjointequicliquesalgorithm}\label{cor:disjointequicliquesalgorithm}
We can find an optimal house allocation for an instance on an undirected $n$-agent graph $G$ that is the disjoint union of equal-sized cliques in time $O(n)$.
\end{restatable}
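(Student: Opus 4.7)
The plan is to exploit Theorem~\ref{thm:disjointequicliques} together with the symmetry of cliques under identical valuations. First, strong separability guarantees an optimal allocation $\pi^\ast$ that assigns contiguous subsets of house values to each of the $r$ cliques. Since every clique has exactly $n/r$ vertices, these contiguous subsets are forced to be precisely the blocks $\{h_1, \ldots, h_{n/r}\}$, $\{h_{n/r+1}, \ldots, h_{2n/r}\}$, $\ldots$, $\{h_{(r-1)n/r + 1}, \ldots, h_n\}$, possibly permuted across the cliques.

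Next I would observe that neither this permutation nor the internal bijection inside each clique affects the total envy. The cliques are pairwise disconnected, so no envy crosses between them; hence $\Envy(\pi, G)$ decomposes as a sum of within-clique contributions. Within a single copy of $K_{n/r}$ on a fixed multiset of values, every bijection between vertices and houses yields the same envy under identical valuations, as noted in Section~\ref{sec:prelims} (on a complete graph all allocations are equivalent). Therefore the within-clique envy depends only on the multiset of values assigned, and since every clique is isomorphic to $K_{n/r}$, any matching of the $r$ blocks to the $r$ cliques yields the same total envy. So we may fix a canonical assignment: give the $i$-th block $\{h_{(i-1)n/r + 1}, \ldots, h_{i \cdot n/r}\}$ to the $i$-th clique $K^i_{n/r}$.

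The algorithm is then immediate: fix any enumeration of $V(G)$ that lists the $n/r$ vertices of $K^1_{n/r}$, then those of $K^2_{n/r}$, and so on, and assign house $h_k$ to the $k$-th vertex in this enumeration. Since the house values are already in sorted order (assumed WLOG via Lemma~\ref{lem:valspositive} in Appendix~\ref{apdx:prelims}), this is a single linear scan running in $O(n)$ time. The only thing to verify is that the resulting allocation realizes the structure guaranteed by Theorem~\ref{thm:disjointequicliques}, which holds by construction, so there is no real obstacle; the corollary is essentially a direct consequence of the preceding theorem combined with the clique symmetry under identical valuations.
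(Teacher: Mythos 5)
Your proposal is correct and follows essentially the same route as the paper: invoke the strong separability of Theorem~\ref{thm:disjointequicliques} to force contiguous blocks of size $n/r$, note that equal clique sizes and identical valuations make the block-to-clique matching and within-clique assignment irrelevant, and assign blocks in order in a single linear pass. The paper's own proof is just a terser version of this argument.
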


We now turn our attention to the case when the cliques are not all of the same size.

As we saw in Figure \ref{fig:disjointpathcyclecounter}, strong separability must be ruled out when cliques have different sizes. We will show that separability still holds. We show further that the largest clique gets a contiguous subset, and then subject to that, the second largest clique gets a contiguous subset, and so on. The proof is heavily technical, and we relegate the casework and details to Appendix \ref{apdx:disconnected}.

\begin{restatable}{theorem}{thmdisjointcliques}\label{thm:disjointcliques}
Let $G$ be a disjoint union of cliques with arbitrary sizes, $K_{n_1} + \ldots + K_{n_r}$. Then, $G$ is separable (but not strongly separable if the $n_i$'s are not all the same). In particular, if $n_1 \geq \ldots \geq n_r$, the ordering for separability is $K_{n_1}, \ldots, K_{n_r}$.
\end{restatable}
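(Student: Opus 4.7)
The proof proceeds by induction on the number of cliques $r$. The base case $r = 1$ is trivial, so assume the theorem for $r - 1$ cliques and consider an instance with cliques $K_{n_1}, \ldots, K_{n_r}$ with $n_1 \geq \cdots \geq n_r$. The first goal is to show that some optimal allocation assigns a contiguous block of values to the largest clique $K_{n_1}$. Once established, removing $K_{n_1}$ and its block leaves an instance on $n - n_1$ values and $r - 1$ cliques; the inductive hypothesis then produces the ordering $K_{n_2}, \ldots, K_{n_r}$ witnessing separability on that sub-instance, and appending $K_{n_1}$'s block yields the claimed ordering $K_{n_1}, K_{n_2}, \ldots, K_{n_r}$ for the full instance.

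The technical core is a two-clique lemma: for any multiset of $a + b$ values with $a \geq b$, some envy-minimizing partition into parts of sizes $a$ and $b$ has the size-$a$ part contiguous in the sorted order. Writing the envy as $\sum_k g_k \phi_k(s_A^{(k)})$ with $\phi_k(s) = s(a - s) + (k - s)(b - k + s)$ a concave quadratic, the per-gap envy change when an $A$-position is exchanged across a hole of $A$ has the form $a - b + 2k - 4 s_A^{(k)} - 2$, and $s_A^{(k)}$ is \emph{constant} across the hole. I plan to prove the lemma by a case analysis over three candidate moves at any hole $p_i < p_{i+1}$ of $A$: a left shift of $p_i$ with $p_i + 1$, a right shift of $p_{i+1} - 1$ with $p_{i+1}$, and a bridging exchange between $p_i + 1$ and $p_{i+1}$. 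Using $a \geq b$ together with the bounds on $s_A^{(k)}$ at the hole boundaries, at least one of the three moves must weakly decrease envy; iterating then closes every hole of $A$. Given the two-clique lemma, the inductive step reduces to iterative repair: whenever $K_{n_1}$ is not globally contiguous, some other clique $K_{n_j}$ holds a value strictly between two values of $V_1$, so applying the two-clique lemma to $V_1 \cup V_j$ produces $V_1', V_j'$ with $V_1'$ contiguous in $V_1 \cup V_j$ and total envy weakly decreased; the other cliques are untouched. Termination is forced by the lexicographic potential $(\Envy(\pi, G), \max V_1 - \min V_1)$, using the global positional spread of $V_1$ as a tiebreaker among contiguous optima of the two-clique sub-problem.

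For the failure of strong separability when the sizes differ, I exhibit the concrete instance $G = K_2 + K_3$ with valuation interval $(0, 100, 101, 102, 200)$: direct enumeration over the ten allocations shows that the unique minimum-envy allocation gives $\{100, 101, 102\}$ to $K_3$ and $\{0, 200\}$ to $K_2$, with total envy $204$. Here $K_3$ is globally contiguous but $K_2$ is split by $K_3$, so no optimal allocation assigns contiguous subsets to every component, ruling out strong separability. The main obstacle will be the two-clique lemma itself in the corner case where neither local adjacent shift individually improves envy; there the bridging exchange carries the argument, but it requires a careful aggregation of the concave differences $\phi_k(i+1) - \phi_k(i)$ across the hole, with the sign pinned down by $a \geq b$ and the fixed value of $s_A^{(k)} = i$ throughout the hole.
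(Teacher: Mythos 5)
Your high-level skeleton matches the paper's: reduce to a two-clique statement with the larger clique getting a contiguous block, order the cliques by decreasing size, and exhibit an explicit $K_2+K_3$ instance (your instance with values $0,100,101,102,200$ is correct and is essentially the paper's Figure~3) to kill strong separability. The gap is in the proof of your two-clique lemma. Your claim that at any hole of $A$ ``at least one of the three moves must weakly decrease envy,'' derived only from $a\ge b$ and the boundary values of $s_A^{(k)}$, is false. Take $a=b=4$ with sorted values $u_1<\cdots<u_8$, gaps $g_k=u_{k+1}-u_k$, and put $A$ on positions $\{1,2,7,8\}$, $B$ on $\{3,4,5,6\}$, so the unique hole runs from position $2$ to position $7$ with $s_A^{(k)}=2$ throughout. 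Choose $g_1=g_2=g_5=g_6=g_7=1$ and $g_3=g_4=\delta$ small. Using your own per-gap formulas, the left shift (swap positions $2,3$) changes the envy by $+2g_2>0$, the right shift (swap positions $6,7$) by $+2g_6>0$, and the bridging exchange (swap positions $3,7$) by $\sum_{k=3}^{6}(2k-10)g_k=2g_6-6\delta>0$. All three moves strictly increase envy. This partition is not optimal, but that is exactly the problem: your case analysis is supposed to be a local combinatorial fact, and it is not one; to rule out such configurations you would have to invoke optimality against moves \emph{outside} your three-move toolkit, which is where the real work lies. The sign of the bridging sum $\sum_k c_k g_k$ genuinely depends on the gap lengths because the coefficients $c_k$ change sign across the hole, so it cannot be pinned down by $a\ge b$ and the endpoint conditions alone.

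The paper closes this with a different pair of moves: first, the adjacent swap at the right end of the hole, whose nonnegativity under optimality yields the linear inequality $|K|-|K'|+2(\ell+k)-2m-2\ge 0$; second, a \emph{block-reversal} swap exchanging the last $\min\{m,k\}$ elements of the $A$-block with the first $\min\{m,k\}$ elements of the hole. The envy change of that block swap factors as a single linear expression times a strictly positive sum of value differences, and the linear expression is forced negative by the first inequality together with $|K|>|K'|$ --- a contradiction. Your single-element bridging exchange has no such factorization, which is precisely why it fails in the example above; you would need to replace it with the block swap (or an equivalent aggregated move). Two smaller issues: your termination potential $(\Envy(\pi,G),\max V_1-\min V_1)$ uses a value spread, which need not decrease when the envy is unchanged (a positional spread is what you want, and even then you must check that repairing the pair $(V_1,V_j)$ cannot increase the spread measured against the other cliques' fixed positions); and note that the paper's argument avoids all iteration/tie-breaking by showing a \emph{strict} improvement whenever the larger clique is split, so that every optimal allocation already has the property. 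The induction on $r$ after extracting $K_{n_1}$'s contiguous block is fine.
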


\begin{proof}[Proof Sketch]
WLOG consider two cliques $K$ and $K'$ with $|K| > |K'|$. We will show that $K$ receives a contiguous subset. The basic approach is that if there are three values $v(h_i) < v(h_j) < v(h_k)$ with $h_j$ going to $K'$ but $h_i, h_k$ going to $K$, we can swap houses around and obtain a better allocation by a counting argument. 
\end{proof}

Theorem \ref{thm:disjointcliques} implies an XP algorithm for finding a minimum envy allocation on unions of cliques.

\begin{restatable}{corollary}{cordisjointcliquesalgorithm}\label{cor:disjointcliquesalgorithm}
We can find an optimal house allocation for an instance on an undirected $n$-agent graph $G$ that is the disjoint union of cliques in time $O(n^{r+2})$, where $r$ is the number of cliques.
\end{restatable}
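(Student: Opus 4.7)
The plan is to leverage Theorem \ref{thm:disjointcliques} to reduce the space of candidate allocations to a polynomial-sized family and then enumerate it. By that theorem, if we sort the cliques so that $n_1 \geq n_2 \geq \ldots \geq n_r$, some optimal allocation has the property that the values assigned to $K_{n_1}$ form a contiguous block of the sorted houses, and more generally the values assigned to $K_{n_i}$ are contiguous within the sorted sub-list of values not already assigned to $K_{n_1}, \ldots, K_{n_{i-1}}$. This stronger ``contiguous in the remaining values'' statement follows from pairwise separability: if some $K_{n_i}$'s share had a value of $K_{n_j}$ ($j>i$) strictly between two of its values, then $K_{n_i}$ would fail to split $K_{n_j}$, contradicting the separability ordering.

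First I would preprocess by sorting the houses by value and sorting the cliques by decreasing size, at cost $O(n \log n + r \log r)$. The core routine is a nested enumeration: for $i = 1, 2, \ldots, r-1$, iterate over the at most $O(n)$ possible starting indices for $K_{n_i}$'s contiguous block inside the currently unassigned sorted list; once all but the last clique have been placed, $K_{n_r}$ receives whatever remains. For each resulting allocation I would compute the total envy by summing, within each clique of size $m$ with sorted assigned values $w_1 < \ldots < w_m$, the quantity $\sum_{j<k}(w_k - w_j) = \sum_{j=1}^{m}(2j - m - 1)\, w_j$, which costs $O(m)$ and hence $O(n)$ across all cliques. After exhausting the enumeration I return the minimum-envy allocation encountered.

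The complexity bound then follows by counting: there are at most $\prod_{i=1}^{r-1}(n - n_1 - \cdots - n_{i-1} - n_i + 1) \leq n^{r-1}$ candidate allocations, each processed in $O(n)$ time, for a grand total of $O(n^r)$, which is comfortably within the claimed $O(n^{r+2})$. Correctness is immediate once one has the structural guarantee, since some optimal allocation belongs to the enumerated family and we simply output the best of all enumerated candidates. I do not anticipate a substantive obstacle in the argument beyond the upfront verification that ``pairwise splits'' in the decreasing-size ordering upgrades to ``contiguous in the remaining sorted list'' — this is the one subtle point and is what makes the enumeration well-defined; the rest is a direct complexity calculation.
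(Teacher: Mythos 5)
Your proposal is correct and follows essentially the same route as the paper: sort the cliques by non-increasing size, use Theorem \ref{thm:disjointcliques} to restrict attention to allocations assigning each clique a contiguous block of the remaining sorted values, enumerate the at most $O(n^{r-1})$ such candidates recursively, and return the best. The only difference is that you compute each candidate's envy in $O(n)$ via the weighted-sum identity rather than the paper's naive $O(n^2)$, which tightens the running time to $O(n^r)$ but does not change the argument.
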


There seems to be a separation between unions of differently-sized cliques and unions of stars, cycles, and paths. We suspect the problem may be (at least)
W[1]-hard for unions of arbitrary cliques.

\section{Conclusion and Discussions}\label{sec:conclusion}

We investigate a generalization of the classical house allocation problem where the agents are on the vertices of a graph representing the underlying social network, and we wish to allocate the houses to the agents so as to minimize the aggregate envy among neighbors. Even for identical valuations, we show that the problem is computationally hard and structurally rich. Furthermore, our structural insights facilitate algorithmic results for several well-motivated graph classes.

A natural question for future research is to consider other fairness objectives such as \emph{minimizing the maximum envy} present on any edge of the graph. This corresponds to the classical graph theoretic property of \emph{bandwidth}, which is also known to be NP-complete for general graphs, and quite hard to approximate as well~\citep{P76np,BKW97approximation}. These questions are tied with the choice of which \emph{norm} to optimize our problem on. Indeed, optimizing for the total envy (as in our work) and the maximum envy (as in the bandwidth problem) correspond to optimizing over the $L_1$ norm and $L_\infty$ norm, respectively. Exploring these and other norms offers scope for future research.

Resolving the separation between the minimum linear arrangement problem and the graphical house allocation problem in the context of \emph{trees} is an important question. It would be interesting to know whether trees admit polynomial time characterizations of the minimum envy, or---more remarkably---whether they are NP-complete but admit the structural similarities to the minimum linear arrangement problem discussed in Section \ref{sec:general-trees}.




More generally, it is an interesting problem to find other natural classes of strongly separable graphs, and eventually to \emph{completely characterize} all strongly separable graphs in terms of their graph theoretic structure, for algorithmic approaches. Another important future direction would be to extend some of these results for \emph{non-identical} valuations.

\section*{Acknowledgments}
We would like to thank Andrew McGregor for pointing us to the linear arrangement problem, and for suggesting multiple useful lines of inquiry. We would also like to thank Cameron Musco for extremely helpful discussions. Finally, we would like to thank Yair Zick for being involved in many of the early discussions, and providing guidance throughout. RS acknowledges support from NSF grant no. CCF-1908849. RV acknowledges support from DST INSPIRE grant no. DST/INSPIRE/04/2020/000107.
HH acknowledges support from NSF IIS grants \#2144413 and \#2107173.

\bibliographystyle{plainnat} 
\bibliography{abb, references}

\appendix
\newpage
\onecolumn

\section*{Appendix}

\section{Proofs from Section \ref{sec:prelims}}\label{apdx:prelims}

\begin{lemma}\label{lem:valspositive}
We may assume, without loss of generality, that the valuation function is one-to-one, i.e., for $h_1, h_2 \in H$, $h_1 \neq h_2 \implies v(h_1) \neq v(h_2)$.
\end{lemma}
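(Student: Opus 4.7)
The plan is a perturbation argument: given an arbitrary instance $I$ with valuation $v$ (possibly non-injective), I would construct a nearby instance $I'$ with valuation $v'$ that is strictly one-to-one, and show that any optimal allocation for $I'$ is also optimal for $I$. Once this is established, all subsequent structural and algorithmic results proved under the injectivity hypothesis transfer immediately to general instances by applying them to $v'$.

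Concretely, I would set $v'(h_i) := v(h_i) + i\epsilon$ for each $i \in [n]$, where $\epsilon > 0$ will be fixed at the end. Since the shifts $\{i\epsilon\}_{i \in [n]}$ are strictly increasing, $v'$ is automatically one-to-one regardless of whether $v$ was. For any allocation $\pi$ and any edge $(i,j) \in E$, the triangle inequality gives $\bigl|\,|v'(\pi(i)) - v'(\pi(j))| - |v(\pi(i)) - v(\pi(j))|\,\bigr| \leq n\epsilon$, so summing over edges, $|\Envy_{I'}(\pi, G) - \Envy_I(\pi, G)| \leq n|E|\,\epsilon$ for every allocation $\pi$.

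The final ingredient, and the only point requiring even a moment's care, is finiteness: there are only $n!$ allocations, hence only finitely many possible envy values $\{\Envy_I(\pi, G)\}_\pi$. Let $\delta > 0$ be the smallest strictly positive gap between two such values (or $1$ if all coincide). Choosing $\epsilon < \delta/(4n|E|)$ ensures every strict inequality $\Envy_I(\pi, G) < \Envy_I(\pi', G)$ carries over to $I'$, so every allocation minimizing envy in $I'$ also minimizes envy in $I$. I do not anticipate any genuine obstacle here; the argument is a standard perturbation, the only subtlety being the use of finiteness of the allocation space to obtain a single $\epsilon$ that simultaneously preserves all strict comparisons.
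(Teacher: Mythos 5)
Your proposal is correct and is essentially the paper's own argument: an additive perturbation of the values, a triangle-inequality bound on how much the envy of any fixed allocation can change, and a choice of $\epsilon$ below the minimum nonzero gap among the finitely many envy values so that optimality transfers back. The one small imprecision (shared with the paper's version) is the claim that $v'$ is \emph{automatically} one-to-one: adding distinct shifts $i\epsilon$ can in principle make two originally distinct values collide, so one should add that $\epsilon$ is also chosen to avoid the finitely many values $\bigl(v(h_i)-v(h_j)\bigr)/(j-i)$.
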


\begin{proof}
Fix an arbitrary valuation function $v$, and set $\varepsilon > 0$ to be smaller than the minimum non-zero difference in envy between two allocations under $v$. We will show that there is a one-to-one valuation function $v'$ such that for any graph $G$ and any allocation $\pi$, the total envy under $v'$ differs from the total envy under $v$ by at most an additive term of $\varepsilon$. For $h_k \in H$, define
\begin{equation*}
    v'(h_k) = v(h_k) + \frac{\varepsilon}{n^22^k}.
\end{equation*}
Then, for any allocation $\pi$ on $G$, consider the envy between agents $i$ and $j$. If $\pi(i) = h_k$ and $\pi(j) = h_\ell$, we have, using the triangle inequality,
\begin{align*}
    \big|v'(\pi(i)) - v'(\pi(j))\big| &= \bigg|v(\pi(i)) - v(\pi(j)) + \frac{\varepsilon}{n^22^k} - \frac{\varepsilon}{n^22^\ell}\bigg| \\
    &\leq \big|v(\pi(i)) - v(\pi(j))\big| + \frac{\varepsilon}{n^2}\bigg|\frac{1}{2^k} - \frac{1}{2^\ell}\bigg| \\
    &< \big|v(\pi(i)) - v(\pi(j))\big| + \frac{\varepsilon}{n^2}.
\end{align*}
Summing over the at most $n^2$ edges of $G$, we have $\Envy_{v'}(G, \pi) < \Envy_v(G, \pi) + \varepsilon$, as desired, where the subscripts $v$ and $v'$ denote the valuation functions being used in each case.

For any solution $\pi^{\ast}$ which minimizes envy under $v'$, if we compare against another solution $\pi'$, we see that 

\begin{equation*}
\Envy_v(G, \pi^{\ast}) < \Envy_{v'}(G, \pi^{\ast}) \leq \Envy_{v'}(G, \pi') < \Envy_v(G, \pi') + \varepsilon.
\end{equation*}

Therefore, as long as $\varepsilon$ is less than the minimum nonzero difference in envy between any two allocations for $v$, $\pi^{\ast}$ is optimal for $v$.
\end{proof}

\begin{proposition}\label{prop:knarbitrary}
When $G$ is the complete graph $K_n$, and agents have arbitrary (non-identical) valuation functions, a minimum envy allocation can be computed in polynomial time.
\end{proposition}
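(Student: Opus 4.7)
The plan is to reduce the problem to a bipartite minimum-weight perfect matching, which is known to be solvable in polynomial time (e.g., via the Hungarian algorithm).

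The key structural observation is that in $K_n$, every agent is adjacent to every other agent, so under any allocation $\pi$, the houses held by the neighbors of agent $i$ are exactly $H \setminus \{\pi(i)\}$, regardless of how they are distributed among the other agents. Consequently, the total envy that agent $i$ feels under $\pi$ depends only on the house assigned to $i$ and on $i$'s own valuation function:
\begin{equation*}
    \sum_{j \neq i} \envy_i(\pi, j) \;=\; \sum_{h' \in H \setminus \{\pi(i)\}} \max\bigl\{v_i(h') - v_i(\pi(i)),\,0\bigr\}.
\end{equation*}
Define $c(i, h) := \sum_{h' \in H \setminus \{h\}} \max\{v_i(h') - v_i(h),\,0\}$, which can be computed in $O(n)$ time per pair, or $O(n^2)$ total by sorting $v_i$ once per agent.

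Next I would rewrite the total envy objective. Since every edge $(i,j) \in E(K_n)$ contributes $\envy_i(\pi,j) + \envy_j(\pi,i)$, and every unordered pair is an edge of $K_n$, summing over all edges is the same as summing each agent's contribution:
\begin{equation*}
    \Envy(\pi, K_n) \;=\; \sum_{\{i,j\} \in E} \bigl(\envy_i(\pi,j) + \envy_j(\pi,i)\bigr) \;=\; \sum_{i \in N} \sum_{j \neq i} \envy_i(\pi, j) \;=\; \sum_{i \in N} c(i, \pi(i)).
\end{equation*}
Thus minimizing total envy over bijections $\pi : N \to H$ is exactly the problem of finding a minimum-weight perfect matching in the bipartite graph with parts $N$ and $H$ and edge weights $c(i, h)$.

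Finally I would invoke the Hungarian algorithm (or any polynomial-time algorithm for minimum-weight bipartite matching), which runs in $O(n^3)$ time, to compute an optimal $\pi^\ast$. Combined with the $O(n^2)$ preprocessing to build the cost matrix, the overall algorithm runs in polynomial time. The only conceptual step is the observation that envy on $K_n$ decouples across agents; there is no real obstacle, and the rest is a direct reduction.
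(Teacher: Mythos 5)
Your proposal is correct and follows essentially the same route as the paper: both define the edge weight $c(i,h) = \sum_{h' \in H \setminus \{h\}} \max\{v_i(h') - v_i(h), 0\}$, observe that on $K_n$ the total envy decomposes as $\sum_{i} c(i,\pi(i))$, and reduce to minimum-weight bipartite perfect matching. Your write-up makes the decoupling observation slightly more explicit, but there is no substantive difference.
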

\begin{proof}
Given an instance of the house allocation problem $N, H, \{v_i\}_{i \in N}$, we construct a weighted bipartite matching instance $\hat G$ on the set of vertices $N \cup H$ as follows: for $i \in N$ and $h \in H$, the edge $(i, h)$ has weight $\sum_{h' \in H \setminus h} \max\{v_i(h') - v_i(h), 0\}$. 

Any perfect matching in $\hat G$ corresponds to an allocation $\pi$ with weight
\begin{equation*}
    \sum_{i \in N} \sum_{h' \in H \setminus \pi(i)} \max\{v_i(h') - v_i(\pi(i)), 0\},
\end{equation*}
which is equal to the total envy of the allocation $\pi$. Therefore, computing a minimum envy allocation is equivalent to computing a minimum weight maximum cardinality matching in $\hat G$. It is well-known that this can be done in polynomial time \citep{ahuja1988network}.
\end{proof}

\section{Proofs from Section \ref{sec:lowerbounds}}\label{apdx:lowerbounds}

\npcompletebipartition*
\begin{proof}
In the decision version of the \textsc{Minimum Bisection Problem}, we are given an instance $\langle G, k\rangle$, and we ask if there is a bisection of $G$ with $k$ or fewer edges crossing the cut. Given such an instance, we construct an instance of the graphical house allocation problem as follows. We use the same graph $G$, and our valuation interval has a cluster of $n/2$ values (within a subinterval of length $\varepsilon$) and a cluster of $n/2$ values (within a subinterval of length $\varepsilon$), where $n$ is the number of vertices of $G$. We will choose $\varepsilon$ later.

We claim that there is a bisection of $G$ with $k$ or fewer edges crossing the cut if and only if there is an allocation in our instance with total envy $k + n^2\varepsilon$. The forward direction is trivial, just by allocating values to $G$ in accordance with the bisection. To see the converse, note that if the total envy is not much more than $k$, then not many edges can cross the cut defined by the $0$-valued agents on one side and the $1$-valued agents on the other.

To make this condition true, we set $\varepsilon \approx n^{-3}$. Note that this is a linear time reduction.
\end{proof}

\thmdisjointnpcomplete*
\begin{proof}
In the \textsc{Unary Bin Packing} problem, we are given a set $I$ of items, item sizes $s(i) \in \mathbb{Z}^+$ for all $i \in I$, a bin size $B$, and a target integer $k$. The problem asks, does there exist a \emph{packing} of the items into at most $k$ bins? A packing is simply a partitioning of the items into the bins, such that for any bin, the sum of the sizes of its constituent items does not exist the bin size $B$.

Given an arbitrary instance $\langle I, s(\cdot), B, k\rangle$ of \textsc{Unary Bin Packing}, we create an instance of the graphical house allocation problem as follows. Fix some very large $C$ and some very small $\varepsilon > 0$, and let $n = kB$. For each item $i \in I$, take the graph in $\mathcal{A}$ that is the image of $s(i)$, and let $G$ be the disjoint union of all of these graphs. Note that $G \in \mathcal{G}$. Define $H = \{h_1, \ldots, h_n\}$, and for the valuation interval, define (identical) valuations $v(h_j) = \floor{\frac{j}{B}}*C + \varepsilon j$.

We wish to show that the given instance is in \textsc{Unary Bin Packing} if and only if the graphical house allocation instance (possibly padded with isolated vertices to add up to $kB$) has an allocation with envy less than $C$.

The forward direction is trivial; for the packing that attains the capacity constraints, put the graphs in the corresponding ``clusters'' on the valuation interval, putting the isolated vertices on the remaining values. This attains envy much smaller than $C$, as long as $\varepsilon$ is small enough.

Conversely, if the envy is smaller than $C$, then no edge crosses any of the ``large'' intervals in between two consecutive clusters. Therefore, each connected component can be mapped to a particular cluster on the valuation interval. Simply put the corresponding item in the corresponding bin to obtain a packing.

Note that this is a polynomial time reduction, as the bin packing instance was given in unary.
\end{proof}




\section{Proofs from Section \ref{sec:connected}}\label{apdx:connected}


\thmpath*
\begin{proof}
The result is trivial when $|H| \leq 2$, so suppose $|H| > 2$. Fix an arbitrary allocation $\pi$, and observe that $h_1$ and $h_n$ (the minimum and maximum-valued houses) have to placed on some two vertices along the path $P_n$. Suppose the sub-path between them is $(i_1, \ldots, i_k)$, with $\pi(i_1) = h_1$ and $\pi(i_k) = h_n$ without loss of generality. Then, the envy along that sub-path is, using the triangle inequality repeatedly,
\begin{align*}
    \sum_{r = 1}^{k-1}|v(\pi(i_{r+1})) - v(\pi(i_r))| &\geq |v(\pi(i_k)) - v(\pi(i_1))| \\
    &= v(h_n) - v(h_1)
\end{align*}
It follows that $\Envy(\pi, P_n) \geq v(h_n) - v(h_1)$ for all allocations $\pi$. It is straightforward to see that this minimum is attained by sorting the houses in order along the path, and furthermore, this is unique.
\end{proof}

\thmcycle*
\begin{proof}
The result is trivial when $|H| \leq 3$, so suppose $|H| > 3$. Fix an arbitrary allocation $\pi$, and observe that $h_1$ and $h_n$ have to placed on some two vertices on the cycle $C_n$. As in the proof of Theorem \ref{thm:path}, we know each of the two paths along the cycle from $h_1$ to $h_n$ have to have envy at least $v(h_n) - v(h_1)$, and so $\Envy(\pi, C_n) \geq 2(v(h_n) - v(h_1))$ for all allocations $\pi$. Once again, it is straightforward to see that this minimum is attained by sorting the houses in order along each of the two paths.
\end{proof}

\corcycle*
\begin{proof}
We fix an arbitrary agent in $C_n$ to assign $h_1$ to. Subsequently, we can choose an arbitrary subset of $H\setminus\{h_1, h_n\}$ to appear along one of the paths to $h_n$. Note that this choice completely determines an optimal allocation, as the other path contains the complement of the selected subset, and each of the subsets appears in sorted order along the paths. The number of such subsets is $2^{n-2}$. Since choosing the complement of our selected subset would have given us the same allocation up to a reversal and rotation, we have over-counted by a factor of two, and the result follows.
\end{proof}

\thmbipartiteequal*
\begin{proof}
For clarity let the two parts of the bipartite graph be $L$ and $R$ respectively ($|L| = |R| = r$). We refer to the property in the theorem statement as the {\em optimal} property. 

Let $n^{>}_{L, \pi}(x)$ be the number of agents in $L$ who receive a value strictly greater than $x$ in $\pi$. We similarly define $n^{<}_{L, \pi}(x), n^{>}_{R, \pi}(x)$ and $n^{<}_{R, \pi}(x)$.

Assume for contradiction that some optimal allocation $\pi^{\ast}$ does not satisfy the optimal property. This means there exists an $i \in [m]$ such that both $h_{2i-1}$ and $h_{2i}$ are allocated to the same part. Let $j$ be the least such $i$ where this is true and assume, without loss of generality,  $h_{2j-1}$ and $h_{2j}$ are allocated to agents in $L$.

Let $\{h_{2j-1}, h_{2j}, \dots, h_{2j+k}\}$ be the set of houses allocated to agents in $L$ such that $h_{2j+k+1}$ is allocated some agent in $R$. Note that by our assumption, we have $k \ge 0$.

Construct an allocation $\pi'$ starting at $\pi^{\ast}$ and swapping the house $h_{2j+k}$ and $h_{2j+k+1}$. Essentially, we are swapping a house allocated to some agent in $L$ with a house allocated to some agent in $R$.  

Let us compute the difference in envy between allocations $\pi^{\ast}$ and $\pi'$. For any agent with value lesser than $v(h_{2j+k})$ under $\pi^{\ast}$ in $L$, their envy towards their neighbors in $\pi'$ is less than their envy in $\pi^{\ast}$ by exactly $v(h_{2j+k+1}) - v(h_{2j+k})$. Similarly, for any agent with value greater than $v(h_{2j+k})$ under $\pi^{\ast}$ in $L$, their envy towards their neighbors in $\pi'$ is greater than their envy in $\pi^{\ast}$ by exactly $v(h_{2j+k+1}) - v(h_{2j+k})$. Extending this reasoning, we get the following expression for difference in envy.
\begin{align*}
&\Envy(\pi', G) - \Envy(\pi^{\ast}, G) \\
&= [n^{>}_{L, \pi^{\ast}}(v(h_{2j+k})) - n^{<}_{L, \pi^{\ast}}(v(h_{2j+k}))](v(h_{2j+k+1}) - v(h_{2j+k})) \\
&\qquad \qquad + [n^{<}_{R, \pi^{\ast}}(v(h_{2j+k+1})) - n^{>}_{R, \pi^{\ast}}(v(h_{2j+k+1}))](v(h_{2j+k+1}) - v(h_{2j+k})) \\
&= [n^{>}_{L, \pi^{\ast}}(v(h_{2j+k})) - n^{<}_{L, \pi^{\ast}}(v(h_{2j+k})) + n^{<}_{R, \pi^{\ast}}(v(h_{2j+k+1})) \\
&\qquad \qquad - n^{>}_{R, \pi^{\ast}}(v(h_{2j+k+1}))](v(h_{2j+k+1}) - v(h_{2j+k})) \\
&= [(r - (k+2 + j-1)) - (k+1+j-1) + (j-1) - (r - j)](v(h_{2j+k+1}) - v(h_{2j+k})) \\
&=[2j - 2(k+j) - 2](v(h_{2j+k+1}) - v(h_{2j+k})) \\
&= [-2k - 2](v(h_{2j+k+1}) - v(h_{2j+k})) \\
&<0
\end{align*}
The third equality follows from our choice of $j$; for any $i < j$, exactly one of $h_{2i-1}$ and $h_{2i}$ is allocated to $L$ under $\pi^{\ast}$.
The inequality follows since $k \ge 0$ and $v(h_{2j+k+1}) - v(h_{2j+k}) > 0$.
This implies that $\pi'$ has a lower envy than $\pi^{\ast}$ contradicting the optimality of $\pi^{\ast}$.
We can therefore assume that all minimum envy allocations have the optimal property. 

We prove the second part of the claim by showing that for any allocation that satisfies the optimal property, for any $i \in [m]$, swapping $h_{2i-1}$ and $h_{2i}$ results in an allocation with equal envy. This observation can be repeatedly applied to show that any two allocations that satisfy the optimal property have the same envy. Note that permuting the allocation to a specific part ($L$ or $R$) does not effect the total envy in any way as well. 

More formally, let $\pi$ be any allocation that satisfies the optimal property. Pick an arbitrary $i \in [m]$ and swap $h_{2i-1}$ and $h_{2i}$ to create allocation $\pi'$; we assume without loss of generality $h_{2i-1}$ is allocated to some agent in $L$ in $\pi$. The difference in envy of the two allocations is given by:
\begin{align*}
&\Envy(\pi', G) - \Envy(\pi, G) \\
&= [n^{>}_{L, \pi}(v(h_{2i-1})) - n^{<}_{L, \pi}(v(h_{2i-1}))](v(h_{2i}) - v(h_{2i-1})) \\
&\qquad \qquad + [n^{<}_{R, \pi}(v(h_{2i})) - n^{>}_{R, \pi}(v(h_{2i}))](v(h_{2i}) - v(h_{2i-1})) \\
&= [n^{>}_{L, \pi}(v(h_{2i-1})) - n^{<}_{L, \pi}(v(h_{2i-1})) + n^{<}_{R, \pi}(v(h_{2i})) - n^{>}_{R, \pi}(v(h_{2i}))](v(h_{2i}) - v(h_{2i-1})) \\
&= [(r-i) - (i-1) + (i-1) - (r - i)](v(h_{2i}) - v(h_{2i-1})) \\
&=0
\end{align*}
\end{proof}

\thmbipartiteunequal*
\begin{proof}
This proof is very similar to that of Theorem \ref{thm:bipartite-equal}. Let the two parts of the bipartite graph be $L$ and $R$ respectively such that $r = |L| > |R| = s$. We refer to the properties in the theorem statement when $r - s$ is even and odd as the {\em optimal even} property and the {\em optimal odd} property respectively. 

Let $n^{>}_{L, \pi}(x)$ be the number of agents in $L$ who receive a value strictly greater than $x$ in $\pi$. We similarly define $n^{<}_{L, \pi}(x), n^{>}_{R, \pi}(x)$ and $n^{<}_{R, \pi}(x)$.

{\textbf{Case 1:} $r - s$ is even.}
We split the proof into two claims
\begin{claim}\label{claim:case-1-part-1}
Any optimal allocation allocates the first $m$ houses to agents in $L$.
\end{claim}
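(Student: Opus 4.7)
The plan is to mimic the consecutive-swap argument in the proof of Theorem \ref{thm:bipartite-equal}: argue by contradiction, locate the smallest structural defect near the bottom of the valuation interval, and perform a minimal swap that strictly decreases total envy. Suppose for contradiction that some optimal allocation $\pi^{\ast}$ fails the claim, so there is some $h_j$ with $j \le m$ assigned to $R$. Taking the smallest such $j$ guarantees that $h_1, \ldots, h_{j-1}$ all sit in $L$. Since $|L| = r = s + 2m$ while only $j - 1 \le m - 1$ houses of $L$ have been accounted for, there must be some $k > j$ with $h_k \in L$; pick the smallest such $k$. This pins down the local structure: $h_j, h_{j+1}, \ldots, h_{k-1}$ all sit in $R$, and $h_k$ sits in $L$.

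Next, I would form $\pi'$ from $\pi^{\ast}$ by swapping the houses held by the agents currently holding $h_{k-1}$ and $h_k$. Because $v(h_{k-1})$ and $v(h_k)$ are consecutive on the valuation interval, every other house value lies either strictly below $v(h_{k-1})$ or strictly above $v(h_k)$, which lets me reuse the identity from the proof of Theorem \ref{thm:bipartite-equal}: the change in total envy factors as
\[
\bigl[n^{<}_{L, \pi^{\ast}}(v(h_{k-1})) - n^{<}_{R, \pi^{\ast}}(v(h_{k-1})) + n^{>}_{R, \pi^{\ast}}(v(h_k)) - n^{>}_{L, \pi^{\ast}}(v(h_k))\bigr] \cdot (v(h_k) - v(h_{k-1})).
\]

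Reading the four counts directly off the structural setup gives $n^{<}_L = j - 1$, $n^{<}_R = k - j - 1$, $n^{>}_L = r - j$, and $n^{>}_R = s - (k - j)$, with the two swapped agents falling outside all four counts automatically (since $v(h_{k-1})$ is not strictly below itself and $v(h_k)$ is not strictly above itself). Substituting and using $r - s = 2m$, the bracketed expression collapses to $4j - 2k - 2m$, which is at most $2j - 2m - 2 \le -2$ because $j \le m$ and $k \ge j + 1$. Hence the swap strictly decreases total envy, contradicting the optimality of $\pi^{\ast}$.

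The main obstacle will be the bookkeeping of the four counts, together with verifying that the minimality of $j$ really forces all of $h_j, \ldots, h_{k-1}$ into $R$, so that the swap is a genuine $L$-$R$ exchange rather than an internal relabeling. I plan to address this by stating these invariants explicitly at the start of the swap analysis and reusing the envy-change identity from Theorem \ref{thm:bipartite-equal} verbatim, so that only the count-plugging step needs fresh work.
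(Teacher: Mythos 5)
Your proposal is correct and follows essentially the same route as the paper: locate the minimal violation, swap the two consecutive houses $h_{k-1} \in R$ and $h_k \in L$ at the boundary of the $R$-block, and show via the four-count identity that total envy strictly drops (your bracket $4j - 2k - 2m$ agrees exactly with the paper's $2k' - (r-s) + 2 - 2l$ after the change of indices $k' = j-1$, $l = k - j$). The only difference is notational.
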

\begin{proof}
Assume for contradiction that this is not true. That is, there is an optimal allocation $\pi$ such that:
\begin{align*}
    &\pi(h_j) \in L \text{ for all } j \in [k] \text{ for some } \frac{r - s}{2} > k \ge 0 \\
    &\pi(h_{k+j}) \in R \text{ for all } j \in [l] \text{ for some } l > 0 \\
    &\pi(h_{k+l+1}) \in L
\end{align*}
Let us swap $h_{k+l}$ and $h_{k+l+1}$ to obtain an allocation $\pi'$. We can now compare the aggregate envy of $\pi$ and $\pi'$ using arguments similar to that of Theorem \ref{thm:bipartite-equal}.

\begin{align*}
&\Envy(\pi', G) - \Envy(\pi, G) \\
&= [n^{<}_{L, \pi}(v(h_{k+l+1})) - n^{>}_{L, \pi}(v(h_{k+l+1}))](v(h_{k+l+1}) - v(h_{k+l})) \\
&\qquad \qquad + [n^{>}_{R, \pi}(v(h_{k+l})) - n^{<}_{R, \pi}(v(h_{k+l}))](v(h_{k+l+1}) - v(h_{k+l})) \\
&= [n^{<}_{L, \pi}(v(h_{k+l+1})) - n^{>}_{L, \pi}(v(h_{k+l+1})) + n^{>}_{R, \pi}(v(h_{k+l})) - n^{<}_{R, \pi}(v(h_{k+l}))](v(h_{k+l+1}) - v(h_{k+l})) \\
&= [k - (r - (k+1)) + (s - l) - (l-1)](v(h_{2i}) - v(h_{2i-1})) \\
&= 2k - (r -s) + 2 - 2l \\
&<0
\end{align*}
The last inequality follows from the fact that $l \ge 1$ and $k < (r - s)/2$. This contradicts the optimality of $\pi$.
\end{proof}
\begin{claim}\label{claim:case-1-part-2}
In any optimal allocation, for any $i \in [s]$, $h_{m+ 2i-1}$ and $h_{m+ 2i}$ cannot be allocated to the same part. 
\end{claim}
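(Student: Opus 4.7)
The plan is to mimic the proof strategy of Theorem \ref{thm:bipartite-equal} and Claim \ref{claim:case-1-part-1}, but with the key preliminary step of showing that the last $m$ houses also go to $L$. This latter fact follows by a symmetry argument: apply Claim \ref{claim:case-1-part-1} to the inverted valuation $v^{inv}$, and invoke Lemma \ref{lem:valuation-inverse} to conclude that the allocation obtained (optimal under $v^{inv}$) is also optimal under $v$. Under the reversed ordering of houses induced by $v^{inv}$, the first $m$ houses correspond to the last $m$ houses under $v$, so in every optimal allocation the last $m$ houses are allocated to $L$ as well.

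With both ends of the valuation interval fixed, the remaining $2s$ houses $h_{m+1},\ldots,h_{m+2s}$ are distributed among the $s$ untaken agents of $L$ and the $s$ agents of $R$. Suppose for contradiction that some optimal $\pi^\ast$ violates the claim, and let $i \in [s]$ be the smallest index for which $h_{m+2i-1}$ and $h_{m+2i}$ go to the same part. By the minimality of $i$, the houses $h_{m+1},\ldots,h_{m+2(i-1)}$ split evenly with $i-1$ going to each part. Assume without loss of generality that $h_{m+2i-1}, h_{m+2i}$ both go to $L$ (the other case reduces to this one via Lemma \ref{lem:valuation-inverse}). Since $R$ has only received $i-1$ of its eventual $s$ middle houses, there exists a smallest $j > m+2i$ with $m+2i < j \le m+2s$ such that $h_j$ is allocated to $R$; by the choice of $j$, every house $h_{m+2i+1},\ldots,h_{j-1}$ is allocated to $L$.

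I would then form $\pi'$ by swapping $h_{j-1}$ (held by some $u \in L$) with $h_j$ (held by some $w \in R$), and compute $\Envy(\pi', G) - \Envy(\pi^\ast, G)$ exactly as in Theorem \ref{thm:bipartite-equal} and Claim \ref{claim:case-1-part-1}. Writing $j = m + 2i + k$ with $k \ge 1$ and $x = v(h_j) - v(h_{j-1}) > 0$, the structural assumptions (first $m$ and last $m$ to $L$; $i-1$ of $h_{m+1},\ldots,h_{m+2i-2}$ on each side; $h_{m+2i-1},\ldots,h_{j-1}$ all to $L$) fully determine the counts needed for the telescoping envy calculation, and $r = s + 2m$ makes the two contributions collapse to
\[
\Envy(\pi', G) - \Envy(\pi^\ast, G) = -2k\,x < 0,
\]
contradicting optimality of $\pi^\ast$.

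Finally, to establish that all allocations satisfying the optimal-even property have equal envy, I would show, exactly as in the last part of Theorem \ref{thm:bipartite-equal}, that swapping $h_{m+2i-1}$ and $h_{m+2i}$ in any such allocation leaves the total envy unchanged; combined with the fact that permuting within either part is invariant, this proves uniqueness of the envy value. The main obstacle throughout is bookkeeping: correctly translating ``minimality of $i$'', the two extremal blocks, and the consecutive $L$-run between $h_{m+2i}$ and $h_j$ into clean counts of agents on each side above and below the swap values; once these counts are in hand, the algebra is routine and depends critically on the identity $r - s = 2m$.
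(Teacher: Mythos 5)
Your core argument is the same as the paper's: identify the least violating pair, locate the boundary where the consecutive run of $L$-houses ends, swap the two houses straddling that boundary, and show via the $n^{>}/n^{<}$ counting that the total envy strictly drops (your $-2k\,x$ with $k \ge 1$ agrees with the paper's $-(2k+2)x$ with $k \ge 0$ after re-indexing the run). Your preliminary step that the last $m$ houses also go to $L$ --- obtained by applying Claim \ref{claim:case-1-part-1} to $v^{inv}$ and invoking Lemma \ref{lem:valuation-inverse} --- is valid but unnecessary: the counting only needs the first-$m$ fact, the minimality of the violating index, and the total sizes $r$ and $s$, and the paper instead deduces the last-$m$ fact as a \emph{consequence} of the claim.

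The one genuine flaw is your WLOG step: you assert that the case where $h_{m+2i-1}$ and $h_{m+2i}$ both go to $R$ reduces to the both-in-$L$ case via Lemma \ref{lem:valuation-inverse}. It does not. Inverting the valuations reverses the order of the houses but changes neither the allocation nor the bipartition, so a pair allocated entirely to $R$ remains a pair allocated entirely to $R$ in the inverted instance (it simply becomes the pair of index $s+1-i$); if every violating pair lies in $R$, inversion leaves you in exactly the same situation. The asymmetry between the two cases stems from $|L| \neq |R|$, which inversion cannot undo --- this is precisely why the WLOG in Theorem \ref{thm:bipartite-equal} (where $|L|=|R|$) does not transfer. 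The fix is to run the parallel computation for the $R$ case: take the maximal run $h_{m+2i-1},\ldots,h_{m+2i+k'}$ in $R$, swap $h_{m+2i+k'}$ with the first subsequent $L$-house, and the same bookkeeping yields $-(2k'+2)x<0$. This is what the paper means by ``the proof for $R$ can be shown similarly.'' The rest of your proposal, including the equal-envy argument for all allocations satisfying the optimal even property, matches the paper.
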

\begin{proof}
Assume for contradiction that this is not true. Let $\pi$ be an optimal allocation that satisfies Claim \ref{claim:case-1-part-1} but not Claim \ref{claim:case-1-part-2}. Choose $j$ as the least $i$ such that $h_{m+ 2i-1}$ and $h_{m + 2i}$ are allocated to the same part. Assume they are both allocated to some agents in $L$. The proof for $R$ can be shown similarly. Let $\{h_{m+ 2j-1}, h_{m + 2j}, \dots, h_{m + 2j+k}\}$ be a set of houses allocated to agents in $L$ such that $h_{m + 2j+k+1}$ is allocated to some agent in $R$. Let $\pi'$ be the allocation that results from swapping $h_{m + 2j+k}$ and $h_{m + 2j+k+1}$ in $\pi$. We can compare the envy between $\pi'$ and $\pi$.
\begin{align*}
&\Envy(\pi', G) - \Envy(\pi, G) \\
&= [n^{>}_{L, \pi}(v(h_{m+2j+k})) - n^{<}_{L, \pi}(v(h_{m+2j+k}))](v(h_{m+2j+k+1}) - v(h_{m+2j+k})) \\
&\qquad \qquad + [n^{<}_{R, \pi}(v(h_{m+2j+k+1})) - n^{>}_{R, \pi}(v(h_{m+2j+k+1}))](v(h_{m+2j+k+1}) - v(h_{m+2j+k})) \\
&= [n^{>}_{L, \pi}(v(h_{m+2j+k})) - n^{<}_{L, \pi}(v(h_{m+2j+k})) + n^{<}_{R, \pi}(v(h_{m+2j+k+1})) \\
&\qquad \qquad - n^{>}_{R, \pi}(v(h_{m+2j+k+1}))] (v(h_{m+2j+k+1}) - v(h_{m+2j+k})) \\
&= [(r - ((r-s)/2 + k+2 + j-1)) - ((r-s)/2 + k+1+j-1) \\
&\qquad \qquad + (j-1) - (s - j)] (v(h_{m+2j+k+1}) - v(h_{m+2j+k})) \\
&=[2j - 2(k+j) - 2](v(h_{m+2j+k+1}) - v(h_{m+2j+k})) \\
&= [-2k - 2](v(h_{m+2j+k+1}) - v(h_{m+2j+k})) \\
&<0
\end{align*}
The final inequality holds since $k \ge 0$. Again, we contradict the optimality of $\pi$.
\end{proof}
Claim \ref{claim:case-1-part-2} also implies that none of the final $(r - s)/2$ houses are allocated to agents in $R$. We can therefore conclude that these houses must be allocated to agents in $L$ in any optimal allocation. 

To show that any allocation that satisfies the optimal even property has the same aggregate envy, we use a swapping based argument similar to Theorem \ref{thm:bipartite-equal}. Let $\pi$ be any allocation that satisfies the optimal even property. Pick an arbitrary $i \in [s]$ and let $\pi'$ be the allocation that results from swapping $h_{m + 2i - 1}$ and $h_{m + 2i}$ in $\pi$. Assume that $h_{m + 2i - 1}$ is allocated to $L$ in $\pi$. The proof for $R$ flows similarly. Let us compare the envy of the two allocations. 

\begin{align*}
&\Envy(\pi', G) - \Envy(\pi, G) \\
&= [n^{>}_{L, \pi}(v(h_{m+2i-1})) - n^{<}_{L, \pi}(v(h_{m + 2i-1}))](v(h_{m + 2i}) - v(h_{m + 2i-1})) \\
&\qquad \qquad + [n^{<}_{R, \pi}(v(h_{m + 2i})) - n^{>}_{R, \pi}(v(h_{m + 2i}))](v(h_{m + 2i}) - v(h_{m + 2i-1})) \\
&= [n^{>}_{L, \pi}(v(h_{m + 2i-1})) - n^{<}_{L, \pi}(v(h_{m + 2i-1})) + n^{<}_{R, \pi}(v(h_{m + 2i})) \\
&\qquad \qquad - n^{>}_{R, \pi}(v(h_{m + 2i}))](v(h_{m + 2i}) - v(h_{m + 2i-1})) \\
&= [(r- (i+m)) - (m + i-1) + (i-1) - (s - i)](v(h_{m + 2i}) - v(h_{m + 2i-1})) \\
&=0
\end{align*}

{\textbf{Case 2:} $r - s$ is odd}

This proof is, surprisingly, very similar to the previous case. We similarly split the proof into two claims. 

\begin{claim}\label{claim:case-2-part-1}
Any optimal allocation allocates the first $m$ houses to agents in $L$.
\end{claim}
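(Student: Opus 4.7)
The plan is to mirror the proof of Claim \ref{claim:case-1-part-1} almost verbatim, since the only structural change from the even case ($r - s = 2m$) to the odd case ($r - s = 2m + 1$) is that the threshold $(r-s)/2$ is now $m + 1/2$, which still leaves enough slack for the same swap argument to produce a strict decrease in envy.

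First I would assume, for contradiction, that there exists an optimal allocation $\pi$ in which the first $m$ houses are not all assigned to $L$. Because $|L| = r > s = |R|$, the existence of such an allocation is equivalent to the existence of an integer $k$ with $0 \le k < m$, and an integer $l \ge 1$, such that
\[
\pi(h_j) \in L \text{ for all } j \in [k], \quad \pi(h_{k+j}) \in R \text{ for all } j \in [l], \quad \pi(h_{k+l+1}) \in L.
\]
Then I would define the modified allocation $\pi'$ by swapping the houses $h_{k+l}$ and $h_{k+l+1}$, i.e., giving the $R$-agent who had $h_{k+l}$ under $\pi$ the house $h_{k+l+1}$, and the $L$-agent who had $h_{k+l+1}$ the house $h_{k+l}$.

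The next step is to compute $\Envy(\pi',G) - \Envy(\pi,G)$ using exactly the same accounting as in Case 1. Every cross-part pair except those involving the two swapped positions keeps its contribution unchanged, while each $L$-agent (resp.\ $R$-agent) holding a value larger or smaller than the swapped values contributes $\pm(v(h_{k+l+1}) - v(h_{k+l}))$ according to the usual pattern. Counting these contributions via $n^{<}_{L,\pi}$, $n^{>}_{L,\pi}$, $n^{<}_{R,\pi}$, $n^{>}_{R,\pi}$ and using the structure of $\pi$ (namely that $L$ contains $h_1, \ldots, h_k$ and $h_{k+l+1}$, while $R$ contains $h_{k+1}, \ldots, h_{k+l}$), the bracket evaluates to
\[
\bigl[k - (r - k - 1)\bigr] + \bigl[(s - l) - (l - 1)\bigr] \;=\; 2k + 2 - 2l - (r - s).
\]

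Finally, substituting $r - s = 2m + 1$ gives the coefficient $2k + 1 - 2m - 2l$, multiplied by the strictly positive quantity $v(h_{k+l+1}) - v(h_{k+l})$. Since $k < m$ and $l \ge 1$, we have $2k + 1 - 2m - 2l \le 2(m-1) + 1 - 2m - 2 = -3 < 0$, so $\Envy(\pi',G) < \Envy(\pi,G)$, contradicting the optimality of $\pi$. No step should present a real obstacle: the only thing to double-check is that the count of larger values in $L$ is indeed $r - k - 1$ (all positions in $L$ other than $h_{k+l+1}$ and the $k$ smaller ones), which is what makes the arithmetic differ from Case 1 by exactly the $+1$ coming from the odd parity of $r - s$, while still yielding a strict decrease.
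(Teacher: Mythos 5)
Your proposal is correct and follows essentially the same route as the paper, which handles this claim by declaring it ``exactly the same'' as the even case: the same adjacent swap of $h_{k+l}$ and $h_{k+l+1}$, the same counting of $n^{<}$ and $n^{>}$ terms yielding the coefficient $2k+2-2l-(r-s)$, and the same conclusion that it is strictly negative. Your explicit substitution $r-s=2m+1$ and the bound $2k+1-2m-2l\le -3$ just makes concrete the arithmetic the paper leaves implicit.
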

The proof to this claim is exactly the same as the proof to the Claim \ref{claim:case-1-part-1}.
So we move on to the second claim. 

\begin{claim}\label{claim:case-2-part-2}
In any optimal allocation, for any $i \in [s]$, $h_{m+ 2i-1}$ is allocated to some agent in $L$ and $h_{m+ 2i}$ is allocated to some agent in $R$. 
\end{claim}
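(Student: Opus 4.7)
The plan is to mirror the minimum-violator swap argument used in Claim~\ref{claim:case-1-part-2}, adapted to the odd-parity setting $r-s = 2m+1$. Assume for contradiction that $\pi$ is an optimal allocation satisfying Claim~\ref{claim:case-2-part-1} but violating the present claim, and let $j \in [s]$ be the smallest index at which the desired pattern ``$h_{m+2i-1}\in L$, $h_{m+2i}\in R$'' fails. By the minimality of $j$, among $h_1,\ldots,h_{m+2j-2}$ exactly $m+j-1$ houses lie in $L$ and $j-1$ lie in $R$, which pins down the four counts $n^{<}_{L,\pi}$, $n^{>}_{L,\pi}$, $n^{<}_{R,\pi}$, $n^{>}_{R,\pi}$ at any proposed swap position.

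The violation at index $j$ splits into two cases. In the first, both $h_{m+2j-1}$ and $h_{m+2j}$ lie in $L$, directly mirroring the hypothesis of Claim~\ref{claim:case-1-part-2}. I take $k\ge 0$ to be the largest integer such that $h_{m+2j-1},\ldots,h_{m+2j+k}$ all lie in $L$ and $h_{m+2j+k+1}\in R$; such $k$ exists since $R$ still has $s-j+1\ge 1$ unfilled slots. Swapping the last two and applying the envy-change formula from Claim~\ref{claim:case-1-part-2} with the counts above gives $\Envy(\pi',G)-\Envy(\pi,G)=(v(h_{m+2j+k+1})-v(h_{m+2j+k}))\cdot(-2k-1)<0$, contradicting optimality. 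Note the odd-valued coefficient, reflecting the parity shift from Claim~\ref{claim:case-1-part-2}.

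In the second case, $h_{m+2j-1}\in R$. Let $\ell\ge 1$ be the length of the maximal block $h_{m+2j-1},\ldots,h_{m+2j+\ell-2}$ lying in $R$, so $h_{m+2j+\ell-1}\in L$; such an $\ell$ exists since $L$ still has $s+m-j+2\ge 2$ unfilled slots past position $m+2j-2$. I swap $h_{m+2j+\ell-2}\in R$ with $h_{m+2j+\ell-1}\in L$, and the symmetric envy-change formula (obtained by interchanging the roles of $L$ and $R$ in Claim~\ref{claim:case-1-part-2}'s derivation) yields a change of $(v(h_{m+2j+\ell-1})-v(h_{m+2j+\ell-2}))\cdot(-2\ell+1)<0$ since $\ell\ge 1$. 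The edge case $\ell=1$ corresponds to simply swapping the adjacent pair $(h_{m+2j-1},h_{m+2j})$ when they are misplaced as $(R,L)$.

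The main obstacle is the careful bookkeeping of the four counts at each swap position and verifying that the algebraic cancellations produce strictly negative coefficients in both subcases; in particular, the fact that the coefficients $-2k-1$ and $-2\ell+1$ are odd (as opposed to the even $-2k-2$ appearing in Claim~\ref{claim:case-1-part-2}) is precisely the arithmetic manifestation of the $+1$ in $r-s=2m+1$, and is what allows the minimum-violator argument to trigger even when only one of the two houses in a pair is misplaced. Once the structural pattern is established, the theorem's remaining claim that the last $m+1$ houses also go to $L$ is immediate by pigeonhole from Claim~\ref{claim:case-2-part-1} together with the present one, and the equal-envy conclusion reduces to the standard observation that the total envy in $K_{r,s}$ depends only on the partition of house values between the two sides (which is uniquely determined in the odd case).
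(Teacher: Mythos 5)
Your proposal is correct and follows essentially the same route as the paper: take the least violating index $j$, identify the maximal same-side block starting there, perform the adjacent swap at its boundary, and show the envy change has the odd coefficient $-2k-1$ forced by $r-s=2m+1$. The only difference is that you explicitly work out the subcase $h_{m+2j-1}\in R$ (your coefficient $-2\ell+1$, which checks out), whereas the paper dismisses it as ``very similar'' and only computes the case where both houses land in $L$.
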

\begin{proof}
This proof is again very similar to Claim \ref{claim:case-1-part-2}. However, there are some subtle differences.

Assume for contradiction that the claim is not true. Let $\pi$ be an optimal allocation that satisfies Claim \ref{claim:case-2-part-1} but not Claim \ref{claim:case-2-part-2}. Choose $j$ as the least $i$ where the claim is violated. That is, either $h_{m + 2j-1}$ is allocated to $R$ or $h_{m + 2j}$ is allocated to $L$. In this proof, we assume the latter has occured. The proof for the former is very similar. In other words, both $h_{m+ 2j-1}$ and $h_{m+ 2i}$ are allocated to some agents in $L$. Let $h_{m+ 2j-1}, h_{m + 2j}, \dots, h_{m + 2j+k}$ be a set of houses allocated to agents in $L$ such that $h_{m + 2j+k+1}$ is allocated to some agent in $R$. Let $\pi'$ be the allocation that results from swapping $h_{m + 2j+k}$ and $h_{m + 2j+k+1}$. We can compare the envy between $\pi'$ and $\pi$.
\begin{align*}
&\Envy(\pi', G) - \Envy(\pi, G) \\
&= [n^{>}_{L, \pi}(v(h_{m +2j+k})) - n^{<}_{L, \pi}(v(h_{m+2j+k}))](v(h_{m+2j+k+1}) - v(h_{m+2j+k})) \\
&\qquad \qquad + [n^{<}_{R, \pi}(v(h_{m+2j+k+1})) - n^{>}_{R, \pi}(v(h_{m+2j+k+1}))](v(h_{m+2j+k+1}) - v(h_{m+2j+k})) \\
&= [n^{>}_{L, \pi}(v(h_{m+2j+k})) - n^{<}_{L, \pi}(v(h_{m+2j+k})) + n^{<}_{R, \pi}(v(h_{m+2j+k+1})) \\
&\qquad \qquad - n^{>}_{R, \pi}(v(h_{m+2j+k+1}))](v(h_{m+2j+k+1}) - v(h_{m+2j+k})) \\
&= [(r - (m + k+2 + j-1)) - (m + k+1+j-1) \\
&\qquad \qquad + (j-1) - (s - j)] (v(h_{m+2j+k+1}) - v(h_{m+2j+k})) \\
&=[2j - 2(k+j) - 1](v(h_{m+2j+k+1}) - v(h_{m+2j+k})) \\
&= [-2k - 1](v(h_{m+2j+k+1}) - v(h_{m+2j+k})) \\
&<0
\end{align*}
The final inequality holds since $k \ge 0$. The optimality of $\pi$ has been contradicted. 
\end{proof}

Claim \ref{claim:case-2-part-2} also implies that none of the final $m+1$ houses are allocated to agents in $R$. We can therefore conclude that these houses must be allocated to agents in $L$ in any optimal allocation. 

Note that the optimal odd property specifies exactly which houses must be allocated to $L$ and $R$ in any optimal allocation. Any two allocations which satisfy the optimal odd property can only differ over which agents in $L$ and $R$ houses are allocated to and not which houses are allocated to $L$ and $R$. It is easy to see that this difference cannot lead to a difference in envy in the case of the complete bipartite graph.
\end{proof}

\corbipartite*
\begin{proof}
For clarity, we refer to the larger part of the bipartite graph as $L$ and the smaller part as $R$.

In this proof, we are trying to count the number of allocations which allocate a different set of houses to $L$ (and therefore, $R$ as well). There are of course, $r!$ allocations given a set of houses to allocate to agents in $L$ but we ignore this factor. 

When $r - s$ is even, there are $s$ different choices we can make. That is, for each $i \in [s]$, we can choose which of $h_{m + 2i - 1}$ and $h_{m + 2i}$ goes to $L$ and which one goes to $R$ (Theorem \ref{thm:bipartite-unequal}). This gives us $2^{s}$ different allocations.

When $r - s$ is odd, there is no choice since Theorem \ref{thm:bipartite-unequal} shows that only one specific set of houses allocated to $L$ achieves the optimal envy. Therefore, not counting permutations over allocations to the same part, there is only one unique allocation. 
\end{proof}

\lemvaluationinverse*
\begin{proof}
For any edge $(i, j)$ in the graph $G$ and and any allocation $\pi$, we have
\begin{align*}
    |v(\pi(i)) - v(\pi(j))| = |(-v(\pi(i))) - (-v(\pi(j)))| = |v^{inv}(\pi(i)) - v^{inv}(\pi(j))| 
\end{align*}
\end{proof}

\lemlocalmedianimprovement*
\begin{proof}
\begin{figure}[ht]
    \begin{subfigure}[b]{0.45\textwidth}
        \centering
        \begin{tikzpicture}[,mycirc/.style={circle,fill=white, draw = black,minimum size=0.75cm,inner sep = 3pt}]
        \node[mycirc] {$r$} 
            child {node[mycirc] {$y$}
                child[solid] {node[mycirc] {$x_m$}  
                    child {node[mycirc] {$x_{m-1}$}
                        child {node[mycirc] {$\ldots$}
                            child {node[mycirc] {$x_{1}$}
                                child {node[mycirc] {$u$} edge from parent [dashed]}
                                child {node[mycirc] {$v$} edge from parent [dashed]}}
                            child {node[mycirc] {$\ldots$}}}
                        child {node[mycirc] {$z_{m-2}$}}}
                    child {node[mycirc] {$z_{m-1}$}}}
                child[solid] {node[mycirc] {$z_m$}} edge from parent [dashed]};
        \end{tikzpicture}
        \caption{Before the swap (allocation $\pi$)}
        \label{fig:node-before-swap}
    \end{subfigure}
    \begin{subfigure}[b]{0.45\textwidth}
        \centering
        \begin{tikzpicture}[,mycirc/.style={circle,fill=white, draw = black,minimum size=0.75cm,inner sep = 3pt}]
        \node[mycirc] {$r$} 
            child {node[mycirc] {$x_m$}
                child[solid] {node[mycirc] {$x_{m-1}$}  
                    child {node[mycirc] {$x_{m-2}$}
                        child {node[mycirc] {$\ldots$}
                            child {node[mycirc] {$y$}
                                child {node[mycirc] {$u$} edge from parent [dashed]}
                                child {node[mycirc] {$v$} edge from parent [dashed]}}
                            child {node[mycirc] {$\ldots$}}}
                        child {node[mycirc] {$z_{m-2}$}}}
                    child {node[mycirc] {$z_{m-1}$}}}
                child[solid] {node[mycirc] {$z_m$}} edge from parent [dashed]};
        \end{tikzpicture}
        \caption{After the swap (allocation $\pi'$).}
        \label{fig:node-after-swap}
    \end{subfigure}
    \caption{Cyclic swap to show the local median property holds (Lemma \ref{lem:local-median-improvement}). Solid edges are guaranteed to exist. Dashed edges may or may not exist.
    }
    \label{fig:local-median-swap}
\end{figure}
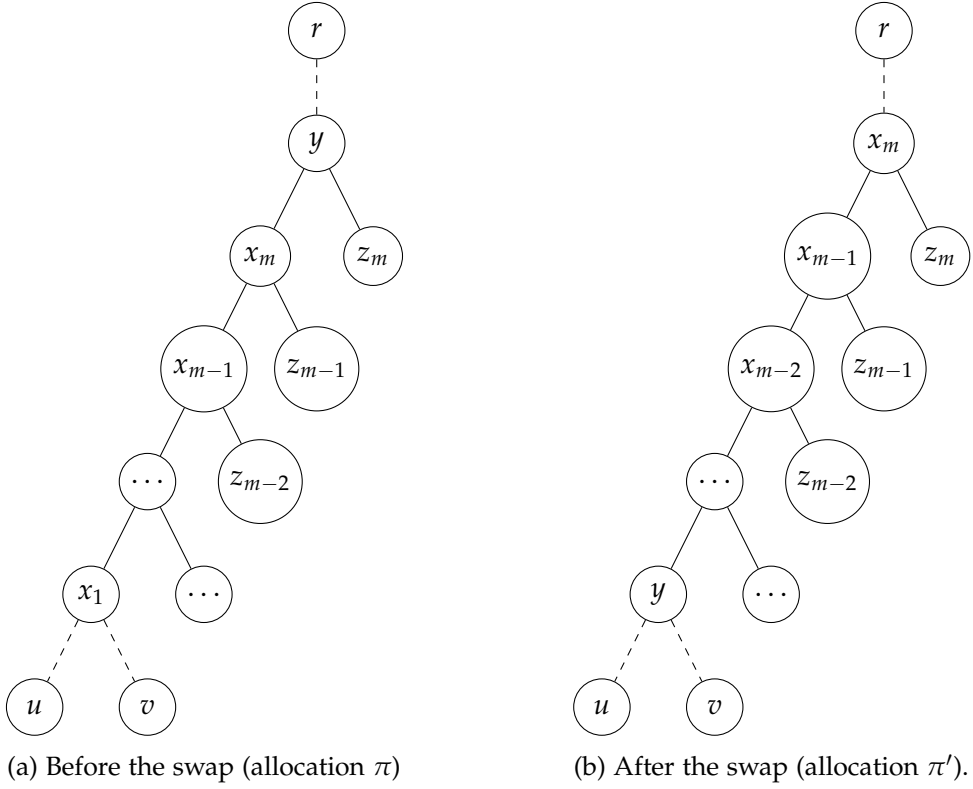

Let the node $i$ have value $y$ under $\pi$ and its children have values $x_m$ and $z_m$ respectively under $\pi$. By assumption, either $y < \min\{x_m, z_m\}$ or $y > \max\{x_m, z_m\}$. We show that in either case, the lemma holds. Since allocations are bijective and the values can be assumed to be distinct, we will refer to tree nodes using the value allocated to them in $\pi$.

{\textbf{Case 1: }($y < \min\{x_m, z_m\}$).} Assume WLOG that $x_m < z_m$. We construct a path recursively as follows. Initialize the path as $(y)$. If the final node on the path either has no children or has at least one child with allocated value lower than $y$, then stop. Otherwise, pick the least valued child of the final node on the path and append it to the path. This gives us a path $(y, x_m, x_{m-1}, \dots, x_1)$ for some nodes with value $x_m, x_{m-1}, \dots, x_1$ in $T$. Note that by definition, this path has at least $2$ vertices, i.e., $m \ge 1$. We construct a new allocation $\pi'$ from $\pi$ by cyclically transferring houses as follows: we give the agent with value $y$ the house with value $x_m$, we give the agent with value $x_m$ the house with value $x_{m-1}$ and so on till finally we give the agent with value $x_1$ the house with value $y$. This has been described in Figure \ref{fig:local-median-swap}.

The solid edges and dashed edges in Figure \ref{fig:local-median-swap} cover all possible edges $e$ in $T$ where $\envy_{\pi}(e) \ne \envy_{\pi'}(e)$.
From our path construction and our assumption that $i$ is the furthest node from the root which does not satisfy the local median property, we have the following two properties: 
\begin{inparaenum}[(a)]
    \item $\max\{x_j, z_j\} > x_{j+1} > \min\{x_{j}, z_j\}$ for all $j \in [m-1]$, and 
    \item $x_j < z_j$ for all $j \in [m]$.
\end{inparaenum}
These two properties allow us to compare the envy along the solid edges:
\begin{align*}
    \envy_{\pi}({\rm solid}) &= (x_m - y) + (z_m - y) + \left [\sum_{j = 2}^{m} ((x_{j} - x_{j-1}) + (z_{j-1} - x_j)) \right ] \\
    &= \left [\sum_{j = 1}^{m-1} (z_j - x_{j}) \right ] + (z_m - y) + (x_m - y). \\
    \envy_{\pi'}({\rm solid}) &= (x_1 - y) + (z_1 - x_1) + \left [\sum_{j = 2}^m ((x_{j} - x_{j-1}) + (z_j - x_j)) \right ] \\
    &= \left [\sum_{j = 1}^{m} (z_j - x_{j}) \right ] + (x_m-x_1) + (x_1 - y).
\end{align*}

Combining the two values, we get
\begin{align*}
    \envy_{\pi'}({\rm solid}) - \envy_{\pi}({\rm solid}) = y - x_m.
\end{align*}

To compute the difference in envy along the dashed lines, some straightforward casework is required. There are many different possible relations between $u$, $v$, $x_1$, and $y$, and between $r$, $y$, and $x_m$.
All possible cases and their corresponding results are summarized in Table \ref{tab:local-median-case-work}. 
There are two assumptions made in Table \ref{tab:local-median-case-work}. First, WLOG we assume $u < v$. Second, $u < y$, since this is the termination condition from our path construction. 

\begin{table}[ht]
    \centering
    \begin{tabular}{|c|c|c|c|c|}
        \hline
        Cases & $r$ does not exist & $r < y < x_m$ & $y < r < x_m$ & $y < x_m < r$ \\
        \hline
        $u$ and $v$ do not exist & $0$ & $(x_m - y)$ & $< (x_m - y)$ & $(y - x_m)$ \\
        $u < y < v < x_1$ & $< 0$ & $ < (x_m - y)$ & $< (x_m - y)$ & $ < (y - x_m)$ \\
        $u < y < x_1 < v$ & $0$ & $(x_m - y)$ & $< (x_m - y)$ & $(y - x_m)$ \\
        $u < v < y < x_1$ & $< 0$ & $ < (x_m - y)$ & $< (x_m - y)$ & $ < (y - x_m)$ \\
        \hline
    \end{tabular}
    \caption{Cases for the possible values of $\envy_{\pi'}({\rm dashed}) - \envy_{\pi}({\rm dashed})$}
    \label{tab:local-median-case-work}
\end{table}

If $y$ is the root of the tree and $r$ does not exist, from Table \ref{tab:local-median-case-work}, we get: 
\begin{align*}
    \Envy&(\pi', T) - \Envy(\pi, T) \\
    &= \envy_{\pi'}({\rm solid}) - \envy_{\pi}({\rm solid}) + \envy_{\pi'}({\rm dashed}) - \envy_{\pi}({\rm dashed}) \\
    &\le (y - x_m) + 0 \\
    &< 0.
\end{align*}
Therefore, the total envy of $\pi'$ is strictly less than that of $\pi$. Note that this implies that, even if $y$ has a parent, the total envy along the subtree rooted at $y$ (denoted by $T'$) strictly reduces.

Coming to the case where $r$ exists, let us study the envy of any tree $T''$ that is not contained by $T'$. Either $T''$ contains $T'$, or $T''$ and $T'$ are disjoint. If they are disjoint, then $\Envy(\pi, T'') = \Envy(\pi', T'')$, since the allocation on the subtree $T''$ is the same in $\pi$ and $\pi'$. If $T''$ strictly contains $T'$, $T''$ must contain the node $r$. From Table \ref{tab:local-median-case-work}, we get:
\begin{align*}
    \Envy&(\pi', T'') - \Envy(\pi, T'') \\
    &= \envy_{\pi'}({\rm solid}) - \envy_{\pi}({\rm solid}) + \envy_{\pi'}({\rm dashed}) - \envy_{\pi}({\rm dashed}) \\
    &\le (y - x_m) + (x_m - y) \\
    &= 0.
\end{align*}
Therefore the total envy weakly decreases and we are done.

{\textbf{Case 2: }($y > \max\{x_m, z_m\}$).} 
This implies $- y < \min\{-x_m, -z_m\}$.
We can therefore apply Case 1 to the allocation $\pi$  under the inverted valuations $v^{\rm inv}$. 
It follows that, with respect to $v^{\rm inv}$, there is a an allocation $\pi'$ which has a strictly lower total envy along the subtree $T'$ rooted at $i$ and a weakly lower total envy along any subtree $T''$ that is not contained by $T'$. 
Applying Lemma \ref{lem:valuation-inverse} with the allocations $\pi'$ and $\pi$, we get the required result.
\end{proof}

\section{Proofs from Section \ref{sec:disconnected}}\label{apdx:disconnected}

\propinseparable*
\begin{proof}
The graph given in Figure \ref{fig:degreeseparated} is an inseparable forest. 

Assume that the clusters along the valuation interval are sufficiently packed (each within a subinterval of length $\varepsilon$), and furthermore, they are equispaced along the entire valuation interval, and WLOG assume the entire valuation interval has length $1$. 
Note that the allocation that places the induced stars of the given graph in the corresponding clusters along the valuation interval attains a total envy of at most $4/3 + 0.001$ (assuming $\varepsilon$ is small enough). Let the four vertices of degree $2$ or more be $x_1, x_2, x_3, x_4$, where $x_i$ is incident to exactly $s_i$ degree-$1$ vertices. Let us also number the clusters along the valuation interval $1, 2, 3, 4$ from left to right.

We first claim that in any optimal allocation, $x_i$ cannot be in cluster $j$ for $j < i$. Otherwise, at least three of the $s_i$ neighbors of $x_i$ must lie in other clusters, so one of the three large subintervals must be counted three or more times. It is easy to see that then the envy exceeds $5/3$. We next claim that $x_i$ and $x_j$ cannot be in the same cluster, for $i \neq j$. Otherwise, again, at least three edges pass over the same large subinterval, and so the envy exceeds $5/3$ again.

It follows that $x_i$ must belong to the $i$th cluster, for all $i$. The result follows immediately.
\end{proof}

\propseparablenotstrong*
\begin{proof}
The graph $G$ given in Figure \ref{fig:spannerpoint} is a separable forest that is not strongly separable.

It is trivial that $G$ is separable, as one component is a single vertex that cannot be split on the valuation interval.

Consider the lower valuation interval that is depicted in Figure \ref{fig:spannerpoint}. Assume that the clusters along the valuation interval are sufficiently packed (each within a subinterval of length $\varepsilon$), and furthermore, the sole valuation in the middle is exactly at the center of the interval. WLOG assume the entire valuation interval has length $1$. Note that the allocation that places the induced stars of $G$ in the clusters attains a total envy of at most $1 + 0.001$ (assuming $\varepsilon$ is small enough).

We first claim that an optimal allocation cannot place both the degree-$3$ vertices in the same cluster. In such an allocation, one of the two large subintervals needs to be covered by at least two edges, and so the total envy is at least $3/2$.

We next claim that an optimal allocation cannot place a degree-$3$ vertex in the center. If it does, then again by a similar casework as in the previous paragraph, one large subinterval has to be covered by at least two edges, and so the total envy is at least $3/2$.

Our result follows immediately.
\end{proof}

\cordegreeonealgorithm*
\begin{proof}
If there are $k'$ paths of length $1$ and $m'$ isolated vertices, then suppose $\varphi(k', m', \ell)$ denotes the optimal envy using $k'$ edges and $m'$ isolated vertices on the house set $\{h_1, \ldots, h_\ell\}$. Using Theorem \ref{thm:disjointpaths}, we have the recursion
\begin{equation*}
    \varphi(k', m', \ell) = \min(\varphi(k' - 1, m', \ell - 2) + v(h_{\ell}) - v(h_{\ell - 1}), \varphi(k', m' - 1, \ell - 1)).
\end{equation*}
Dynamically solving this yields an $O(n^3)$ algorithm to find $\varphi(k, 2n - k, n)$.
\end{proof}

\thmdisjointpaths*
\begin{proof}
By Theorem \ref{thm:path}, we know that each of the paths should have its allocated houses in sorted order. Now, suppose there are values $h_k \prec h_\ell \prec h_m$, with $h_k$ and $h_m$ being allocated to $P_{n_i}$, and $h_\ell$ to a different path $P_{n_j}$. 
%
            %
            %
%
%
We can reallocate the houses only on these two paths and strictly improve the allocation. For instance, suppose $H_i := \pi(P_{n_i})$ and $H_j := \pi(P_{n_j})$. We can now allocate the $n_i$ lowest-valued houses in $H_i \cup H_j$ to $P_{n_i}$ and the $n_j$ highest-valued houses in $H_i \cup H_j$ to $P_{n_j}$, keeping the rest of the allocation the same. This leads to an allocation with strictly lower envy than before, and this concludes the proof.
\end{proof}

\cordisjointpathsalgorithm*
\begin{proof}

The result for $r = 1$ is trivial, as by Theorem \ref{thm:disjointpaths}, there is only a unique allocation. For $r > 1$, we can test all $r!$ orders of the component paths $P_{n_1}, \dots P_{n_r}$. For each order $G_1, \dots G_r$, we assign the first $|G_1|$ houses to $G_1$, the next $|G_2|$ houses to $G_2$, etc.
\end{proof}

\begin{proposition}
\label{prop:distinctpathlengths}
Let $G$ be a disjoint union of paths. If $t$ is the number of different path lengths in $G$, then we can find an optimal allocation on $G$ for any instance in time $O(n^{t+1})$.
\end{proposition}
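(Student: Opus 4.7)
The plan is to combine the structural result of Theorem~\ref{thm:disjointpaths} with a dynamic program indexed by the composition of path lengths used so far. By Theorem~\ref{thm:disjointpaths}, $G$ is strongly separable and within each path the allocated houses appear in sorted order, so an optimal allocation is determined by an ordering of the $r$ component paths along the valuation interval; the $i$-th path in this ordering receives the corresponding contiguous block of sorted houses, and by Theorem~\ref{thm:path} its contribution to the total envy equals the difference between the largest and smallest values in that block. Moreover, swapping two paths of the same length in the ordering does not change the envy, so only the multiset of path lengths at each position matters.

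Let $\ell_1 < \ell_2 < \cdots < \ell_t$ be the distinct path lengths and let $m_j$ be the multiplicity of $\ell_j$ in $\{n_1, \ldots, n_r\}$, so $\sum_{j} m_j \ell_j = n$. I would define
\[
f(c_1, c_2, \ldots, c_t) \;=\; \text{minimum total envy of an allocation of the first } S(c) \text{ houses},
\]
where $S(c) := \sum_{j=1}^t c_j \ell_j$, subject to having used exactly $c_j$ paths of length $\ell_j$. The recurrence is
\[
f(c_1, \ldots, c_t) \;=\; \min_{j \,:\, c_j > 0} \Big\{\, f(c_1, \ldots, c_j-1, \ldots, c_t) \;+\; v(h_{S(c)}) - v(h_{S(c) - \ell_j + 1}) \,\Big\},
\]
with base case $f(0,\ldots,0) = 0$. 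The quantity returned is $f(m_1, \ldots, m_t)$, which by the structural characterization equals the minimum envy.

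For the running time, the number of states is $\prod_{j=1}^{t}(m_j + 1)$. Since $\sum_{j} m_j \le n$, this product is at most $(n+1)^t = O(n^t)$. Each state has $O(t)$ incoming transitions, and each transition is evaluated in $O(1)$ time once we have precomputed $S(c)$ incrementally and stored the sorted values $v(h_1), \ldots, v(h_n)$. Thus the total running time is $O(n^t \cdot t) = O(n^{t+1})$, using the trivial bound $t \le n$.

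The only subtle point is justifying that no additional state information is needed: I would emphasize that by Theorem~\ref{thm:disjointpaths} an optimal allocation assigns a contiguous block of sorted values to each component, so once the multiset of path lengths placed to the left of position $S(c)$ is fixed, the induced sub-instance on houses $\{h_{S(c)+1}, \ldots, h_n\}$ with the remaining paths is independent; this gives the optimal-substructure property that makes the DP correct. There is no real obstacle here beyond being careful that swapping equal-length paths preserves cost, which is immediate since the envy contribution of a contiguous block depends only on its length and its position.
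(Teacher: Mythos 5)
Your proof is correct and takes essentially the same approach as the paper: a dynamic program over the counts of each distinct path length used so far, justified by the strong separability and sorted-order structure of Theorem~\ref{thm:disjointpaths}, with each contiguous block contributing the difference of its extreme values. The only cosmetic difference is that the paper carries the prefix length $\ell$ as an explicit (redundant) index of the DP state, whereas you recover it from the counts via $S(c)$.
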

\begin{proof}
The result for $t = 1$ is trivial. For $t > 1$, if the distinct path lengths are $n_1, \ldots, n_t$, then suppose $\varphi(r_1, \ldots, r_t, \ell)$ denotes the optimal envy using $r_i$ paths of length $n_i$, for $i = 1, \ldots, t$, on the house set $\{h_1, \ldots, h_\ell\}$. Using Theorem \ref{thm:disjointpaths}, we have the recursion
\begin{align*}
     \varphi(r_1, \ldots, r_t, \ell) &= \min\{\varphi(r_1 - 1, r_2, \ldots, r_t, \ell - n_1) + (v(h_\ell) - v(h_{\ell - n_1 + 1})), \\ & \qquad \ldots, \varphi(r_1, \ldots, r_{t_1}, r_t - 1, \ell - n_t) + (v(h_\ell) - v(h_{\ell - n_t + 1}))\}.
\end{align*}
Dynamically solving this yields an $O(n^{r + 1})$ algorithm to find the optimal allocation on the given instance.
\end{proof}

\thmdisjointstars*
\begin{proof}
We ``separate'' any two stars while improving on our objective. Consider two stars $K_{1, n_1}$ and $K_{1, n_2}$. Let $\pi$ be any optimal allocation that allocates the values $a_1, \dots, a_{n_1 + 1}$ to $K_{1, n_1}$ and $b_1, \dots, b_{n_2 + 1}$ to $K_{1, n_2}$. 

We provide a simple two step procedure that creates a new allocation $\pi'$ that allocates contiguous intervals to both stars and attains total envy at most that of $\pi$. In the first step, we simply re-arrange the values allocated to each star to ensure they satisfy the characterization for an optimal envy allocation from Theorem \ref{thm:star}. In the second step, assuming WLOG the center of $K_{1, n_1}$ has a lower value than that of $K_{1, n_2}$, we re-arrange the values allocated to the spokes of both stars by allocating the least $n_1$ values to $K_{1, n_1}$ and the greatest $n_2$ values to $K_{1, n_2}$; crucially, we do not change the value allocated to the center of either star. It is easy to see that neither of these steps can increase the total envy.

It is also easy to see that, if the stars are not allocated contiguous intervals, the above two step procedure changes the allocation and strictly reduces the envy. This shows that not allocating contiguous intervals to each star is sub-optimal.
\end{proof}

\thmdisjointequicliques*
\begin{proof}
We prove the result for the case of two cliques $K_{n/2} + K_{n/2}$. The result for $r$ cliques follows by showing that each pair of cliques must be separated from each other.

Let $(V, E)$ and $(V', E')$ be the set of vertices and edges of each copy of $K_{n/2}$. Let $\tau: V \mapsto V'$ be any bijective mapping from $V$ to $V'$.

Let $\pi$ be any allocation on $K_{n/2}+K_{n/2}$, we show that if $\pi$ does not allocate contiguous intervals to each component, we can create a better allocation $\pi'$. 

Let $a_1 < a_2 < \dots a_{n/2}$ be the values allocated to the nodes in $V$ and $b_1 < b_2 < \dots b_{n/2}$ be the values allocated to the nodes in $V'$ in some optimal allocation $\pi$. We rearrange the goods allocated to $V'$ such that if node $v \in V$ receives $a_i$, then node $\tau(v)$ receives $b_{n/2 - i}$. This does not change the total envy of the allocation.

If each component is not allocated a contiguous interval, the least-valued $n/2$ houses must have some $a$ values and some $b$ values. Let's call the least-valued $n/2$ houses $H'$ and let's say there are $k$ $a_i$'s in $H'$. Therefore $H'$ contains $a_1, a_2, \dots, a_k$ and $b_1, b_2, \dots, b_{n/2-k}$. 

We create a new allocation $\pi'$ from $\pi$ as follows. For all $i \in [k]$, we swap $a_i$ with $b_{n/2-i}$. Note that for each house among the least-valued $n/2$ houses, if $a_i$ is allocated to $v \in V$, we swap the houses given to $v$ and $\tau(v)$.

Let us now compute the change in envy between $\pi'$ and $\pi$. We do this by showing that, for every edge $(u, v) \in E$, the total sum of the envies along the edges $(u, v)$ and $(\tau(u), \tau(v))$ decreases. 

\textbf{Case 1: $u$ and $v$ are unaffected by the swap.} Then $\tau(u)$ and $\tau(v)$ are unaffected as well. Therefore the total envy along these two edges does not change.

\textbf{Case 2: $u$ and $v$ are both affected by the swap.} Then, $\envy_{\pi'}(u,v) = \envy_{\pi}(\tau(u),\tau(v))$ and $\envy_{\pi}(u,v) = \envy_{\pi'}(\tau(u),\tau(v))$. Therefore, the total envy along these two edges does not change.

\textbf{Case 3: Only $u$ is affected by the swap.} This means $\tau(v)$ is not affected by the swap. The total envy along these two edges under $\pi$ is 

\begin{align*}
  \envy_{\pi}(u,v) + \envy_{\pi}(\tau(u),\tau(v)) = (a_j - a_i) + (b_{n/2-i} - b_{n/2-j})  
\end{align*}
where $j > k \ge i$. This can be re-written as
\begin{align*}
  \envy_{\pi}(u,v) + \envy_{\pi}(\tau(u),\tau(v)) &= 
  2\min\{a_j, b_{n/2 -i}\} + |a_j - b_{n/2-i}| \\
  &\qquad\qquad - 2\min\{a_i, b_{n/2 -j}\} - |a_i - b_{n/2-j}|
\end{align*}

The total envy along these two edges under $\pi'$ is
\begin{align*}
  \envy_{\pi'}(u,v) + \envy_{\pi'}(\tau(u),\tau(v)) = |a_j - b_{n/2-i}| + |a_i - b_{n/2-j}|  
\end{align*}
The change in envy is 
\begin{align*}
   2\min\{a_i, b_{n/2 -j}\} -  2\min\{a_j, b_{n/2 -i}\} < 0
\end{align*}
The inequality holds since $j > k \ge i$.

When $k \ge 1$, at least one edge belongs to Case 3 and so the total envy of $\pi'$ is strictly less than the total envy of $\pi$.
\end{proof}

\cordisjointequicliquesalgorithm*
\begin{proof}
As demonstrated in Theorem \ref{thm:disjointequicliques}, we just need to assign the first $\frac{n}{r}$ houses to clique $1$, the next $\frac{n}{r}$ houses to clique $2$, and so on. This can be done in linear time.
\end{proof}

\thmdisjointcliques*
\begin{proof}
Let $\pi$ be any minimum envy allocation. Assume for contradiction that there exist two cliques ($K$ and $K'$ w.l.o.g.) such that $|K| > |K'|$ and $K$ does not receive a contiguous set of valuations with respect to the houses in $K \cup K'$. The case where $|K| = |K'|$ has been shown in Theorem \ref{thm:disjointequicliques}. Let the houses in $K \cup K'$ have values $\{a_1, a_2, \dots, a_{|K \cup K'|}\}$ such that $a_1 < a_2 < \dots < a_{|K \cup K'|}$. Since each house has a unique value, we refer to houses using their values for the rest of this proof.

By our assumptions, the houses allocated to $K$ must be split. Therefore there must be some houses that are better than the houses allocated to some nodes in $K$ and worse than houses allocated to other nodes in $K$. This can be formalized as follows
\begin{align*}
    &\pi(a_j) \in K' \text{ for all } j \in [\ell] \text{ and some } \ell \ge 0 \\
    &\pi(a_{l+j}) \in K \text{ for all } j \in [m] \text{ and some } m > 0 \\
    &\pi(a_{l+ m + j}) \in K' \text{ for all } j \in [k] \text{ and some } k > 0 \\
    &\pi(a_{l+m+k+1}) \in K
\end{align*}

We also define the following useful notation: for any clique $K$, we refer to $n_{K, \pi}^{<}(x)$ as the number of agents in $K$ who receive a value strictly less than $x$ under $\pi$. We similarly define $n_{K, \pi}^{>}(x)$.

Construct the allocation $\pi'$ starting at $\pi$ and swapping the houses $a_{l+m+k}$ and $a_{l+m+k+1}$ are swapped. For any node in $C_1$ whose value is less than $a_{l+m+k+1}$ under $\pi$, their envy decreases by $a_{l+m+k+1} - a_{l+m+k}$ in $\pi'$. For any node in $K$ whose value is greater than $a_{l+m+k+1}$ under $\pi$, their envy decreases by $a_{l+m+k+1} - a_{l+m+k}$ in $\pi'$. We can show something similar for $K'$. This gives us the total change in envy as
\begin{align*}
    &\Envy(\pi', G) - \Envy(\pi, G) \\
    &= \Envy(\pi', K \cup K') - \Envy(\pi, K \cup K') \\
    &= [n^{<}_{K', \pi}(a_{l+m+k}) - n^{>}_{K', \pi}(a_{l+m+k})](a_{l+m+k+1} - a_{l+m+k}) \\
    &\qquad \qquad + [n^{>}_{K, \pi}(a_{l+m+k+1}) - n^{<}_{K, \pi}(a_{l+m+l+1})](a_{l+m+k+1} - a_{l+m+k}) \\
    &= [n^{<}_{K', \pi}(a_{l+m+k}) - n^{>}_{K', \pi}(a_{l+m+k}) + n^{>}_{K, \pi}(a_{l+m+k+1}) - n^{<}_{K, \pi}(a_{l+m+l+1})](a_{l+m+k+1} - a_{l+m+k}) \\
    &= [(l+k-1) - (|K'| - l - k)  + (|K| - (m+1)) - m](a_{l+m+k+1} - a_{l+m+k}) \\
    &= [|K| - |K'| + 2(l+k) - 2m - 2](a_{l+m+k+1} - a_{l+m+k})
\end{align*}
Note that due to the optimality of $\pi$, we must have $\Envy(\pi', G) - \Envy(\pi, G) \ge 0$. Since $a_{l+m+k+1} - a_{l+m+k} > 0$ by construction, this implies $|K| - |K'| + 2(l+k) - 2m - 2 \ge 0$. Removing the $-2$, we get $|K| - |K'| + 2(l+k) - 2m  > 0$. This gives us the following observation.

\begin{observation}\label{obs:disconnected-cliques}
$|K'| - |K| - 2(l+k) + 2m  < 0$
\end{observation}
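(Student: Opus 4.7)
The plan is to derive the observation as a direct consequence of the local swap already described in the surrounding argument, combined with the optimality of $\pi$. Concretely, I would form $\pi'$ from $\pi$ by exchanging the houses with values $a_{l+m+k}$ (currently in $K'$) and $a_{l+m+k+1}$ (currently in $K$), keeping the rest of the allocation unchanged. Since only these two vertices have had their houses altered, the envy change is localised to edges incident to them within $K$ and $K'$ respectively; all other edges, and all edges between different connected components, are unaffected.

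Next I would carry out the accounting of $\Envy(\pi', G) - \Envy(\pi, G)$. Inside $K'$, every agent whose value under $\pi$ is less than $a_{l+m+k}$ sees its envy toward the swapped vertex grow by exactly $a_{l+m+k+1} - a_{l+m+k}$, while every agent whose value exceeds $a_{l+m+k}$ sees its envy shrink by the same amount; an analogous statement holds in $K$ for the threshold $a_{l+m+k+1}$. Using the convenient notation $n^{<}_{K,\pi}(x)$ and $n^{>}_{K,\pi}(x)$, this gives
\begin{align*}
\Envy(\pi', G) - \Envy(\pi, G)
&= \bigl[n^{<}_{K',\pi}(a_{l+m+k}) - n^{>}_{K',\pi}(a_{l+m+k}) \\
&\qquad + n^{>}_{K,\pi}(a_{l+m+k+1}) - n^{<}_{K,\pi}(a_{l+m+k+1})\bigr] \\
&\qquad \cdot (a_{l+m+k+1} - a_{l+m+k}).
\end{align*}
The counts come straight from the indexing scheme of the setup: in $K'$ there are $l+k-1$ agents strictly below $a_{l+m+k}$ and $|K'| - (l+k)$ strictly above it; in $K$ there are $m$ agents strictly below $a_{l+m+k+1}$ and $|K| - (m+1)$ strictly above it. Substituting and simplifying collapses the bracket to $|K| - |K'| + 2(l+k) - 2m - 2$.

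Finally, optimality of $\pi$ forces $\Envy(\pi', G) - \Envy(\pi, G) \geq 0$, and since $a_{l+m+k+1} - a_{l+m+k} > 0$ by the distinctness-of-values convention, we conclude $|K| - |K'| + 2(l+k) - 2m - 2 \geq 0$, hence $|K| - |K'| + 2(l+k) - 2m > 0$, which is exactly the claimed inequality $|K'| - |K| - 2(l+k) + 2m < 0$.

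I do not anticipate any real obstacle here, since the observation is essentially a bookkeeping step in the larger contradiction argument. The only point requiring care is the counting of agents above and below the swapped values in each clique; these counts rely crucially on the indexing established at the start of the proof (namely that $a_1, \ldots, a_l$ and $a_{l+m+1}, \ldots, a_{l+m+k}$ sit in $K'$, while $a_{l+1}, \ldots, a_{l+m}$ and $a_{l+m+k+1}$ sit in $K$). Once those counts are pinned down correctly, the remaining steps are pure algebra plus one appeal to the optimality of $\pi$.
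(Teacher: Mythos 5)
Your proposal is correct and matches the paper's own argument essentially verbatim: the same adjacent swap of $a_{l+m+k}$ and $a_{l+m+k+1}$, the same counting of agents above and below the swapped values in each clique, and the same appeal to optimality of $\pi$ to force the bracket $|K| - |K'| + 2(l+k) - 2m - 2 \geq 0$, from which the strict inequality follows by dropping the $-2$.
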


Construct another allocation $\pi''$ as follows: start at $\pi$ and for every $j \in [\min\{m, k\}]$, swap $a_{l+m + 1 - j}$ with $a_{l+m+j}$. In each swap, we swap one house in $K$ with one house in $K'$. Using a similar argument, we can compare the total envy of $\pi''$ and $\pi$.

\begin{align*}
    &\Envy(\pi'', G) - \Envy(\pi, G) \\
    &= \Envy(\pi'', K \cup K') - \Envy(\pi, K \cup K') \notag\\
    &= [n^{<}_{K, \pi}(a_{l+m+1 - \min\{m, k\}}) - n^{>}_{K, \pi}(a_{l+m}) + n^{>}_{K', \pi}(a_{l+m+\min\{m, k\}}) \\
    &\qquad \qquad - n^{<}_{K', \pi}(a_{l+m+1})][\sum_{j \in [\min\{m, k\}]}(a_{l+m+j} - a_{l+m+1-j})] \notag\\
    &= [(m-\min\{m, k\}) - (|K| - m)  \\
    &\qquad \qquad + (|K'| - (l+\min\{k,m\})) - l][\sum_{j \in [\min\{m, k\}]}(a_{l+m+j} - a_{l+m+1-j})] \notag\\
    &= [|K'| - |K| + 2m - 2(\min\{m, k\} + l)][\sum_{j \in [\min\{m, k\}]}(a_{l+m+j} - a_{l+m+1-j})]
\end{align*}\label{eq:disconnected-cliques}
Note that the second term is always strictly positive since $a_{l+m+j} > a_{l+m+1-j}$ for all $j \in \min\{m,k\}$. If we show that the first term $|K'| - |K| + 2m - 2(\min\{m, k\} + l)$ is negative, we contradict the optimality of $\pi$. We have two possible cases.

{\textbf{Case 1: $k \le m$.}} 
In this case, \eqref{eq:disconnected-cliques} reduces to 
\begin{align*}
    \Envy(\pi'', G) - \Envy(\pi, G) = [|K'| - |K| + 2m - 2(k + l)][\sum_{j \in [\min\{m, k\}]}(a_{l+m+j} - a_{l+m+1-j})]
\end{align*}
From Observation \ref{obs:disconnected-cliques}, the first term is negative.

{\textbf{Case 2: $k > m$.}} 
In this case, \eqref{eq:disconnected-cliques} reduces to 
\begin{align*}
    \Envy(\pi'', G) - \Envy(\pi, G) &= [|K'| - |K| + 2m - 2(m + l)][\sum_{j \in [\min\{m, k\}]}(a_{l+m+j} - a_{l+m+1-j})] \\
    &=  [|K'| - |K| - 2l)][\sum_{j \in [\min\{m, k\}]}(a_{l+m+j} - a_{l+m+1-j})]
\end{align*}
Since $|K| > |K'|$ and $l \ge 0$, the first term is negative.

To conclude, it cannot be the case that the houses in $K$ are split.
\end{proof}

\cordisjointcliquesalgorithm*
\begin{proof}
We sort the cliques in a non-increasing order of their size to get $r$ cliques $K^1, K^2, \dots, K^r$ such that $|K^1| \ge |K^2| \ge \dots |K^r|$. 
From Theorem \ref{thm:disjointcliques}, we know that $K^1$ receives a contiguous set of values in the optimal allocation, subject to which, $K^2$ must receive a contiguous set of values among the remaining houses, and so on. 

This gives us a recursive procedure where we try out all possible contiguous sets of values of size $|K^1|$ to give to $K^1$ and subject to that, we try out all possible contiguous sets of values to give to $K^2$ and so on. We then choose the minimum envy allocation among all the allocations computed this way. From Theorem \ref{thm:disjointcliques}, we can conclude that this allocation is optimal.

The pseudocode is presented in Algorithm \ref{algo:disconnected-cliques}. The algorithm maintains a {\em partial} allocation $\pi$ and updates it using recursive calls. 
\begin{algorithm}
    \caption{Minimum Envy House Allocation on Cliques}
    \label{algo:disconnected-cliques}
    \begin{algorithmic}
        \Procedure{FindMinEnvy}{$\{K^i\}_{i \in [r]}, \pi, v, H$}
            \If{$r = 1$}
                \State Update $\pi$ by allocating the houses in $H$ arbitrarily to agents in $K^1$
                \State $\envy = FindEnvy(\pi, v, G)$
                \State \Return $\envy$, $\pi$
            \Else 
                \State $\pi^{\ast} \gets \emptyset, \envy^{\ast} \gets \infty$
                \For{every $|K^1|$ sized contiguous set of values $S$}
                    \State Update $\pi$ by allocating the houses in $S$ arbitrarily to agents in $K^1$
                    \State $\envy^S, \pi^S =$ \Call{FindMinEnvy}{$\{K^{i+1}\}_{i \in [r-1]}, \pi, v, H \setminus S$}
                    \If{$\envy^S < \envy^{\ast}$}
                        \State $\pi^{\ast} \gets \pi^S$
                        \State $\envy^{\ast} \gets \envy^S$
                    \EndIf
                \EndFor
            \State \Return $\envy^{\ast}$, $\pi^{\ast}$
            \EndIf
            
        \EndProcedure
    \end{algorithmic}
\end{algorithm}

To analyze the time complexity, note that we compute at most $O(n^r)$ allocations. For each allocation, finding the envy of the allocation takes $O(n^2)$ time trivially. This gives us a total time complexity of $O(n^{r+2})$.
\end{proof}

\end{document}